\documentclass[runningheads]{llncs}

\usepackage{amsthm}
\usepackage{csquotes}

\usepackage{ifthen}
\usepackage{pifont}
\newboolean{long}
\newcommand{\AppendixSymbol}{\ding{72}}
\setboolean{long}{true} %
\ifthenelse{\boolean{long}}{
	\NewDocumentEnvironment{prooflater}{m}{\begin{proof}}{\end{proof}\ignorespacesafterend}
	\NewDocumentEnvironment{proofsketch}{o +b}{}{\ignorespacesafterend}
	\newcommand{\restateref}[1]{}
	\NewDocumentEnvironment{statelater}{m}{}{}

	\NewDocumentCommand{\onlyShort}{+m}{}
	\NewDocumentCommand{\onlyLong}{+m}{#1}
	\NewDocumentCommand{\shortLong}{+m +m}{#2}
}{
	\NewDocumentEnvironment{prooflater}{m +b}{%
		\expandafter\global\expandafter\def\csname#1\endcsname{\begin{proof}#2\end{proof}}%
	}{\ignorespacesafterend}
	\NewDocumentEnvironment{proofsketch}{O{Proof sketch}}{\begin{proof}[#1]}{\end{proof}\ignorespacesafterend}
	\usepackage{apptools}

	\newcommand{\restateref}[1]{[\IfAppendix{\AppendixSymbol{}}{\AppendixSymbol{}}]}
	
	\NewDocumentEnvironment{statelater}{m +b}{%
		\expandafter\global\expandafter\def\csname#1\endcsname{#2}%
	}{\ignorespacesafterend}

	\NewDocumentCommand{\onlyShort}{+m}{#1}
	\NewDocumentCommand{\onlyLong}{+m}{}
	\NewDocumentCommand{\shortLong}{+m +m}{#1}
}
\usepackage[T1]{fontenc}
\usepackage{graphicx}
\graphicspath{{./graphics/}}

\usepackage{complexity}
\usepackage{xspace}
\usepackage{amssymb}
\usepackage{amsmath}
\usepackage{dsfont}
\usepackage{thmtools,thm-restate}
\usepackage{todonotes}
\usepackage{mathtools}
\usepackage{cite}
\usepackage{hyperref}
\usepackage{lineno}
\usepackage{mdframed}
\usepackage{color}

\usepackage[noabbrev,capitalize]{cleveref}

\newcommand{\probname}[1]{{\normalfont\textsc{#1}}}
\newcommand{\probdef}[3]{%
    \begin{mdframed}
		\probname{#1}
		\begin{description}
            \item[Given:] #2
            \item[Question:] #3
        \end{description}
	\end{mdframed}
}
\newcommand{\realizability}{\probname{Planar Linkage Realizability}\xspace}
\newcommand{\realizabilityShort}{\probname{PLR}\xspace}

\newcommand{\ullRealizability}{\probname{Unit-Length Linear Linkage Realizability}\xspace}

\newcommand{\instance}{\ensuremath{\mathcal{I}}\xspace}
\newcommand{\instanceLong}{\ensuremath{(\linkage=(G,\ell), P, V', \delta)}\xspace}

\newcommand{\linkage}{\ensuremath{\mathcal{L}}\xspace}
\newcommand{\configuration}{\ensuremath{\Gamma}\xspace}
\newcommand{\Unit}{\ensuremath{\mathds{1}}}

\newcommand{\Ball}[2]{\ensuremath{#1_{#2}}}
\newcommand{\EpsBall}[1]{\ensuremath{#1_{\varepsilon}}}
\newcommand{\EpsBallDirection}[2]{\ensuremath{\EpsBall{#1}^{#2}}}
\newcommand{\EpsBallTop}[1]{\EpsBallDirection{#1}{\text{T}}}
\newcommand{\EpsBallRight}[1]{\EpsBallDirection{#1}{\text{R}}}
\newcommand{\EpsBallBottom}[1]{\EpsBallDirection{#1}{\text{B}}}
\newcommand{\EpsBallLeft}[1]{\EpsBallDirection{#1}{\text{L}}}

\newcommand{\Quadrant}[2]{\ensuremath{Q_{\uppercase\expandafter{\romannumeral#1\relax}}(#2)}}
\newcommand{\QuadrantI}[1]{\Quadrant{1}{#1}}
\newcommand{\QuadrantII}[1]{\Quadrant{2}{#1}}
\newcommand{\QuadrantIII}[1]{\Quadrant{3}{#1}}
\newcommand{\QuadrantIV}[1]{\Quadrant{4}{#1}}

\definecolor{PositiveColor}{RGB}{3,173,56}
\definecolor{NegativeColor}{RGB}{215,25,28}
\definecolor{VariableRectangleColor}{rgb}{0.886, 0.702, 0.839} 
\NewDocumentCommand{\Entrance}{s O{black} O{0.5} m}{%
	\ensuremath{{}^{\rightarrow}\begin{tikzpicture}
			\fill[fill opacity=#3, #2] (0,0) -- (1.5ex,0) arc (0:\IfBooleanT{#1}{-}90:1.5ex) -- cycle;
		\end{tikzpicture}^{}_{#4}}%
}

\NewDocumentCommand{\Exit}{s O{black} O{0.5} m}{%
	\ensuremath{\begin{tikzpicture}
			\fill[fill opacity=#3, #2] (0,0) -- (1.5ex,0) arc (0:\IfBooleanT{#1}{-}90:1.5ex) -- cycle;
		\end{tikzpicture}^{\rightarrow}_{#4}}%
}
\NewDocumentCommand{\Start}{s O{black} O{0.5} m}{%
	\ensuremath{{}^{\rightarrow}\begin{tikzpicture}
			\draw[opacity=#3, #2] (0,0) -- (1.5ex,0) arc (0:\IfBooleanT{#1}{-}90:1.5ex) -- cycle;
			\fill[fill opacity=#3, #2] (0,0) -- (1ex,0) -- (0,\IfBooleanT{#1}{-}1ex) -- cycle;
		\end{tikzpicture}^{}_{#4}}%
}

\NewDocumentCommand{\End}{s O{black} O{0.5} m}{%
	\ensuremath{\begin{tikzpicture}
			\draw[opacity=#3, #2] (0,0) -- (1.5ex,0) arc (0:\IfBooleanT{#1}{-}90:1.5ex) -- cycle;
			\fill[fill opacity=#3, #2] (0,0) -- (1ex,0) -- (0,\IfBooleanT{#1}{-}1ex) -- cycle;
		\end{tikzpicture}^{\rightarrow}_{#4}}%
}

\NewDocumentCommand{\ExampleArea}{O{black} O{0.5}}{%
	\ensuremath{\begin{tikzpicture}
			\fill[fill opacity=#2, #1] (0,1.5ex) rectangle (1.5ex,0);
	\end{tikzpicture}}%
}
\newcommand{\PositiveEntrance}[1]{\Entrance%
	{#1}}
\newcommand{\PositiveExit}[1]{\Exit%
	{#1}}
\newcommand{\NegativeEntrance}[1]{\Entrance%
	{#1}}
\newcommand{\NegativeExit}[1]{\Exit%
	{#1}}
\newcommand{\PositiveStart}[1]{\Start%
	{#1}}
\newcommand{\PositiveEnd}[1]{\End%
	{#1}}
\newcommand{\NegativeStart}[1]{\Start%
	{#1}}
\newcommand{\NegativeEnd}[1]{\End%
	{#1}}

\newcommand{\VariableEntrance}[1]{%
	\ensuremath{{}^{\rightarrow}\begin{tikzpicture}
			\fill[fill opacity=0.5, black] (0,1.5ex) rectangle (0.75ex,0);
			\fill[fill opacity=0.5, black] (0.75ex,0) -- (0.75ex,1.5ex) arc (90:-90:0.75ex) -- cycle;  \end{tikzpicture}^{}_{#1}}%
}
\newcommand{\VariableExit}[1]{%
	\ensuremath{{}^{}\begin{tikzpicture}
			\fill[fill opacity=0.5, black] (0,1.5ex) rectangle (0.75ex,0);
			\fill[fill opacity=0.5, black] (0.75ex,0) -- (0.75ex,1.5ex) arc (90:-90:0.75ex) -- cycle;  \end{tikzpicture}^{\rightarrow}_{#1}}%
}

\newcommand{\Size}[1]{\ensuremath{\left\vert #1 \right\vert}}
\newcommand{\BigO}[1]{\ensuremath{\mathcal{O}(#1)}}
\newcommand{\ExistsR}{\ensuremath{\exists\mathbb{R}}}
\newcommand{\ETR}{\ExistsR}
\let\oldrestatable\restatable
\def\restatable{\expandafter\oldrestatable}

\makeatletter
\renewcommand{\@Opargbegintheorem}[4]{%
	#4\trivlist\item[\hskip\labelsep{#3#2\@thmcounterend}]}
\makeatother

\spnewtheorem{problem*}{Problem}{\bfseries}{\itshape}
\spnewtheorem{observation}{Observation}{\bfseries}{\itshape}
\spnewtheorem{assumption}{Assumption}{\bfseries}{\itshape}

\Crefname{observation}{Observation}{Observations}
\Crefname{assumption}{Assumption}{Assumptions}

\begin{document}
\title{Realizing Planar Linkages in Polygonal Domains}
\author{Thomas Depian \inst{1}%
\and
Carolina Haase \inst{2}%
\and
Martin Nöllenburg\inst{1}%
\and Andr{\'e} Schulz\inst{3}%
}

\authorrunning{T. Depian, C. Haase, M. Nöllenburg, A. Schulz}
\institute{Algorithms and Complexity Group, TU Wien, Austria\\
\email{\{tdepian,noellenburg\}@ac.tuwien.ac.at} \and
Trier University, Germany\\
\email{haasec@uni-trier.de} \and
FernUniversität in Hagen, Germany\\
\email{andre.schulz@fernuni-hagen.de}}
\maketitle              %
\begin{abstract}

A linkage $\linkage$ consists of a graph $G=(V,E)$ and an edge-length function~$\ell$.
Deciding whether \linkage can be realized as a planar straight-line embedding in $\mathbb R^2$ with edge length $\ell(e)$ for all $e \in E$ is {%
$\ExistsR$}-complete [Abel et al., JoCG'25], even if $\ell \equiv 1$, but a considerable part of $\linkage$ is rigid.

In this paper, we study the computational complexity of the realization question for structurally simpler, less rigid linkages inside an open polygonal domain $P$, where the placement of some vertices may be specified in the input.
We show \XP-membership and \W[1]-hardness with respect to the size of $G$, even if $\ell \equiv 1$ and no vertex positions are prescribed. %
Furthermore, we consider the case where $G$ is a path with prescribed start and end position and $\ell\equiv 1$.
Despite the absence of any rigid components, we obtain \NP-hardness in general, and provide a linear-time algorithm for arbitrary~$\ell$ if $G$ has only three edges and $P$ is convex.

\keywords{Linkages \and Polygonal Domain \and Computational Complexity \and Parameterized Complexity \and Existential Theory of the Reals}
\end{abstract}

\section{Introduction}

In this paper, we study \emph{linkages} $\linkage = (G, \ell)$, which are graphs $G = (V,E)$  where every edge $e \in E$ has a predetermined length $\ell(e)$.
Linkages can be used to model mechanical constructions, e.g., robot arms consisting of metal bars connected at joints, %
or folding proteins~\cite{DO.GFA.2007}, and their study is well-established in computational geometry and mechanical
engineering: famous results such as Kempe's universality theorem~\cite{Kem.GMd.1875,KM.Utc.2002} 
date back to the 1870s. By now, there exists a plethora of work on linkages~\cite{HJW.MP2.1984,ADD+.WNC.2016}.
A realization of a linkage (an embedding of the vertices $V$ such that edge lengths given by $\ell$ are realized) is called a \emph{configuration}. A configuration in the plane is called \emph{planar} if no two edges cross. In this work, we focus  on planar configurations. Planarity is often a requirement given by the application. For example,
bars of a mechanical framework might not have the option to overlap and hence have to stay disjoint.
We refer to the book by Demaine and O'Rourke~\cite{DO.GFA.2007} for an overview of existing results and applications; both for planar but also for non-planar configurations.

\realizability (\realizabilityShort for short) is the problem that asks whether a linkage has a planar realization,  i.e., deciding whether a linkage \linkage admits a planar configuration.
It is known to be complete for the \emph{Existential Theory of  the Reals} ({%
\ExistsR}) (see also~\cite{ADD+.WNC.2016,ADD+.Wnc.2025}), even if \linkage is \emph{unit-length}, i.e., all edge lengths are the same (also known as \emph{matchstick graphs}).
Known hardness constructions rely on many rigid components, whose embedding is  fixed up to rigid motions~\cite{EW.Fel.1990,CDR.PEG.2007,CDR.PEG.2004,ADD+.Wnc.2025}.
This raises the question whether these hardness results carry over to more \enquote{flexible} linkages.
The related \emph{reconfiguration} problem, i.e., deciding whether there exists a continuous, edge-length-preserving, and sometimes also planar, motion from one configuration into another is \PSPACE-hard~\cite{CDR.SPA.2003,Kan.Msl.1992,WP.RCE.1996,BDD+.nrt.2002,AKRW.cu.2003}. This holds in many variants, but 
notably also if $\mathcal{L}$ is a \emph{linear linkage}
that is, $G$ is a path~\cite{HJW.MP2.1984,JP.CRM.1985} in the presence of obstacles.
The problem remains \NP-hard for obstacles made up of four axis-aligned segments~\cite{HJW.MRA.1985}.
Note that the above-mentioned reductions require non-unit edge lengths.

Motivated by the lack of analogous results for \realizabilityShort, we investigate in this paper the realizability question for structurally simple linkages~\linkage in the presence of obstacles, modeled via an open polygonal domain $P$ into which we must embed~$\linkage$ in a planar way.
This natural variant of the realization problem has, to the best of our knowledge, not been studied before.
{%
Note that related problems, e.g., finding a planar straight-line drawing of a graph inside a polygonal region with some prescribed vertex positions~\cite{LMM.CDG.2018,LMM.CDG.2022}, do not have edge-length constraints.
}

\subsubsection{Contributions.}
We show in \Cref{sec:xp-membership} that \realizabilityShort is in \XP\footnote{We assume familiarity with basic concepts of \emph{parameterized complexity}%
~\cite{Cygan2015}.} with respect to the size of the graph (\Cref{thm:xp})
by providing an \ETR-formulation of our problem, which are known to be efficiently solvable for a bounded number of variables~\cite{GJ.SSP.1988}.
While \realizabilityShort is {%
	 \ExistsR}-complete even without polygonal domain~$P$ and, therefore, it is no surprise that it can be expressed as an \ETR-formula, the following \Cref{sec:w1-hardness} underlines that this upper bound can be considered ``best possible''.
More formally, we show that \realizabilityShort is \W[1]-hard when parameterized by the size of $G$ even for unit-length linkages (\Cref{thm:w1}).
Apart from ``justifying'' \Cref{thm:xp}, this also shows that no structural parameter of $G$, not even the number of rigid components, can lead to an \FPT-algorithm for \realizabilityShort.

{%
We then focus in \Cref{sec:linear-linkages,sec:three-edge-linkages} on linear linkages, i.e., where $G$ is a path. %
If the linkage is also unit-length, i.e., $\ell \equiv 1$, then \realizabilityShort is equivalent to finding a placement for a single edge  as we can place the entire linkage arbitrarily close to it. 
However, if we fix the position of the endpoints of the path, the problem is less trivial.
Towards understanding the complexity of this special case of \realizabilityShort, we show, on the one hand, that it is \NP-hard in general (\Cref{thm:hardness}).
On the other hand, if the polygonal domain $P$ is convex and $G$ has three edges, then we can solve \realizabilityShort in linear time, even for arbitrary $\ell$ (\Cref{thm:linear-three-edges}).
The algorithm exploits the observation that three-edge linear linkages have only one degree of freedom, which allows us to examine a linear number of candidate configurations.
\Cref{thm:hardness,thm:linear-three-edges} together suggest that the complexity of \realizabilityShort in this surprisingly challenging setting arises from an interplay between the degree of freedom in $\linkage$ and the obstacles in $P$ and we see these two results as a first step towards closing this complexity gap.%
}

\section{Preliminaries}
\label{sec:preliminaries}
A \emph{linkage} $\linkage = (G, \ell)$ consists of a simple, undirected, and connected graph $G$ with vertex set $V(G)$ and edge set $E(G)$ and a function $\ell\colon E(G)\to \mathbb{R}$ assigning each edge $e \in E(G)$ a positive \emph{length} $\ell(e)$.
The linkage \linkage is \emph{linear} if $G$ is a path and \emph{unit-length} if %
$\ell \equiv 1$, %
abbreviated as $(G, \Unit)$.
A \emph{configuration} $\configuration$ of a linkage~$\linkage$ is a planar (i.e., crossing-free) straight-line drawing of $G$, %
such that we have $\Vert \configuration(u) - \configuration(v) \Vert_2 = \ell(uv)$ for every $uv \in E(G)$; we say $\configuration$ \emph{respects}~$\ell$.
Let $P$ be an open polygonal domain, $P \subset \mathbb{R}^2$, i.e., a multiply connected region of~$\mathbb{R}^2$~\cite{Mit.GSP.2000}.
We say that \configuration \emph{lives} inside $P$ if $\configuration \subset P$ holds. %
In this paper, we allow the placement of some vertices $V' \subseteq V(G)$ to be \emph{constrained} by a function $\delta$ and ask for the existence of a realization of $\linkage$ that \emph{adheres} to $\delta$ within a polygonal domain~$P$:
\probdef{\realizability~(\realizabilityShort)}{A linkage $\linkage = (G, \ell)$, an open polygonal domain $P \subset \mathbb{R}^2$, and a mapping $\delta\colon V'\to P$ for some $V' \subseteq V(G)$.}{Does there exist a planar configuration $\configuration$ of \linkage that lives inside~$P$ such that $\configuration(v) = \delta(v)$ for every $v \in V'$?}

Let $\instance=\instanceLong$ be an instance of \realizabilityShort.
We denote with $n_G \coloneqq \Size{V(G)}$ and $n_P \coloneqq \Size{V(P)}$ the number of vertices of $G$ and $P$, respectively and with $\Size{\instance}$ the size of the instance \instance.
We assume the \emph{Real RAM} model of computation for our algorithmic results; see also~\cite{EvdHM.SGN.2024,PS.CGI.1985}.
{%
We consider only rational edge lengths $\ell(\cdot)$ and polygonal domains $P$ with rational vertex coordinates.
This way, input numbers can be encoded with fixed precision (and bounded bit-complexity), as required by our hardness reductions.
}

\section{\textsc{Linkage Realizability} is in \textsf{XP}}
\label{sec:xp-membership}
In this section, we show that \realizabilityShort is in \XP\ with respect to the number of vertices $n_G = \Size{V(G)}$, even if $\linkage$ is not unit-length; \Cref{sec:w1-hardness} provides a complementing \W[1]-hardness result.
We devise an Existential Theory of the Reals (\ETR) formulation $\varphi$ using only $2n_G$ variables that expresses \realizabilityShort.
It is known that checking whether $\varphi$ is feasible is in \XP\ with respect to its number of variables~\cite{GJ.SSP.1988}.
Recall that deciding if a linkage admits a configuration is {\ExistsR}-complete~\cite{ADD+.Wnc.2025}.
Therefore, it is {%
not hard to see} that we can express the corresponding problem as an \ETR-formula.
\shortLong{
However, to the best of our knowledge, the literature does not contain an explicit formula for \realizabilityShort or a related problem that does not rely on some general position assumption that we cannot make.
{%
Since the running time of our \XP-algorithm depends on the number of variables, (in)equalities and their degree in the \ETR-formula, we sketch below an explicit formula~$\varphi(\instance)$ for an instance \instance of \realizabilityShort; details can be found in the full version~\cite{ARXIV}.
}

\subsubsection*{A Sketch of $\boldsymbol{\varphi(\instance)}$.}
We introduce two existentially-quantified variables $x_v$ and~$y_v$ for every vertex $v \in V(G)$ to encode the position of the vertex $v$ in a configuration \configuration, i.e., $\configuration(v) = (x_v, y_v)$.
Moreover, $\varphi(\instance)$ contains four subformulas, ensuring that (1) \configuration adheres to $\delta$, (2) \configuration respects the length constraint~$\ell$, (3)~\configuration is planar, and (4) \configuration lives inside $P$, respectively.
Subformulas~(1) and (2) can be realized by forcing the value of some variables and by encoding the definition of the (squared) Euclidean distance, respectively.
For~(3), we can use the sign of the determinant 
$\Delta(p,q,r) = 
\begin{vsmallmatrix}
x_p & y_p & 1\\
x_q & y_q & 1\\
x_r & y_r & 1 
\end{vsmallmatrix}$ 
to check if the point~$r$ is to the left, to the right, or on the line $\overline{pq}$ through the points $p$ and~$q$~\cite{dBCvKO.CGA.2008}:
If $p$, $q$ are on different sides of $\overline{uv}$ and $u$, $v$ are on different sides of $\overline{pq}$, then the edges $uv, pq \in E(G)$ cross.
If three vertices are co-linear, the existence of a crossing depends on the concrete position of the vertices.
Therefore, we test in this case if the $x$- and $y$-coordinate of a vertex, say $p$, is between those of $u$ and $v$, respectively.
The edges cross if this holds for at least one vertex.
For~(4), we first use $\Delta(\cdot)$ to force one vertex to lie inside $P$ (which is already the case if there exists a vertex $v \in V'$).
Afterwards, we ensure that the combined drawing of $G$ and $P$ is planar.
}{
However, to the best of our knowledge, the literature does not contain an explicit formula for \realizabilityShort or a related problem that does not rely on some general position assumption that we cannot make.
Since the running time of our \XP-algorithm depends on the number of variables, (in)equalities and their degree in the \ETR-formula, we provide in the following an explicit formula.

}

\begin{statelater}{etrformula}
Let $\instance = \instanceLong$ be an instance of \realizabilityShort.
The \ETR\ formula $\varphi(\instance)$ contains two variables $x_v$ and $y_v$ for every vertex $v \in V$.
These variables encode the position of the vertex $v$ in a (hypothetical) configuration \configuration, i.e., $\configuration(v) = (x_v, y_v)$.
Let $V(G)= \{v_1, \ldots, v_{n_G}\}$, we set
\begin{align*}
	\varphi(\instance) = \exists x_{v_1},y_{v_1},\ldots,x_{n_G},y_{n_G}\colon\bigwedge_{i=1,2,3,4}\varphi_i(\instance),
\end{align*}
where each subformula $\varphi_i(\instance)$ is responsible for ensuring a specific property of \configuration, i.e., that (1) \configuration adheres to $\delta$, (2) \configuration respects the length constraint $\ell$, (3) \configuration is planar, and (4) \configuration lives inside $P$, respectively.
In the following, we discuss each subformula in greater detail.

\subsubsection*{The Subformulas of $\boldsymbol{\varphi(\instance)}$.}
The first subformula can be defined as follows:
\begin{align*}
	\varphi_1(\instance) = \bigwedge_{v \in V'} (x_v = x(\delta(v)) \land y_v = y(\delta(v)))
\end{align*}

For the second subformula, we use the definition of the Euclidean Distance:
\begin{align*}
	\varphi_2(\instance) = \bigwedge_{uv \in E} ((x_v-x_u)(x_v - x_u) + (y_v - y_u)(y_v-y_u) = \ell(uv)\cdot\ell(uv))
\end{align*}

With the third subformula, $\varphi_3(\instance)$, we aim to ensure planarity.
To this end, we use the well-known characterization of crossings that exploits the sign of a carefully constructed determinant~\cite{dBCvKO.CGA.2008}.
More formally, let $u,v,p\in V$ be three vertices.
We let $\Delta(u,v,p)$ denote the determinant 
\begin{align*}
	\begin{vmatrix}
		x_u & y_u & 1\\
		x_v & y_v & 1\\
		x_p & y_p & 1
	\end{vmatrix},
\end{align*}
which is positive, zero, negative if $p$ lies to the left of, on, to the right of the line through $u$ and $v$, respectively.
Since we cannot enforce general position for \configuration, particular care must be taken if three vertices are co-linear, i.e., if $\Delta(u,v,p) = 0$ holds.
To facilitate the later definition of $\varphi_3(\instance)$, we first define some auxiliary subformulas.
Let $u,v,p \in V$ be three vertices with $uv \in E$.
The formula $\text{onEdge}(u,v,p)$ determines if the vertex $p$ lies on the edge $uv$.
Note that this induces a crossing and is, therefore, not allowed in \configuration.
\begin{align*}
    \text{between}(u,v,p) &= (x_v-x_p)(x_u-x_p) \leq 0 \land (y_v-y_p)(y_u-y_p) \leq 0\\
	\text{onEdge}(u,v,p) &= \Delta(u,v,p) = 0 \land \text{between}(u,v,p)
\end{align*}
Note that the first term of $\text{onEdge}(u,v,p)$ determines if $p$ is co-linear with respect to $u$ and $v$ and $\text{between}(u,v,p)$ checks if the $x$- and $y$-coordinate of $p$ lie between the $x$- and $y$-coordinates of $u$ and $v$, respectively.
Observe that if all three conditions hold, then $p$ must be on the edge $uv$.

Next, let $uv,pq \in E$ be two independent edges, i.e., $\Size{\{u,v,p,q\}} = 4$.
The formula $\text{intersect}_{\text{ind}}(uv,pq)$ checks if the edges intersect:
\begin{align*}
	\text{intersect}_{\text{ind}}(uv,pq) &= (\Delta(u,v,p)\Delta(u,v,q) < 0 \land \Delta(p,q,u)\Delta(p,q,v) < 0)\\
	&\lor \text{onEdge}(u,v,p) \lor \text{onEdge}(u,v,q)\\
	&\lor \text{onEdge}(p,q,u) \lor \text{onEdge}(p,q,v).
\end{align*}
Similarly, let $uv, uw \in E$ be two edges that share one endpoint, in this case the vertex $u \in V$.
For this type of edge pairs, we use the formula $\text{intersect}_{\text{dep}}(uv,uw)$ to check for crossings:
\begin{align*}
	\text{intersect}_{\text{dep}}(uv,uw) &= \text{onEdge}(u,v,w) \lor \text{onEdge}(u,w,v).
\end{align*}

Finally, we are able to define the subformula $\varphi_3(\instance)$:
\begin{align*}
	\varphi_3(\instance) &= \left(\bigwedge_{\substack{uv, pq \in E,\\\Size{\{u,v,p,q\}}=4}}\lnot \text{intersect}_{\text{ind}}(uv,pq)\right)\\
    &\land \left(\bigwedge_{\substack{uv, uw \in E,\\v\neq w}} \lnot \text{intersect}_{\text{dep}}(uv,uw)\right).
\end{align*}

With the last subformula, i.e., $\varphi_4(\instance)$, we ensure that the configuration \configuration lives inside the polygonal domain $P$.
If $V' \neq \emptyset$, i.e., the input constraints the position of at least one vertex of $G$, then $\varphi_1(\instance)$ forces one vertex of $G$ to lie inside $P$.
If $V' = \emptyset$, we need an additional subformula that ensures that a vertex of $G$, say $v\in V$, lies inside $P$.
This can be enforced by requiring that $v$ lies on the same side as $P$ (i.e., in the inside of $P$) with respect to every edge of $P$, which can be tested using the determinant from $\varphi_3$.

Once we know that at least one vertex of $G$ lies inside $P$, it suffices to ensure that $\configuration$ never leaves $P$.
This can be enforced by checking if $G$ and $P$ are crossing free, i.e., no edge of $G$ should cross an edge of $P$.
We can encode this via an adapted version of $\varphi_3(\instance)$.
If the test succeeds, then $\configuration$ lives inside $P$ as no edge crosses the boundary of $P$; recall that $P$ is an open polygonal domain.

\end{statelater}
\begin{theorem}
	\label{thm:xp}
	\realizability\ is in \ExistsR\ and in \XP\ with respect to the number $n_G$ of vertices in $G$.
\end{theorem}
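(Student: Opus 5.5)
The plan is to use the explicit formula $\varphi(\instance)$ constructed above and argue three things: it is correct, it has polynomial size, and its structure gives the \XP\ bound.

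\textbf{Correctness.} I would show that $\varphi(\instance)$ is satisfiable if and only if $\instance$ is a yes-instance. Given a configuration $\configuration$, the assignment $(x_v,y_v)=\configuration(v)$ satisfies each part:
\begin{itemize}
\item $\varphi_1$ holds since $\configuration$ adheres to $\delta$;
\item $\varphi_2$ holds since $\configuration$ respects $\ell$ and lengths are positive, so squaring loses nothing;
\item $\varphi_3$ holds since $\configuration$ is planar;
\item $\varphi_4$ holds since $\configuration$ lives inside $P$.
\end{itemize}
Conversely, a satisfying assignment defines a straight-line drawing that respects $\ell$ and $\delta$. The main point is that $\varphi_3$ exactly rules out intersections between edges other than at shared endpoints.

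For this I would do a case analysis following the orientation test~\cite{dBCvKO.CGA.2008}. Two independent segments meet if and only if one of two things happens. Either they properly cross, which means both determinant products are negative. Or some endpoint of one segment lies on the other segment, which means some $\Delta=0$ together with the bounding-box (\emph{between}) condition. For two adjacent edges $uv,uw$, they overlap beyond $u$ if and only if one of them contains the far endpoint of the other. The reason is that collinear overlapping segments sharing $u$ must be nested. Degenerate overlaps of collinear independent segments are also caught, since then some endpoint lies on the other segment.

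For $\varphi_4$: at least one vertex lies in $P$, and no edge of $G$ meets $\partial P$. Every edge is connected, and $G$ is connected, so the whole drawing stays in the same component of $\mathbb{R}^2\setminus\partial P$, and hence in the open set $P$. For the ``vertex inside $P$'' test, if $P$ is not convex, a half-plane conjunction does not suffice. I would instead use a disjunction over a triangulation of $P$: the vertex lies in some (open) triangle, or in the relative interior of a diagonal. This is polynomial in $n_P$ and uses only constant coordinates.

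\textbf{Size and the two bounds.} The formula has $2n_G$ variables. It has $\BigO{n_G^4+n_G^2 n_P+n_P}$ atomic (in)equalities of degree at most $4$, and its rational coefficients are taken from the input. It is therefore a polynomial-size existential sentence over the reals that is computable in polynomial time. This gives membership in \ExistsR.

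For \XP, I would invoke the decision procedure of Grigoriev and Vorobjov~\cite{GJ.SSP.1988} for the existential theory with $k$ variables, $m$ polynomials of degree $d$, and bit size $L$. Its running time is $L^{\BigO{1}}(md)^{\BigO{k^2}}$, or $(md)^{\BigO{k}}$ with the better known algorithms. With $k=2n_G$, $d\le 4$, and $m$ polynomial in $\Size{\instance}$, this is $\Size{\instance}^{f(n_G)}$, which is \XP.

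\textbf{Main obstacle.} The delicate step is the converse direction of correctness in degenerate positions. This covers collinear overlapping edges, vertices placed on edges of $P$, and the requirement that ``no edge of $G$ meets the closed boundary of $P$'' is expressed with closed tests, so that touching the boundary is forbidden. I would handle this by also including $\text{onEdge}$-type tests between vertices of $G$ and edges of $P$, and between vertices of $P$ and edges of $G$. I would then verify the case analysis carefully.
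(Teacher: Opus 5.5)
Your proposal is correct and follows the paper's own route. Like the paper, you build the explicit formula $\varphi(\instance)$ with $2n_G$ variables and polynomially many constraints of degree at most $4$, then apply Grigoriev--Vorobjov to get $\Size{\instance}^{\BigO{n_G^2}}$ time. Your triangulation-based test for placing a vertex inside $P$ is in fact a needed repair: the paper's conjunction of half-plane tests over all edges of $P$ characterizes interior points only when $P$ is convex. Your cruder $\BigO{n_G^4}$ constraint count, versus the paper's use of $\Size{E(G)}\in\BigO{n_G}$ by planarity, does not affect the bound.
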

\begin{proof}
    Let \instance be an instance of \realizabilityShort.
    Containment in \ExistsR\ follows from the existence of $\varphi(\instance)$.
    To show \XP-membership, we analyze the formula.
    It has $2n_G$ variables and $\BigO{{n_G}^2 + n_G \cdot n_P}$ equalities and inequalities; note that $\Size{E(G)} \in \BigO{n_G}$ holds as~$G$ can be assumed to be planar.
	It is known that feasibility of an \ETR\ formula on $N$ variables and $M$ (in)equalities can be checked in $L^{\BigO{1}}(M\cdot D)^{\BigO{N^2}}$ time, where $D$ is the maximum degree of the (in)equalities and $L$ denotes the bit-complexity of their coefficients~\cite{GJ.SSP.1988}.
	In our case, $L$ and $D$ are polynomial in the size of \instance.
	Thus, we can check whether $\varphi(\instance)$ is feasible in ${\Size{\instance}}^{\BigO{{n_G}^2}}$ time.
\end{proof}

\section{Unit-Length Linkage Realizability is \textsf{W}[1]-hard}
\label{sec:w1-hardness}
We next complement \Cref{thm:xp} with a corresponding \W[1]-hardness result.
To show \W\textup{[1]}-hardness, we use a reduction from the \W\textup{[1]}-hard problem \probname{Grid Tiling} \cite{Cygan2015}. Here, given two integers $k$ and $m$, as well as a collection $S$ of size $k^2$ containing nonempty sets $S_{i,j} \subseteq [m] \times [m]$ with $1 \leq i,j \leq k$, the goal is to find for each of the sets a tuple $s_{i,j} \in S_{i,j}$, such that for two such tuples $(a,b) = s_{i,j}$ and $(a',b') = s_{i',j'}$ if $i = i' + 1$ and $j = j'$ ($j = j' + 1 $ and $i = i'$), then $a = a'$ ($b = b'$); see \cref{fig:grid-tiling}c.

We construct a unit-length linkage $(G, \Unit)$ by subdividing each edge of a $k \times k$ grid graph and adding to each vertex of the original grid two adjacent vertices to create an (equilateral) triangle. We refer to the vertices of the original grid graph as \emph{grid vertices}, the vertices used to subdivide the edges as \emph{subdivision vertices} and all other vertices as \emph{triangle vertices}.
We define a polygonal domain~$P'$ as a $k \times k$ grid, where every edge has width $w$. We refer to the square shaped regions in $P'$ that correspond to the grid points as \emph{grid squares}; see \cref{fig:grid-tiling}a.
The goal is for each grid vertex of $G$ to be placed in one of the grid squares and for no two grid vertices to be placed in the same grid square.
In each grid square we define $m^2$ points on an $m \times m$ grid, such that every point represents a value $(a,b) \in [m] \times [m]$. We want each grid vertex to be placed on one of those points or in a small $\varepsilon$-region around it. If a grid vertex is placed in such an $\varepsilon$-region, we say it \emph{uses} the corresponding point.
For each grid square $R_{i,j}$ we need to ensure that only points $(a,b)$ with $(a,b) \in S_{i,j}$ can be used. To this end, we add for each value $(a,b) \in S_{i,j}$ a so called triangle corridor to $P'$.  Placing a grid vertex in an $\varepsilon$-region then corresponds to choosing a tuple in the \probname{Grid Tiling} instance.
\begin{restatable}\restateref{le:triangles}{lemma}{lemmaTriangles}
	\label{le:triangles}
	If a grid vertex lies inside a grid square $R_{i,j}$, then the adjacent triangle has to lie inside a triangle corridor and use a point $(a,b) \in S_{i,j}$.
\end{restatable}
\begin{prooflater}{plemmaTriangles}
Let $g$ be a grid vertex and $t_1$, $t_2$ its adjacent triangle vertices.
A triangle cannot be placed outside of a triangle corridor inside $P'$. We place the triangle corridor in such a way that, if $g$ is placed directly at $(a,b) \in S_{i,j}$, it fits into the triangle corridor. Both the inner and outer boundary of the triangle corridor are parts of concentric equilateral triangles.
If $g$ is not placed directly at $(a,b)$ we still need to ensure visibility between $g$, $t_1$ and $t_2$. This allows us to determine small areas $T_1$ and $T_2$ in the corners of the triangle corridor, in which $t_1$ and $t_2$ must be placed; see \cref{fig:triangle-width} (a). Since the triangle is equilateral and thus rigid, moving one of its vertices moves the others uniformly. Hence, the size of $T_1$ and $T_2$ determines how far $g$ can be moved away from $(a,b)$.

Let $\varepsilon$ be the width of the triangle corridor.
Since both boundaries of the triangle corridor are (parts of) equilateral triangles, the two corner points $c_1$ and $c_2$ have a distance of $2\varepsilon$, which is the largest distance within $T_1$ and $T_2$; see \cref{fig:triangle-width}.  
Thus, we can move $g$ by at most $2\varepsilon$ in any direction.

To prevent $g$ from using points $(a',b') \notin S_{i,j}$ we need to choose $\varepsilon$ small enough, such that the distance between points $(a,b) \in R_{i,j}$ is smaller than $4\varepsilon$. We set the distance between the points to $\frac{w}{m}$, where $w$ is the width of the edges in $P'$. Thus, we need $4\varepsilon < \frac{w}{m}$. 
\begin{figure}
\centering
\includegraphics[page=3]{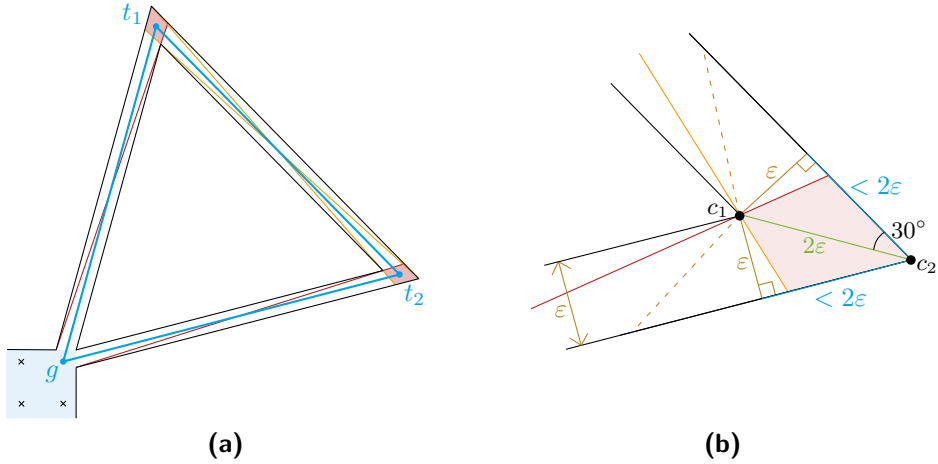}
\caption{(a) The movement of $g$ is restricted by the movement of $t_1, t_2$; (b) the longest distance in the areas $T_1$ and, $T_2$ is $2\varepsilon$.}
\label{fig:triangle-width}
\end{figure}
\end{prooflater}
Note also that no two grid vertices can be in the same grid square, as their triangles would cross.
Lastly, for our construction to work as intended, we have to choose the width $w$ of the edges of $P'$ such that if a point $(a,b)$ is used by a grid vertex $g_{i,j}$ in $R_{i,j}$ then a grid vertex $g_{i,j+1}$ in $R_{i,j+1}$ can use all points $(a,b') \in S_{i,j+1}$, but no other points, and a grid vertex $g_{i+1,j}$ in $R_{i+1,j}$ can use all points $(b',a) \in S_{i+1,j}$, but no other points; see \cref{fig:grid-tiling}b.  

\begin{restatable}\restateref{le:gridwidth}{lemma}{lemmaGridWidth}
	\label{le:gridwidth}
	If a grid vertex $g_{i,j}$ uses $(a,b)$ in $R_{i,j}$ then $g_{i,j+1}$ uses $(a,b')$ in $R_{i,j+1}$ and $g_{i+1,j}$ uses $(a', b)$ in $R_{i+1,j}$.
\end{restatable}
\begin{prooflater}{plemmaGridWidth}

We calculate the width $w$ of the edges of $P'$, such that, if a grid vertex $g_{i,j}$ is placed at $(a,b)$ in $R_{i,j}$ the distance that a grid vertex $g_{i,j+1}$ can have to the line through the points $(a',b)$ in $R_{i+1,j}$ is (significantly) smaller than the distance between the points. We focus only on this case, as the case with points $(a,b')$ in $R_{i, j+1}$ works simultaneously. 
We set the horizontal and vertical distance between points $(a,b) \in R_{i,j}$ to $\frac{w}{m}$.

Consider the extreme case, where vertices can lie on the boundary of the polygonal domain. Then, if $g_{i,j}$ is placed on a point $(a,b)$ on the boundary of $R_{i,j}$, the distance to a point $(a',b)$ in $R_{i+1,j}$ placed on the opposite boundary must be (at least) $2$. Otherwise not all points $(a',b)$ in $R_{i+1,j}$ can be reached exactly. 
Let $b$ be the distance between points $(a,b) \in R_{i,j}$ and $(a,b) \in R_{i+1,j}$ and $c$ the shortest possible distance between two grid vertices $g_{i,j}$ and $g_{i+1,j}$; see \cref{fig:corridor-width}.
We can calculate $b$ and $c$ using the following equations: $c = \sqrt{4(1-w^2)}$ and $b = \sqrt{4 - w^2}$. 
The longest vertical distance that $g_{i+1,j}$ can be away from a point $(a',b)$ in $R_{i+1,j}$ is $b-c$. Hence, we need $b - c = \sqrt{4 - w^2} - \sqrt{4(1-w^2)} < \frac{w}{m}$, which we can solve for $w$.

\begin{figure}
\centering
\includegraphics[page=2]{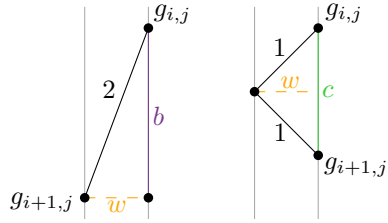}
\caption{Calculate width of edges of $P'$}
\label{fig:corridor-width}
\end{figure}
\end{prooflater}

The theorem follows from \Cref{le:triangles,le:gridwidth}.

\begin{theorem}
	\label{thm:w1}
	\realizability\ is \W\textup{[1]}-hard with respect to the number $n_G$ of vertices of $G$.
\end{theorem}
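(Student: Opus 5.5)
We give a parameterized reduction from \probname{Grid Tiling}, parameterized by $k$, to \realizabilityShort, parameterized by $n_G$, using the construction described above. Given $(k,m,S)$, we build the unit-length linkage $(G,\Unit)$ by subdividing the $k\times k$ grid graph and attaching an equilateral triangle at every grid vertex. This gives $n_G = k^2 + 2k(k-1) + 2k^2 \in \BigO{k^2}$, which is a function of $k$ alone. The domain $P = P'$ consists of the width-$w$ grid corridors, together with one triangle corridor per tuple in each $S_{i,j}$. We set $V'=\emptyset$. We pick $w$ so that $\sqrt{4-w^2}-\sqrt{4(1-w^2)} < w/m$, as in \Cref{le:gridwidth}, and then the triangle corridor width $\varepsilon$ so that $4\varepsilon < w/m$, as in \Cref{le:triangles}. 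Both values can be chosen rational, with bit-complexity polynomial in $m$, because for small $w$ the left-hand side of the first inequality behaves like $\Theta(w^2)$. Hence all coordinates of $P$ are rational of polynomial size, and the construction takes polynomial time.

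For the forward direction, take a solution $(s_{i,j})$ of the \probname{Grid Tiling} instance. We place $g_{i,j}$ exactly at the point $s_{i,j}$ in $R_{i,j}$ and put its triangle inside the corresponding triangle corridor. Consider horizontally adjacent grid vertices: their chosen points share the relevant coordinate, so they lie on a common line, and by the choice of $w$ their distance is between $c$ and $2$. Vertically adjacent pairs behave the same way with the other coordinate. The connecting subdivision vertex is then an apex of an isosceles triangle with two unit legs, and it stays inside the width-$w$ corridor. For the planarity check, I would argue that the grid edges stay in pairwise disjoint corridors and the triangles stay in their own corridors, so no two edges cross.

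For the backward direction, take a configuration $\configuration$ inside $P$. First, I would show that every grid vertex lies in a distinct grid square and that these squares respect the grid adjacency. A rigid unit triangle fits only inside a triangle corridor, and these corridors are attached to grid squares. Two triangles cannot share a square, as noted above. Each unit path of two edges forces its endpoints to be within distance $2$, so neighbouring grid vertices must sit in neighbouring squares. The step I expect to be the main obstacle is ruling out rotated or reflected embeddings of the grid into $P'$. One option is to let the map absorb this, since a symmetry of the $k\times k$ grid can be undone by relabelling. If that relabelling breaks the correspondence with $S$, I would instead make the first coordinate horizontal and the second vertical and add an asymmetric marker corridor, for instance at square $R_{1,1}$. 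With the placement fixed, \Cref{le:triangles} says each $g_{i,j}$ uses a point $s_{i,j}\in S_{i,j}$. \Cref{le:gridwidth} then says horizontally adjacent vertices use points agreeing in the first coordinate and vertically adjacent ones agree in the second, which is exactly the \probname{Grid Tiling} condition. Since $n_G$ is bounded by a function of $k$, this is a parameterized reduction, and \W[1]-hardness follows.
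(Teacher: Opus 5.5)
Your proposal takes the same route as the paper: a parameterized reduction from \probname{Grid Tiling} via the subdivided $k\times k$ grid with attached unit triangles, placed in the width-$w$ grid domain with triangle corridors, with correctness resting on \Cref{le:triangles} and \Cref{le:gridwidth}; you additionally spell out the bookkeeping the paper leaves implicit, namely $n_G\in\BigO{k^2}$, the rational choice of $w$ and $\varepsilon$, and both directions. Your worry about rotated or reflected embeddings is harmless and needs no marker corridor: the sets $S_{i,j}$ and the agreement constraints are tied to the squares $R_{i,j}$ and the corridors between them, so you can read the tiling off the squares directly.
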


\begin{figure}
\centering
\includegraphics[page=1]{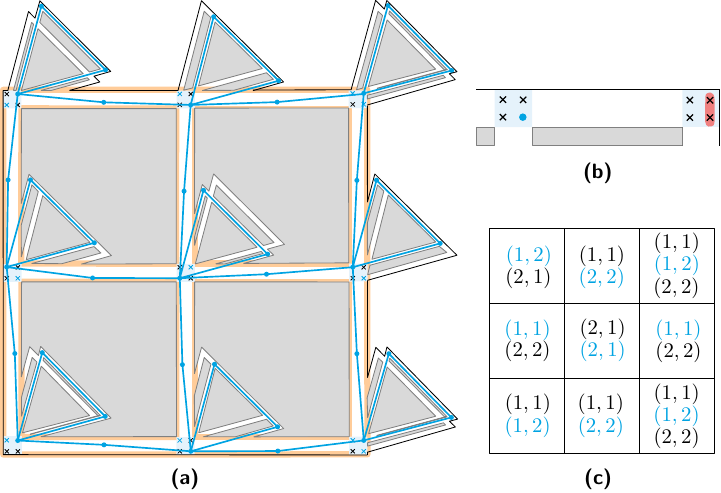}
\caption{\textbf{\textsf{(a)}} Construction for $k=3$ and $m=2$, $P'$ highlighted in orange, grid squares marked in light blue. \textbf{\textsf{(b)}} If the blue point with tuple $(2,2)$ is chosen, then to the right only points in the red area with tuples $(x,2)$, $1\leq x \leq m$ can be used. \textbf{\textsf{(c)}} \probname{Grid Tiling} instance of the construction in (a).}
\label{fig:grid-tiling}
\end{figure}

\section{Realizing Linear Unit-Length Linkages is \NP-hard}
\label{sec:linear-linkages}
\Cref{thm:w1} rules out fixed parameter tractable algorithms for all common graph parameters.
However, similar to previous reductions~\cite{ADD+.Wnc.2025,CDR.PEG.2007}, the constructed graph~$G$ still contains rigid components.
In this section, we consider linear linkages, i.e., where $G$ is a path, which do not have any rigid structure.
We show that \realizabilityShort remains \NP-hard even if \linkage is a unit-length linear linkage where we constrain only the placement of the first and last vertex of $G$, i.e., its \emph{endpoints}.

\newcommand{\stateobservationClauseGadget}{
\begin{restatable}{observation}{observationClauseGadget}
\label{obs:clause-gadget-planar}
Let $\configuration$ be a planar configuration of $\linkage$ that enters the main part $C'$ of a clause gadget three times.
For any variable $x_i$, truth assignment $t \in \{0,1\}$ to $x_i$, and vertex $v \in V(G)$, we have that $\Gamma(v) \in \Entrance{i,t}(C')$ implies that there is some $v' \in V(G)$ with $\Gamma(v') \in \Exit{i,t}(C')$.
Furthermore, there is some $v''\in V(G)$ such that $\configuration(v'') \in \Entrance{z,1}(C')$ for some $z \in \{i,j,k\}$.
\end{restatable}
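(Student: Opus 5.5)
This observation concerns a clause gadget whose precise geometry appears only later in the paper, so I have to rely on the naming convention it implies. The main part $C'$ has, for each of the three literals $x_i,x_j,x_k$ of the clause and each truth value $t\in\{0,1\}$, an entrance region $\Entrance{i,t}(C')$ and an exit region $\Exit{i,t}(C')$. I assume these are connected to one another inside $C'$ by corridors of width smaller than the unit length. The argument is then purely topological and combinatorial: it uses the fact that $\configuration$ is a planar drawing of a path whose consecutive vertices are at distance exactly $1$ and which lives in $P$. Since $\configuration$ lives in $P$, every edge segment is contained in the domain. So whenever the path is inside $C'$, it moves through $C'$ as a connected polygonal curve, one unit step per edge.

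\textbf{First claim.} I would follow the path $v_1,\dots,v_{n_G}$ starting from a vertex $v$ with $\configuration(v)\in\Entrance{i,t}(C')$.
\begin{enumerate}
\item The endpoints of $G$ are prescribed by $\delta$ to lie outside $C'$. Hence the subpath starting at $v$ must eventually leave $C'$ again.
\item Inside $C'$, the region reachable from $\Entrance{i,t}(C')$ is a corridor whose only ways out are back through $\Entrance{i,t}(C')$ or forward through $\Exit{i,t}(C')$.
\item The corridor is too narrow for the path to turn around in it: a unit segment cannot fit transversally, and a U-turn would need two edges side by side, which would cross. I would make this precise via a width argument like the one in \Cref{le:gridwidth}: in a corridor of width $w<1$, any vertex sequence that reverses direction forces a crossing or an edge leaving $P$.
\item Consequently some later (or earlier, depending on traversal direction) vertex $v'$ lies in $\Exit{i,t}(C')$.
\end{enumerate}

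\textbf{Second claim.} Here I would use a counting and planarity argument. The three entries of $\configuration$ into $C'$ are three disjoint subpaths. Each of them starts in an entrance region $\Entrance{z,t}(C')$ with $z\in\{i,j,k\}$. The gadget is designed so that the three false-entrances $\Entrance{z,0}(C')$ feed into a shared bottleneck that can accommodate at most two disjoint subpaths simultaneously, for instance a narrow passage whose length exceeds what two parallel unit chains can share without crossing. The third traversal therefore cannot also use a false entrance, and so it must enter through some $\Entrance{z,1}(C')$, which yields $v''$.

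\textbf{Main obstacle.} I expect the hardest step to be the no-U-turn/capacity argument for a unit-length path, since paths are flexible and can zigzag. I would try to bound it via the Euclidean distance covered between consecutive vertices and a planarity argument that excludes two parallel strands in a corridor narrower than the strand separation.
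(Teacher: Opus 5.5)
\textbf{First claim.} There is a genuine gap here. Your step~2 is false for this gadget. The seven corridors of $C'$ intersect one another: for instance, (CC2) meets (CC5), (CC4) meets (CC6), and (CC5) meets (CC6). So the part of $P$ inside $C'$ that is reachable from $\Entrance{i,t}(C')$ also leads to exit areas of other literals or of the other truth value. What must be excluded is not a U-turn but a switch from one route to another, and that is a metric fact, not a topological one.

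The paper handles this in three steps:
\begin{itemize}
\item It refines each route to a corridor of width at most $\alpha\varepsilon$ that contains all actual configurations of that route.
\item It lets $\gamma$ be the smallest turning angle between two intersecting corridors; $\gamma$ is independent of $\varepsilon$.
\item It notes that a straight unit-length edge with endpoints in two distinct corridors forces $\gamma \le 2\arcsin(2\alpha\varepsilon)$.
\end{itemize}
Choosing $\varepsilon$ small relative to $\alpha$ and $\gamma$ rules this out. Unit length then pins the vertex two steps later (three for $x_j$) into the exit quadrant.

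Your step~3 is also wrong as stated. Two consecutive edges meeting at a sharp angle share only their common vertex and do not cross. Two disjoint strands also fit side by side in a corridor of any positive width. The paper does not exclude turning back geometrically at all. Instead it invokes the assumption that no gadget contains redundant vertices, which is justified by the global choice of $n$.

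\textbf{Second claim.} A capacity bound for a bottleneck cannot work. Segments have no thickness, so a passage of positive width carries any number of pairwise disjoint, nearly parallel strands, and counting strands never forces a crossing. The gadget instead forces a crossing between transversal routes.

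This needs the first part in a stronger form than you state: the whole traversal stays in one corridor, not merely that some vertex reaches the exit. With that, each false entrance confines its traversal:
\begin{itemize}
\item entering via $\Entrance{i,0}(C')$ confines it to (CC2);
\item entering via $\Entrance{k,0}(C')$ confines it to (CC4);
\item entering via $\Entrance{j,0}(C')$ confines it to (CC5) or (CC6).
\end{itemize}
Each traversal runs through its corridor from end to end. Since (CC5) crosses (CC2) and (CC6) crosses (CC4), the $x_j$-traversal would cross one of the other two, contradicting planarity. Hence at least one of the three entries is through a true area.

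This argument also uses that the three entries are one per literal. That follows from how the tunnels and shifters are attached to $C'$, and your argument leaves it implicit.
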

}
\newcommand{\showClauseGadgetOverviewFigure}{
\begin{figure}
\centering
\includegraphics[page=2]{clause-gadget}
\caption{\textbf{\textsf{(a)}}--\textbf{\textsf{(d)}} Different configurations \configuration through the main part of the clause gadget. Dashed lines indicate different possibilities for the configuration to pass through the corridors.}
\label{fig:clause-gadget}
\end{figure}
}
\newcommand{\showTurningALineFigure}{
\begin{figure}
\centering
\includegraphics{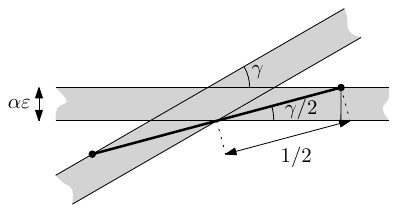}
\caption{A segment of length one can only have endpoints in two corridors
	if $\gamma \le 2 \arcsin{2\alpha \varepsilon}$.}
\label{fig:turning-a-line}
\end{figure}
}

\subsubsection{Overview of the Reduction.}
We give a reduction from the \NP-complete problem \probname{Planar Monotone 3-Sat}~\cite{DBK.OBS.2012}.
An instance $\varphi = (\mathcal{X}, \mathcal{C})$ of this problem consists of $N$ variables $\mathcal{X} = \{x_1, \ldots, x_N\}$ and $M$ clauses $\mathcal{C} = \{c_1, \ldots, c_M\}$, partitioned into the positive clauses $\mathcal{C}^+$ containing only non-negated literals and negative clauses $\mathcal{C}^-$ containing only negated
literals. 
Moreover, there must exist a planar rectilinear drawing $\mathcal{E}$ of the clause-variable incidence graph of $\varphi$ where all variables are on a horizontal line, which separates $\mathcal{C}^+$ and $\mathcal{C}^-$; see \Cref{fig:example}a for an example. 
We can compute a drawing~$\mathcal{E}$ with polynomial coordinates in polynomial time~\cite{DBK.OBS.2012,CDR.PEG.2004,CDR.PEG.2007}.

\begin{figure}
\centering
\includegraphics[page=1]{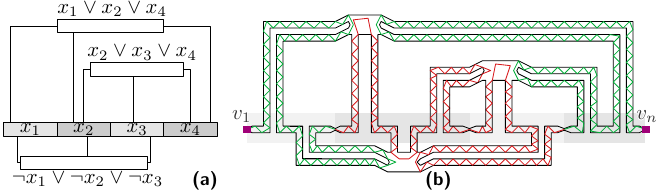}
\caption{A~\textbf{\textsf{(a)}} formula $\varphi$ and the~\textbf{\textsf{(b)}} schematization of the constructed instance.}
\label{fig:example}
\end{figure}

In our reduction, we create a unit-length linear linkage $\linkage = (G, \Unit)$ of length~$n$, where we specify $n$ in the end.
Conceptually, our construction will force a configuration \configuration of \linkage to visit each variable of $\varphi$ and each clause it is contained in, set truth assignments for the former and verify their truth status for the latter components.  
The path that \configuration follows will closely resemble the drawing $\mathcal{E}$ of $\varphi$, see also \Cref{fig:example}b, and we will replace each edge and vertex of~$\mathcal{E}$ with a gadget, which is a part of the polygonal domain $P$.
We highlight in each gadget  dedicated regions in the plane, in the following called \emph{areas}, where \configuration must pass through.
More concretely, once the configuration enters a gadget through an entry area, the construction ensures that it must leave the gadget at the respective exit area.
Areas are specified as quadrants of discs of radius $\varepsilon$, where $\varepsilon$ is a small constant whose value we specify towards the end, and denoted as $\Entrance{i,t}(F)$ and $\Exit{i,t}(F)$ for a gadget $F$ related to a possible truth assignment $t \in \{0,1\}$ to the variable $x_i$, i.e., $x_i = t$, respectively. 
Furthermore, we consider for each of these \emph{entrance} and \emph{exit} areas the triangle that is inscribed in the same quadrant of a disc of radius $\varepsilon / 2$, which we call the \emph{start} and \emph{end} area of a gadget and denote as $\Start{i,t}(F)$, and $\End{i,t}(F)$, respectively.
On a high level, to ensure that there exists a configuration~\configuration if $\varphi$ is satisfiable, we want that, for a point $p$ in the start area, there exists a configuration that starts at $p$ and places a vertex somewhere in the respective end area.
We create the polygonal region $P$ by gluing the individual gadgets together at the correct areas.

In the following, we describe each of the gadgets on a high-level and refer to {%
	Appendix~\ref{app:details-hardness} for the details.
Note that all gadgets are agnostic to translations and rotations in the plane.
Furthermore, to ease presentation, we will sometimes place vertices of \linkage on the boundary of $P$.
However, note that we can always slightly enlarge/shrink the gadgets to ensure that no vertex is forced to lie on the boundary of $P$.

\subsubsection*{Edge Gadget.}
There are three types of edge gadgets: tunnels, bends, and shifters.
\emph{Tunnels} will inhabit sawtooth-like shaped pairs of segments of height $0.6$ and width $1.6$.
Any configuration \configuration can embed at most two edges inside a tunnel enforced by the \emph{obstacles}, i.e., holes in the polygonal domain, of the tunnel depicted in \Cref{fig:edge-gadget}a.
The edges \emph{zig-zag} around the obstacles by
alternating the placement of the vertices between a placement near the top and the bottom of the tunnel. %
This allows us to define two equivalence classes on the configurations depending on the side of the tunnel where they place the first vertex.
They correspond to the truth assignment to a variable $x_i \in \mathcal{X}$, and we %
color areas and configurations from the classes in red (\ExampleArea[NegativeColor][0.5]) and green (\ExampleArea[PositiveColor][1]) in the figures, depending whether they correspond to $x_i=0$ or $x_i=1$, respectively.
We place the first pair at the lower side of the tunnel and the second pair at the upper side of the tunnel as indicated in \Cref{fig:edge-gadget}a.

\begin{figure}
\centering
\includegraphics[page=4]{edge-gadget}
\caption{A~\textbf{\textsf{(a)}} tunnel,~\textbf{\textsf{(b)}} bend,~\textbf{\textsf{(c)}} and the main part of a shifter with configurations through them. We hatch in this and the following figures the outside of the polygonal domain if there is risk of confusion.}
\label{fig:edge-gadget}
\end{figure}

Tunnels are accompanied by \emph{bends}, which force the configuration \configuration to perform a 90° turn and are depicted in \Cref{fig:edge-gadget}b.
Note that the obstacles force the green configuration to draw one edge (almost) horizontal and one (almost) vertical.
Thus it performs, compared to the red configuration, a small detour to ensure that both configurations place the same number of vertices inside the gadget.
This is crucial to ensure correctness of the reduction and is the main difficulty in constructing the gadget.
Observe that the bend has at its start and end a height (or width) of $0.6$, allowing us to attach tunnels on either of its ends.
The entrance and exit areas of a bend are analogous to those of a tunnel.

Tunnels and bends %
can only start and end at specific coordinates due to their construction.
With a \emph{shifter}, we can shift tunnels up and down by $0.2$ to give us more flexibility %
in later gadgets.
Observe in \Cref{fig:edge-gadget}a that inside a tunnel the distance of the endpoints of an edge of the linkage is approximately 0.8 and 0.6 in $x$- and $y$-direction, respectively.
With the gadget from \Cref{fig:edge-gadget}c, we can force an inverted behavior to move a configuration over the course of two edges up (or down) by $0.2$.
The shifter consists of the main part from \Cref{fig:edge-gadget}c on whose two sides %
we attach %
a tunnel that help us %
establish correctness.

\subsubsection*{Clause Gadget.}
The main part of the clause gadget for the clause $x_i \lor x_j \lor x_k$ is depicted in \Cref{fig:clause-gadget} and has multiple obstacles that leave only seven narrow (possibly intersecting) \emph{corridors} inside the gadget to limit how a configuration~\configuration can interact with it.
In our reduction, we force the configuration to enter the main part three times, first via the entrance $\Entrance{i,t_i}$, %
then via $\Entrance{j,t_j}$, %
and finally via $\Entrance{k,t_k}$, for $t_i, t_j, t_k \in \{0,1\}$. %
Observe that the distance between \Entrance{i,t_i} %
and \Exit{i, t_i} %
can be spanned by a linkage of length two.
The corridors leave little choice for \configuration:
If \configuration enters the main part via \Entrance{i, 0}, it is forced to leave it via \Exit{i,0}, 
otherwise, i.e., if it enters the main part via \Entrance{i,1}, %
it is forced to leave it via \Exit{i,1}.
Note that %
placing a vertex in an area for $x_j$ or~$x_k$ is impossible due to the unit-length requirement of the edges paired with the corridors.
The same holds true for the entrance and exit areas corresponding to $x_k$.
The %
three corridors in the middle constrain how a configuration can reach \Exit{j,t_j} %
from \Entrance{j, t_j} %
using three edges.
In particular, if \configuration enters the main part via \Entrance{j,0}, %
the gadget contains two corridors, effectively giving %
the configuration the flexibility to lean more towards the left or right side of the main part; compare also Figures~\ref{fig:clause-gadget}a and~b.
Conversely, i.e., if \configuration enters %
via \Entrance{j,1}, there is again only one corridor, %
giving the configuration little freedom in placing the remaining vertices.
\showClauseGadgetOverviewFigure

The construction allows for the following crucial observation; compare also Figures~\ref{fig:clause-gadget}a--d:
If a configuration \configuration enters the main part via \Entrance{i,0} \emph{and} \Entrance{k,0}, a planar configuration %
that enters the main part via \Entrance{j,0} becomes impossible.
On the other hand, if \configuration enters the main part via \Entrance{i,1} \emph{or} \Entrance{k,1}, it uses a corridor that does not intersect with the one(s) for \Entrance{j,0}, %
allowing a planar configuration even if \configuration\ enters the main part via \Entrance{j,0}. The corridor connecting \Entrance{j,1} with \Exit{j,1} can always be used.

Finally, we remark that for a suitable small constant $\varepsilon$ it is not possible to enter the clause gadget at some entrance area assigned
for one variable and leave it at an entrance/exit area assigned to a different variable. To see this recall, that we have seven possible ``routes'' in which the linkage is intended to
pass through the gadget (two for $x_i/x_k$ and three for~$x_j$); compare this also to the seven corridors. We can find a constant
$\alpha$ such that for every route all possible actual configurations are
contained in a polygonal corridor of 
width at most~$\alpha \varepsilon$, which we use to refine the original corridors. The obstacles of $P$ in the clause gadget are then
defined by the points outside of the refined corridors.
Let~$\gamma$ be the smallest ``turning angle'' for two intersecting corridors. Note that~$\gamma$ is 
independent from $\alpha\varepsilon$. In order to pass from one corridor to
another, there has to be a segment of length one with endpoints in distinct
corridors. A simple calculation shows that this is only possible if $\gamma \le 2 \arcsin{2\alpha \varepsilon}$; see~\Cref{fig:turning-a-line}.
Thus, picking a rational value $\varepsilon \le \frac{\alpha}{2} \sin{\frac{\gamma}{2}}$
ensures that we cannot deviate from the intended route
through the clause gadget.%
\showTurningALineFigure%
\Cref{obs:clause-gadget-planar} summarizes this.

\stateobservationClauseGadget
We attach on the left and right side of the main part two shifters,
and at the bottom side two tunnels each to obtain the clause gadget and unify the entrance and exit areas. %

\subsubsection*{Variable Gadget.}
The variable gadget for a variable $x_i \in \mathcal{X}$ consists of three main components with different roles: making \configuration ``set'' the truth assignment $x_i = t$, propagating this to all relevant clauses %
and ``resetting'' \configuration before entering the variable gadget of $x_{i + 1}$, if it exists.
The first component of the variable gadget is depicted in \Cref{fig:variable-gadget}a.
It consists of a triangular obstacle, forcing the configuration to place the next vertex either at the top or bottom end of the gadget, corresponding to setting the variable to \emph{true} or \emph{false}, respectively.
The base of the first component has a height of $0.6$, allowing us to attach a tunnel.
To avoid an irrational coordinate %
for the %
tip of the obstacle, we reduce the height of the triangle slightly. %
Moreover, we force the linkage to place a vertex inside \Exit{i, t} for $t \in \{0,1\}$, effectively setting $x_i = t$. 
The third component of the variable gadget, depicted in \Cref{fig:variable-gadget}b, uses an analogous idea to force \configuration to approach the center of the gadget %
no matter if it passed through \NegativeEntrance{i, t} for $t = 0$ or $t = 1$.
We combine different variable gadgets via entrance and exit areas, %
indicated for the $i$th variable as \VariableEntrance{i} and \VariableExit{i}, and highlight two points $s_i$ and~$t_i$ inside them that will act as a certificate necessary in the full proof. %
Note that in \Cref{fig:variable-gadget} the gadget for $x_1$ is closed around $s_1$; we do so likewise for $t_N$.
Finally, the second component consists of multiple tunnels that connect the first component via the gadgets for clauses $c_j$ with $x_i \in c_j$ to the third component of the gadget.
For negative clauses, we attach to them half of a tunnel to toggle the configuration, i.e., its carried truth state, before visiting these clauses.

\begin{figure}
	\centering
\includegraphics[page=3]{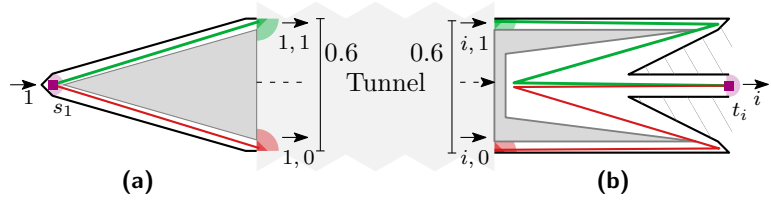}
\caption{The \textbf{\textsf{(a)}} first and~\textbf{\textsf{(b)}} third component of the variable gadget for $x_1$ and $x_i$, respectively}
\label{fig:variable-gadget}
\end{figure}

\subsubsection*{Complete Reduction.}
The placement of the obstacles in our gadgets ensures that any configuration follows pre-determined paths through the gadgets; see also the correctness arguments in the appendix.

We now combine the above-introduced gadgets by taking a suitable scaled planar rectilinear drawing $\mathcal{E}$ of the incidence graph of $\varphi$, replacing all components with the respective gadgets, and unifying all vertices in different gadgets that lie on the same point.
To finish the construction of $(\linkage, P, V', \delta)$, we close the first and last variable gadget and set $V' = \{v_1, v_n\}$ and $\delta(v_1) = s_1$ and $\delta(v_n) = t_N$, where $n$ denotes the number of vertices in $G$, which we specify next.
We note that the obtained polygonal domain $P$ contains (polynomially many) obstacles, i.e., holes.
Let $P$ be created using $T$ tunnels, excluding those used in other gadgets such as the shifters, and~$B$ bends.
The linear linkage \linkage consists of $n = 2T + 7B + 4\Size{\mathcal{X}} + 35\Size{\mathcal{C}} + 2\Size{\mathcal{C}^-} + 1$ vertices.
When carefully analyzing our gadgets, we observe that any planar configuration \configuration of \linkage starting at $s_1$ must pass through every gadget exactly once.
Otherwise it is too short to reach $t_N$.
Using \Cref{obs:clause-gadget-planar}, we conclude that for~\configuration to be planar, %
every clause must be satisfied, %
establishing \NP-hardness of \realizabilityShort:
\begin{restatable}\restateref{thm:hardness}{theorem}{theoremHardness}
	\label{thm:hardness}
	\realizability remains \NP-hard even if $\linkage = (G, \Unit)$ is a unit length linear linkage where we constrain the first and last vertex of~$G$.
\end{restatable}
\begin{prooflater}{ptheoremHardness}
	Let $\varphi = (\mathcal{X}, \mathcal{C}, \mathcal{E})$ be an instance of \probname{Planar Monotone 3-Sat} consisting of $N$ variables $\mathcal{X} = \{x_1, \ldots, x_N\}$ and $M$ clauses $\mathcal{C} = \{c_1, \ldots, c_M\}$ together with a planar rectilinear embedding $\mathcal{E}$ of the incidence graph.
	We now create based on $\mathcal{E}$ an instance $(G, P, s, t)$ of \realizabilityShort.
	To that end, we replace each variable $c_j \in \mathcal{C}$ with the clause gadget for $c_j$. %
	Next, we replace the vertices on the $x$-axis that represent the variables with the variable gadgets. %
	To connect the clause gadgets with the second part of the variable gadget, we use two parallel tunnels and bends to trace the horizontal edges that connect in $\mathcal{E}$ a clause $c_j \in \mathcal{C}$ with the respective vertices for the variables $x_i \in c_j$.
	For a negative clause, we insert the first half of a tunnel before connecting the gadgets up.
	This will switch the interpretation between true and false inside a tunnel and thus allows us to temporarily toggle the truth state of a variable before visiting these clauses.
	In particular, if $x_i$ should be false, then it will end up in the positive starting position when entering the gadget for a clause $c_j$ where $x_i$ occurs negated -- it satisfies $c_j$ as required.
	
	Finally, we align the gadgets s.t. the respective entrance and exit areas coincide, i.e., if two gadgets $F_1$ and $F_2$ are next to each other and are related to the same variable $x \in \mathcal{X}$, then we ensure $\NegativeExit{\lnot x}(F_1) = \NegativeEntrance{\lnot x}(F_2)$ and $\PositiveExit{x}(F_1) = \PositiveEntrance{\lnot x}(F_2)$.	
	Note that we can always scale (parts of) $\mathcal{E}$ by appropriate polynomial factors to ensure that there is enough space to place the gadgets as described above.
	For the variable gadgets for the two variables $x_i, x_{i + 1} \in \mathcal{X}$ with $1 \leq i < n$, we unify $t_{i} = s_{i + 1}$ (we describe this in detail in Appendix~\ref{sec:details-hardness-variable-gadget}).
    Let $V = \{v_1, v_2, \ldots, v_n\}$ be the vertices of the linear linkage ordered as they occur on the path; we discuss a bound on $n$ shortly.
	We now fix the overall start and end of the sought configuration.
    To this end, we set $V' = \{v_1, v_n\}$ and set $\delta(v_1) = s_1$ and $\delta(v_n) = t_N$.
    Moreover, we unify all vertices in different gadgets that lie on the same point, which gives us the final polygonal domain $P$.
	The number of holes in $P$ is polynomial in $\Size{\varphi}$.
	Let $P$ consist of $T$ tunnels, excluding those used in other gadgets such as the shifters.
	Our linear unit-length linkage \linkage consisting of $n = 2T + 7B + 4N + 35M + 2M^- + 1$ vertices, with $N = \Size{\mathcal{X}}$, $M = \Size{\mathcal{C}}$, and $M^- = \Size{\mathcal{C}^-}$.
	
	We obtain an instance $\instance = (\linkage, P, V', \delta)$ of \realizabilityShort whose size is polynomial in $\Size{\varphi}$.
	Furthermore, it can be constructed in polynomial time.
	Thus, it remains to show correctness of the reduction.
	
	\paragraph*{($\boldsymbol{\Rightarrow}$)}
	Let $\varphi$ be a positive instance of \probname{Planar Monotone 3-Sat} and $\Psi\colon \mathcal{X} \to \{0,1\}$ a satisfying assignment.
	We now construct based on $\Psi$ a configuration $\configuration$ of \linkage that witnesses that $(\linkage, P, s,t)$ is a positive instance of \realizabilityShort.
	To that end, we let $\configuration(v_1) = s$ and consider for the placement of $v_2$ the truth assignment of $x_1$.
	If $\Psi(x_1) = 1$, we place $v_2$ s.t.\ $\configuration(v_2) \in \PositiveEntrance{x_1}(T)$ holds, where $T$ is the tunnel attached to the first part of the variable gadget for $x_1$. 
	More concretely, we place $v_2$ on the upper half of $T$.
	Otherwise, i.e., if $\Psi(x_1) = 0$, we place $v_2$ s.t.\ $\configuration(v_2) \in \NegativeEntrance{x_1}(T)$ holds.
	More concretely, we place $v_2$ on the lower half of $T$.
	By the construction of the variable gadget, this configuration is possible.
	We then place the next vertices of the linear linkage on the respective segments of the polygonal domain as in the constructed configurations in the proofs of \Cref{lem:edge-gadget-tunnel,lem:edge-gadget-bend} (in Appendices~\ref{sec:details-hardness-tunnel} and~\ref{sec:details-hardness-bends}).
	Since the respective entrance and exit areas coincide, we can iteratively apply said lemmas to construct $\linkage$.
	When we reach the gadget $C$ of a clause $c_j \in \mathcal{C}$ with $x_i \in c_j$ we use \Cref{lem:clause-gadget} (from Appendix~\ref{sec:details-hardness-clause-gadget}) to continue constructing \configuration; recall also \Cref{fig:clause-gadget}.
	Note that if we enter $C$ via $\NegativeEntrance{\lnot x_i}(C)$, then decide based on the the truth state of the other two variables in $c_j$ to which side the configuration \configuration should lean.
	Since $\Psi$ satisfies $\varphi$ is in particular satisfies $c_j$.
	Thus, there is some other variable $x_q \in c_j$ that satisfies $c_j$.
	We can lean \configuration towards the side of $x_q$ to avoid a crossing in $\configuration$; see the different cases in \Cref{fig:clause-gadget}
	We continue creating \configuration until we reach the end of the variable gadget.
	Using \Cref{lem:variable-gadget-end} (from Appendix~\ref{sec:details-hardness-variable-gadget}), we are able to reach the point $t_1'$.
	From there on, we continue in the same manner for the remaining variables in $\varphi$ until we eventually reach $t_N = t = \delta(v_n)$.
	Combining all, we conclude that the constructed configuration $\configuration$ witnesses that $\instance$ is a positive instance of \realizabilityShort.
	
	\paragraph*{($\boldsymbol{\Leftarrow}$)}
	Let \instance be a positive instance of \realizabilityShort and $\configuration$ a witnessing configuration.
	Based on $\configuration$, we now construct a truth assignment $\Psi\colon \mathcal{X} \to \{0,1\}$ that satisfies $\varphi$.
	We start with $x_1$ and look at the configuration inside its variable gadget, in particular at its first part $V_1$.
	Since $\configuration(v_1) = s \in \VariableEntrance{x_1}$, we can conclude that there exists a vertex $v \in V$ with $\configuration(v) \in \NegativeExit{\lnot x_1}(V_1) \cup \PositiveExit{x_1}(V_1)$.
	If we have $\configuration(v) \in \NegativeExit{\lnot x}(V_1)$, we set $\Psi(x_1) = 0$, otherwise, i.e., if we have $\configuration(v) \in \PositiveExit{\lnot x}(V_1)$, we set $\Psi(x_1) = 1$.
	
	As already discussed in the ($\Rightarrow$)-direction, our construction ensures that we can iteratively apply \Cref{lem:edge-gadget-bend,lem:edge-gadget-tunnel,lem:variable-gadget-end,lem:clause-gadget,obs:clause-gadget-planar} (from the detailed description in Appendix~\ref{app:details-hardness}).
	In particular, these lemmas ensure that we can never leave the entrance and exit areas of the respective gadgets.
	Hence, if we follow the configuration $\configuration$, we will eventually arrive at a vertex $v' \in V$ such that $\configuration(v') \in\VariableExit{x_1}=\VariableEntrance{x_2}$.
	From there on, we then proceed with constructing the truth assignment $\Psi$.
	Since the last vertex of $G$ is placed at $t$ and due to the number of vertices in \linkage, it is guaranteed that \configuration visits each gadget at most once (or at most three times for the clause gadget), as otherwise it cannot reach $t$.
	Thus, $\Psi$ will eventually assign a truth value to all variables of $\mathcal{X}$.
	
	It remains to show that $\Psi$ is a satisfying assignment of $\varphi$.
	Towards a contradiction, assume that $\Psi$ would not satisfy a clause $c_j \in \mathcal{C}$.
	Without loss of generality, assume that $c_j = x_i \lor x_j \lor x_k$.
	As $\Psi$ does not satisfy $c_j$, we must have set $\Psi(x_i) = \Psi(x_j) = \Psi(x_k) = 0$.
	Due to the construction of $(\linkage, P, s, t)$ and the definition of $\Psi$, this implies that
	there are three vertices $v_i, v_j, v_k \in V$ s.t.\ we have, w.l.o.g., $\configuration(v_i) \in \NegativeEntrance{\lnot x_i}(C')$, $\configuration(v_j) \in \NegativeEntrance{\lnot x_j}(C')$, and $\configuration(v_k) \in \NegativeEntrance{\lnot x_k}(C')$, where $C'$ is the main part of the gadget for $c_j$.
	However, \Cref{obs:clause-gadget-planar} tells us that this is not possible in a planar configuration.
	Thus, we must have $\configuration(v_i) \in \PositiveEntrance{\lnot x_i}(C')$, $\configuration(v_j) \in \PositiveEntrance{\lnot x_j}(C')$, or $\configuration(v_k) \in \NegativeEntrance{\lnot x_k}(C')$, i.e., $\Psi(x_i) = 1$, $\Psi(x_j) = 1$, or $\Psi(x_k) = 1$.
	Therefore, we conclude that $\Psi$ satisfies all clauses of $\varphi$, and in particular $c_j$, i.e., $\varphi$ is a positive instance of \probname{Planar Monotone 3-Sat}.
\end{prooflater}

\newcommand{\stateThreeEdgesTheorem}{%
\begin{restatable}\restateref{thm:three-edges}{theorem}{theoremThreeEdges}
	\label{thm:three-edges}
	\ullRealizability for three-edge linkages %
	in a %
	convex polygon~%
	$P$ with $m$ vertices can be solved in time \BigO{m} %
	even for general linear linkages.
\end{restatable}
}%

\section{Linear Linkages of Length Three in Convex Polygons}
\label{sec:three-edge-linkages}
\Cref{thm:hardness} raises the question in which settings we can solve \realizabilityShort for linear linkages with constrained endpoints efficiently.
Observe that \Cref{thm:hardness} hinges on $P$ being a polygonal domain with a polynomial number of holes.
As our last result, we show that if $P$ is convex and $G$ consisting of three edges, we can solve \realizabilityShort in linear time.

\begin{figure}
	\centering
	\includegraphics{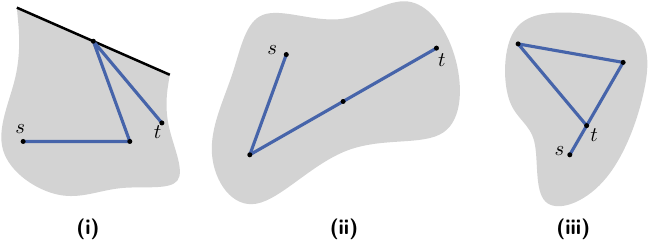}
	\caption{Realizing a three-edge linkage. It suffices to check these types of configurations; see~\Cref{thm:linear-three-edges}. The polygonal domain is hinted by the gray areas.}
	\label{fig:3-link}
\end{figure}
\begin{restatable}\restateref{thm:linear-three-edges}{theorem}{theoremThreeEdges}
	\label{thm:linear-three-edges}
	\realizability for three-edge linear linkages in a convex polygon $P$ where we constrain the first and last vertex of $G$ can be solved in $\BigO{n_P}$ time even for general linear linkages, i.e., for arbitrary~$\ell$.
\end{restatable}
\begin{proofsketch}
    For this sketch, we allow degenerate configurations where a point might lie on the boundary of $P$ or on an edge of the linkage. 
    We discuss in the full proof how to deal with those.
    
	Assume there exists a non-crossing configuration $\configuration$ of the linkage with vertices $v_1,\ldots, v_4$ in $P$.
    Let $s = \delta(v_1)$ and $t = \delta(v_4)$.
	Observe that any such configuration has at most one degree of freedom (1-dof).
	We now aim to reconfigure~$\configuration$ until either (i) $v_2$ or $v_3$ hit the boundary of $P$, (ii) two incident edges become co-linear, or (iii) $v_1$ lies on $\configuration(v_3v_4)$ or $v_4$ lies on $\configuration(v_1v_2)$; see~\Cref{fig:3-link}. 
	If this is not possible, we have to be in case (i)--(iii) already.
	Observe that if the configuration space (allowing self-crossings) is not restricted by $P$, then we can realize the linkage such that it forms with the edge $\configuration(s)\configuration(t)$ a convex quadrilateral.
	This will either increase the angle at $\configuration(v_2)$ or $\configuration(v_3)$ (see~\cite[Lemma 5.3.2]{DO.GFA.2007}) and hence leads to case (ii).
	
	Consequently, if we start moving the linkage from configuration $\configuration$ we will either introduce a crossing (case (iii)), hit the boundary (and since $P$ is convex this happens at some vertex, thus at $v_2$ or $v_3$, which is covered by case (i)), or two neighboring edges become co-linear (case (ii)).
	So it suffices to check all configurations for which one of the cases holds.

	The number of configurations is in $\BigO{m_P}$, and in each case we have to solve an instance of \realizabilityShort for a two-edge linkage, which takes constant time. 
    Since $m_P \in \BigO{n_P}$, the theorem follows.
\end{proofsketch}
\begin{prooflater}{ptheoremThreeEdges}
	Let $m_P$ denote the number of edges of the convex polygon $P$.
    For technical reasons we allow (for now) degenerate configurations where a point might lie on the boundary of $P$ or on an edge of the linkage. 
	Assume there exists a non-crossing configuration $\configuration$ of the linkage in $P$. 
	We name the	vertices of the linkage $v_1,\ldots, v_4$ and denote the constrained points with $s = \delta(v_1)$ and $t = \delta(v_4)$, respectively.
    Note that any configuration of the linkage has at most one degree of freedom.
	If there is a motion we use it to reconfigure the linkage until either (i) $v_2$ or $v_3$ hit the boundary of $P$, (ii) two incident edges become co-linear, or (iii) $v_1$ lies on $\configuration(v_3v_4)$ or $v_4$ lies on $\configuration(v_1v_2)$; see~\Cref{fig:3-link}. 
	If there is no motion we have to be in case (i)--(iii) already.
	
	If the configuration space (allowing self-crossings) is not restricted by $P$, then we can realize the linkage such that it forms with the edge $\configuration(s)\configuration(t)$ a convex quadrilateral.
	The 1-dof motion will either increase the angle at $\configuration(v_2)$ or $\configuration(v_3)$ (see~\cite[Lemma 5.3.2]{DO.GFA.2007}) and hence leads to case (ii).
	
	Thus, if we start moving the linkage from configuration $\configuration$ we will either introduce a crossing (case (iii)), hit the boundary (and since $P$ is convex this happens at some vertex, thus at $v_2$ or $v_3$, which is covered by case (i)), or two neighboring edges become co-linear (case (ii)). So it suffices to check all configurations for which one of the cases holds. 

    We now come back to the original setting. If we found a solution (i) or (iii),
    we only found a degenerate solution. 
    In these case we have to check, if it is possible to use
    the 1-dof to move the points to a feasible solution. Note
    that we only have two movable points $v_2,v_3$ and these move on a circle. Let $a_2/a_3$ be a small
    part of the trajectories of $v_2/v_3$ when moving.
    Following on the direction of the 1-dof motion each arc $a_i$ decomposes into an arc $a_i^+$, the point $p_i$ and an arc $a_i^-$. We call one of these arcs infeasible, if it either lies outside of $P$ or if it (in case (iii)) 
    would move a vertex over the line through $s$ and $t$ such that the linkage would become crossing.
    We now check if $a_2^+$ and $a_3^+$ are not infeasible, or if  $a_2^-$ and $a_3^-$ are not infeasible.
    In both cases the linkage can be realized. Otherwise we have to reject this particular degenerated solution.
    If all degenerated solutions are rejected the linkage cannot be realized.
    
	We have $\BigO{m_P}$ possible configurations for case (i) and
	\BigO{1} possible cases for the other two cases. Each case boils down to solving an instance of \realizabilityShort for a two-edge (linear) linkage, which can be done in constant time.
    Checking if a degenerate solution can 
     be transformed into a feasible solution can be done in $\BigO{1}$ time per degenerate solution and there can only 
    be constantly many such solutions.
	As $m_P \in \BigO{n_P}$, the theorem follows.
\end{prooflater}

\section{Concluding Remarks}
\label{sec:conclusion}
We see this paper as a further step towards understanding the complexity of realizing linkages.
Our paper shows that this problem is surprisingly hard in the presence of a polygonal domain even for relatively simple linkages.
The \W[1]-hardness from \Cref{thm:w1} underlines that, from a parameterized complexity perspective, we must parameterize the polygonal domain $P$, for example via its number of holes or concave corners.
We see this as an interesting direction for future work.
Our \NP-hardness from \Cref{thm:hardness} raises the question in which settings we can solve \realizabilityShort for linear linkages where we constrain the first and last vertex of $G$ in polynomial time.
\Cref{thm:linear-three-edges} already provides a partial answer to this question.
However, \Cref{thm:linear-three-edges} hinges on three-edge linkages having only one degree of freedom and an extension to four-edge linkages is highly non-trivial.
Therefore, we see a generalization to arbitrary constant-size linear linkages (that stems from insights into the specific problem-variant and is thus more efficient as our \XP-algorithm from \Cref{thm:xp}) as a natural next step.
{%
The holes in the polygonal domain $P$ were an important ingredient of the \NP-hardness constructions behind \Cref{thm:hardness}, and we cannot remove them without fundamentally changing the construction.
Therefore, showing hardness for simple polygons $P$ requires new approaches (as it has been the case for the \probname{Two-Dimensional Packing} problem~\cite{AMS.FCT.2024}).
Hence, we see establishing \NP-hardness for simple/convex $P$ and $\ExistsR$-hardness for general as interesting directions to pursue for linear linkages.
}%
Finally, extensions to other length-constrained drawings, such as dichotomous drawings~\cite{Angelini2025}, are also worth considering.

\begin{credits}
\subsubsection{\ackname} 
Thomas Depian and Martin Nällenburg acknowledge support from the Vienna Science and Technology Fund (WWTF) [10.47379/ICT22029].
This work started at the 19th European Research Week on Geometric Graphs (GGWeek) in Trier.
\end{credits}

\bibliographystyle{splncs04}
\bibliography{references}

@Book{DO.GFA.2007,
  author    = {Demaine, Erik D. and O’Rourke, Joseph},
  publisher = {Cambridge University Press},
  title     = {Geometric {F}olding {A}lgorithms: {L}inkages, {O}rigami, {P}olyhedra},
  year      = {2007},
  isbn      = {9780511735172},
  doi       = {10.1017/cbo9780511735172},
}

@InProceedings{CDR.PEG.2004,
  author    = {Cabello, Sergio and Demaine, Erik D. and Rote, Günter},
  booktitle = {Proc. 11th International Symposium on Graph Drawing and Network Visualization (GD'03)},
  title     = {{P}lanar {E}mbeddings of {G}raphs with {S}pecified {E}dge {L}engths},
  year      = {2004},
  editor    = {Giuseppe Liotta},
  pages     = {283--294},
  publisher = {Springer},
  series    = {LNCS},
  volume    = {2912},
  doi       = {10.1007/978-3-540-24595-7_26},
}

@Article{CDR.PEG.2007,
  author  = {Cabello, Sergio and Demaine, Erik D. and Rote, Günter},
  journal = {Journal of Graph Algorithms and Applications},
  title   = {{P}lanar {E}mbeddings of {G}raphs with {S}pecified {E}dge {L}engths},
  year    = {2007},
  number  = {1},
  pages   = {259--276},
  volume  = {11},
  doi     = {10.7155/jgaa.00145},
}

@Article{EW.Fel.1990,
  author  = {Eades, Peter and Wormald, Nicholas C.},
  journal = {Discrete Applied Mathematics},
  title   = {Fixed edge-length graph drawing is {NP}-hard},
  year    = {1990},
  number  = {2},
  pages   = {111--134},
  volume  = {28},
  doi     = {10.1016/0166-218x(90)90110-x},
}

@Article{Kem.GMd.1875,
  author  = {Kempe, Alfred B.},
  journal = {Proceedings of the London Mathematical Society},
  title   = {On a {G}eneral {M}ethod of describing {P}lane {C}urves of the nth degree by {L}inkwork},
  year    = {1875},
  number  = {1},
  pages   = {213--216},
  volume  = {1},
  doi     = {10.1112/plms/s1-7.1.213},
}

@Article{KM.Utc.2002,
  author  = {Kapovich, Michael and Millson, John J.},
  journal = {Topology},
  title   = {Universality theorems for configuration spaces of planar linkages},
  year    = {2002},
  number  = {6},
  pages   = {1051--1107},
  volume  = {41},
  doi     = {10.1016/s0040-9383(01)00034-9},
}

@Article{HJW.MP2.1984,
  author  = {Hopcroft, John and Joseph, Deborah and Whitesides, Sue},
  journal = {SIAM Journal on Computing},
  title   = {Movement {P}roblems for 2-{D}imensional {L}inkages},
  year    = {1984},
  number  = {3},
  pages   = {610--629},
  volume  = {13},
  doi     = {10.1137/0213038},
}

@InProceedings{ADD+.WNC.2016,
  author    = {Abel, Zachary and Demaine, Erik D. and Demaine, Martin L. and Eisenstat, Sarah and Lynch, Jayson and Schardl, Tao B.},
  booktitle = {Proc. 32nd International Symposium on Computational Geometry (SoCG'16)},
  title     = {Who {N}eeds {C}rossings? {H}ardness of {P}lane {G}raph {R}igidity},
  year      = {2016},
  editor    = {S{\'{a}}ndor P. Fekete and Anna Lubiw},
  pages     = {3:1--3:15},
  publisher = {Schloss Dagstuhl – Leibniz-Zentrum für Informatik},
  series    = {LIPIcs},
  volume    = {51},
  doi       = {10.4230/LIPICS.SOCG.2016.3},
}

@Article{CDR.SPA.2003,
  author    = {Robert Connelly and Erik D. Demaine and G{\"{u}}nter Rote},
  journal   = {Discrete \& Computational Geometry},
  title     = {Straightening {P}olygonal {A}rcs and {C}onvexifying {P}olygonal {C}ycles},
  year      = {2003},
  number    = {2},
  pages     = {205--239},
  volume    = {30},
  doi       = {10.1007/S00454-003-0006-7},
}

@Article{BDD+.nrt.2002,
  author  = {Biedl, Therese and Demaine, Erik and Demaine, Martin and Lazard, Sylvain and Lubiw, Anna and O’Rourke, Joseph and Robbins, Steve and Streinu, Ileana and Toussaint, Godfried and Whitesides, Sue},
  journal = {Discrete Applied Mathematics},
  title   = {A note on reconfiguring tree linkages: trees can lock},
  year    = {2002},
  number  = {1–3},
  pages   = {293--297},
  volume  = {117},
  doi     = {10.1016/s0166-218x(01)00229-3},
}

@InProceedings{AKRW.cu.2003,
  author    = {Alt, Helmut and Knauer, Christian and Rote, Günter and Whitesides, Sue},
  booktitle = {Proc. 19th International Symposium on Computational Geometry (SoCG'03)},
  title     = {The complexity of (un)folding},
  year      = {2003},
  editor    = {Steven Fortune},
  pages     = {164--170},
  publisher = {ACM},
  doi       = {10.1145/777792.777818},
}

@Article{Kan.Msl.1992,
  author  = {Kantabutra, Vitit},
  journal = {Discrete \& Computational Geometry},
  title   = {Motions of a short-linked robot arm in a square},
  year    = {1992},
  number  = {1},
  pages   = {69--76},
  volume  = {7},
  doi     = {10.1007/bf02187825},
}

@InProceedings{WP.RCE.1996,
  author    = {Sue Whitesides and Naixun Pei},
  booktitle = {Proc. 2nd International Computing and Combinatorics Conference (COCOON'96)},
  title     = {On the {R}econfiguration of {C}hains ({E}xtended {A}bstract)},
  year      = {1996},
  editor    = {Jin{-}yi Cai and C. K. Wong},
  pages     = {381--390},
  publisher = {Springer},
  series    = {Lecture Notes in Computer Science},
  volume    = {1090},
  doi       = {10.1007/3-540-61332-3_172},
}

@Article{DBK.OBS.2012,
  author  = {Mark de Berg and Amirali Khosravi},
  journal = {International Journal of Computational Geometry and Application},
  title   = {{O}ptimal {B}inary {S}pace {P}artitions for {S}egments in the {P}lane},
  year    = {2012},
  number  = {03},
  pages   = {187--205},
  volume  = {22},
  doi     = {10.1142/s0218195912500045},
}

@InProceedings{JP.CRM.1985,
  author    = {Joseph, Deborah A. and Plantings, W. Harry},
  booktitle = {Proc. 1st International Symposium on Computational Geometry (SoCG)},
  title     = {On the {C}omplexity of {R}eachability and {M}otion {P}lanning {Q}uestions ({E}xtended {A}bstract)},
  year      = {1985},
  pages     = {62--66},
  publisher = {ACM},
  doi       = {10.1145/323233.323242},
}

@Article{HJW.MRA.1985,
  author  = {Hopcroft, John and Joseph, Deborah and Whitesides, Sue},
  journal = {SIAM Journal on Computing},
  title   = {{O}n the {M}ovement of {R}obot {A}rms in 2-{D}imensional {B}ounded {R}egions},
  year    = {1985},
  number  = {2},
  pages   = {315--333},
  volume  = {14},
  doi     = {10.1137/0214025},
}

@InCollection{Mit.GSP.2000,
  author    = {Joseph S. B. Mitchell},
  booktitle = {Handbook of Computational Geometry},
  publisher = {Elsevier},
  title     = {Geometric {S}hortest {P}aths and {N}etwork {O}ptimization},
  year      = {2000},
  chapter   = {15},
  editor    = {J{\"{o}}rg{-}R{\"{u}}diger Sack and Jorge Urrutia},
  pages     = {633--701},
  doi       = {10.1016/b978-044482537-7/50016-4},
}

@Article{GJ.SSP.1988,
  author  = {Dima Grigoriev and Nicolai N. Vorobjov Jr.},
  journal = {Journal of Symbolic Computation},
  title   = {Solving Systems of Polynomial Inequalities in Subexponential Time},
  year    = {1988},
  number  = {1–2},
  pages   = {37--64},
  volume  = {5},
  doi     = {10.1016/s0747-7171(88)80005-1},
}

@Book{dBCvKO.CGA.2008,
  author    = {de Berg, Mark and Cheong, Otfried and van Kreveld, Marc and Overmars, Mark},
  publisher = {Springer},
  title     = {Computational Geometry: Algorithms and Applications (3rd Edition)},
  year      = {2008},
  doi       = {10.1007/978-3-540-77974-2},
}

@Article{ADD+.Wnc.2025,
  author  = {Abel, Zachary and Demaine, Erik D. and Demaine, Martin L. and Eisenstat, Sarah and Lynch, Jayson and Schardl, Tao B.},
  journal = {Journal of Computational Geometry},
  title   = {Who needs crossings?: {N}oncrossing linkages are universal, and deciding (global) rigidity is hard},
  year    = {2025},
  volume  = {16},
  doi     = {10.20382/JOCG.V16I1A12},
}

@Book{Cygan2015,
  author    = {Marek Cygan and Fedor V. Fomin and Lukasz Kowalik and Daniel Lokshtanov and D{\'{a}}niel Marx and Marcin Pilipczuk and Michal Pilipczuk and Saket Saurabh},
  publisher = {Springer},
  title     = {Parameterized {A}lgorithms},
  year      = {2015},
  isbn      = {978-3-319-21274-6},
  doi       = {10.1007/978-3-319-21275-3},
}

@Article{EvdHM.SGN.2024,
  author  = {Jeff Erickson and Ivor {van der Hoog} and Tillmann Miltzow},
  journal = {SIAM Journal on Computing},
  title   = {Smoothing the Gap Between {NP} and {ER}},
  year    = {2024},
  number  = {6},
  pages   = {S20--102},
  volume  = {53},
  doi     = {10.1137/20M1385287},
}

@Book{PS.CGI.1985,
  author    = {Franco P. Preparata and Michael Ian Shamos},
  publisher = {Springer},
  title     = {Computational Geometry - {A}n Introduction},
  year      = {1985},
  doi       = {10.1007/978-1-4612-1098-6},
}

@InProceedings{Angelini2025,
  author    = {Angelini, Patrizio and Cornelsen, Sabine and Haase, Carolina and Hoffmann, Michael and Katsanou, Eleni and Montecchiani, Fabrizio and Steiner, Raphael and Symvonis, Antonios},
  booktitle = {Proc. 41st International Symposium on Computational Geometry (SoCG'25)},
  title     = {Geometric Realizations of Dichotomous Ordinal Graphs},
  year      = {2025},
  editor    = {Oswin Aichholzer and Haitao Wang},
  pages     = {9:1--9:16},
  publisher = {Schloss Dagstuhl – Leibniz-Zentrum für Informatik},
  series    = {LIPIcs},
  volume    = {332},
  doi       = {10.4230/LIPICS.SOCG.2025.9},
  journal   = {LIPIcs, Volume 332, SoCG 2025},
}

@InProceedings{LMM.CDG.2018,
  author    = {Anna Lubiw and Tillmann Miltzow and Debajyoti Mondal},
  booktitle = {Proc. 26th International Symposium on Graph Drawing and Network Visualization (GD'18)},
  title     = {The Complexity of Drawing a Graph in a Polygonal Region},
  year      = {2018},
  editor    = {Therese Biedl and Andreas Kerren},
  pages     = {387--401},
  publisher = {Springer},
  series    = {Lecture Notes in Computer Science},
  doi       = {10.1007/978-3-030-04414-5_28},
}

@Article{LMM.CDG.2022,
  author  = {Anna Lubiw and Tillmann Miltzow and Debajyoti Mondal},
  journal = {Journal of Graph Algorithms and Applications},
  title   = {The Complexity of Drawing a Graph in a Polygonal Region},
  year    = {2022},
  number  = {4},
  pages   = {421--446},
  volume  = {26},
  doi     = {10.7155/jgaa.00602},
}

@article{AMS.FCT.2024,
  author       = {Mikkel Abrahamsen and
                  Tillmann Miltzow and
                  Nadja Seiferth},
  title        = {Framework for {\(\exists\)} {\(\mathbb{R}\)}-Completeness of Two-Dimensional
                  Packing Problems},
  journal      = {TheoretiCS},
  volume       = {3},
  year         = {2024},
  doi          = {10.46298/THEORETICS.24.11},
}

\newpage

\appendix
\section{Details of the \NP-hardness Construction from \Cref{sec:linear-linkages}}
\label{app:details-hardness}
Below we provide a detailed construction of the \NP-hardness reduction used to establish \Cref{thm:hardness}.
\subsection{Notation for the \NP-hardness Construction}
\label{sec:details-hardness}
For a point $p = (x_p,y_p) \in {\mathbb{R}}^2$, we denote with $x(p) = x_p$ and $y(p) = y_p$ its $x$- and $y$-coordinate, respectively.
Furthermore, we denote with \EpsBall{(x,y)} a disc of radius $\varepsilon$ around $(x,y) \in \mathbb{R}^2$.
We number the quadrants of \EpsBall{(x,y)} as usual and indicate them with \QuadrantI{\EpsBall{(x,y)}} to \QuadrantIV{\EpsBall{(x,y)}}.
The expressions \EpsBallTop{(x,y)}, \EpsBallRight{(x,y)}, \EpsBallBottom{(x,y)}, and \EpsBallLeft{(x,y)} denote the point in the disc \EpsBall{(x,y)} that is $\varepsilon' \coloneqq \varepsilon / 2$ to the top, right, bottom, left of the disc center, respectively, i.e., the points $(x, y + \varepsilon')$, $(x + \varepsilon', y)$, $(x, y - \varepsilon')$, and $(x - \varepsilon', y)$, respectively; see also \Cref{fig:eps-ball}.
\begin{figure}
	\centering
	\includegraphics[page=1]{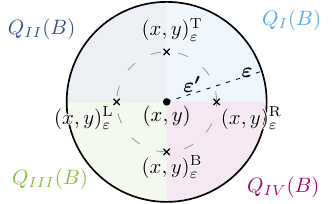}
	\caption{A visualization of \EpsBall{(x,y)}. Crosses indicate the four points \EpsBallTop{(x,y)}, \EpsBallRight{(x,y)}, \EpsBallBottom{(x,y)}, and \EpsBallLeft{(x,y)} and colors the quadrants of \EpsBall{(x,y)}.}
	\label{fig:eps-ball}
\end{figure}

To facilitate the following detailed description of the gadget, we assume that the lower-left corner of each gadget is placed at the origin $(0, 0)$.
However, this is without loss of generality, as our constructions are agnostic to translations and rotations in the plane.
We say that a configuration $\configuration$ of \linkage \emph{passes through} a gadget $F$ if there exists a vertex $v \in V$ such that $\configuration(v)$ is inside the part of the polygonal domain $P$ that corresponds to $F$.
For the remainder of this paper, we work under the following assumption:
\begin{assumption}
	\label{ass:not-turn-around}
	Let \linkage be a unit-length linear linkage with a configuration $\configuration$ of \linkage.
	If \configuration passes through a gadget $F$ then it does so by placing no redundant vertex inside $F$.
\end{assumption}
Intuitively, \Cref{ass:not-turn-around} ensures that \configuration does not ``waste'' vertices of \linkage inside $F$ by, e.g., leaving the gadget at the entrance it entered it.
As seen in the main text, we set the length of \linkage, i.e., the number of vertices in $G$, such that \Cref{ass:not-turn-around} must be fulfilled in any configuration \configuration that starts at $s$ and ends at $t$.

To facilitate the presentation of the gadgets and the arguments in the correctness proofs, we allow configurations for the remainder of the \NP-hardness proof to place vertices (and edges) of \linkage on the boundary of the polygonal domain $P$.
This is without loss of generality since we can always slightly enlarge/shrink the polygonal domain to ensure that the vertices are not on the boundary while maintaining correctness.
In the following, we sometimes use $V$ and $E$ as a shorthand for $V(G)$ and $E(G)$, respectively, if there is no risk of confusion.

\subsection{Detailed Construction of the Edge Gadget}
\label{sec:details-hardness-edge-gadget}
Recall that there are three different variants of the edge gadget: Tunnels, bends, and shifters.
In the following, we discuss each variant individually.

\begin{figure}
	\centering
	\includegraphics[page=2]{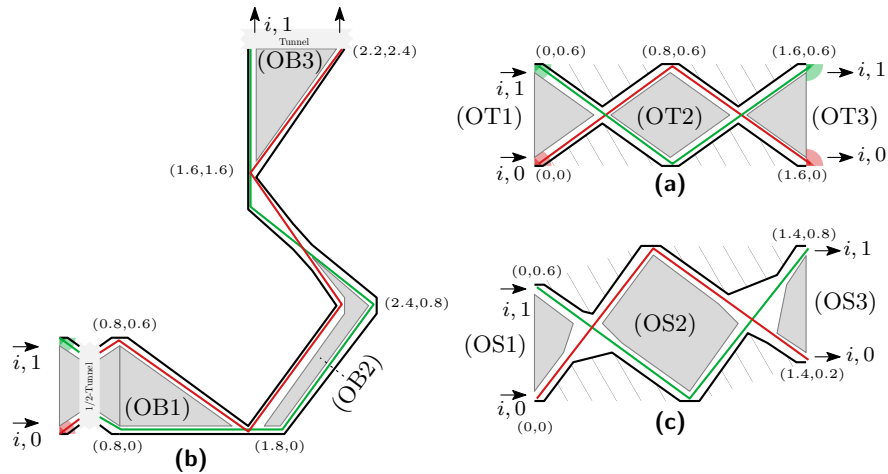}
	\caption{\shortLong{Labeled version of \Cref{fig:edge-gadget} that shows a}{A}~\textbf{\textsf{(a)}} tunnel,~\textbf{\textsf{(b)}} bend,~\textbf{\textsf{(c)}} and shifter with possible configurations through them.
		We hatch in this and the following figures the outside of the polygonal domain if there is risk of confusion.}
	\label{fig:edge-gadget-details}
\end{figure}

\subsubsection{Tunnels.}
\label{sec:details-hardness-tunnel}
We discuss tunnels that run horizontally from left to right; other axis-aligned tunnels can be constructed by rotating or inverting the following construction.
A tunnel $T$ is a sawtooth-like shaped part of the polygonal domain $P$ of height $0.6$ and width $1.6$ containing diamond-shaped obstacles.
We can create a long tunnel by piecing smaller ones together.

We start with describing the section of the polygonal domain that composes the lower-half of the tunnel.
The first part of the lower-half is composed of the vertices $(0,0)$, \EpsBallRight{(0,0)}, \EpsBallBottom{(0.4,0.3)}, \EpsBallLeft{(0.8, 0)}, and $(0.8,0)$.
To create the second part of the lower-half of the tunnel, we translate the above-introduced structure by $0.8$ to the right. %
We connect both parts together by unifying vertices with the same coordinates to obtain the lower-half of the tunnel.
To construct the upper-half of the tunnel, we first vertically mirror the above construction and then translate it vertically up by $0.6$.
Next, we describe the three obstacles inside a tunnel; consider once more \Cref{fig:edge-gadget-details} for an illustration.
The first obstacle, marked (OT1) %
in \Cref{fig:edge-gadget-details}a, is a triangle formed by the vertices \EpsBallTop{(0,0)}, \EpsBallLeft{(0.4,0.3)}, and \EpsBallBottom{(0,0.6)}.
The second obstacle~(OT2), is a rhombus formed by the vertices \EpsBallRight{(0.4,0.3)}, \EpsBallTop{(0.8,0)}, \EpsBallLeft{(1.2,0.3)}, and \EpsBallBottom{(0.8,0.6)}.
Finally, the third obstacle~(OT3) can be obtained by mirroring obstacle~(OT1) horizontally and moving it to the right by 1.6.
To complete the construction of the tunnel, we still have to place the entrance and exit areas.
Assuming that the tunnel is later related to the variable $x \in \mathcal{X}$, we create the following areas; recall also \Cref{fig:edge-gadget-details}a.
\begin{align*}
	\NegativeEntrance{x}(T) &= \QuadrantI{\EpsBall{(0,0)}}
	&\quad\quad
	\NegativeExit{x}(T) &= \QuadrantI{\EpsBall{(1.6,0)}}\\
	\PositiveEntrance{x}(T) &= \QuadrantIV{\EpsBall{(0, 0.6)}})
	&\quad\quad
	\PositiveExit{x}(T) &= \QuadrantIV{\EpsBall{(1.6,0.6)}})
\end{align*}
Using the above notion, we can now state the main properties of a straight tunnel.
\begin{lemma}
	\label{lem:edge-gadget-tunnel}
	Let \linkage be a unit-length linear linkage with a configuration $\configuration$ that passes through a tunnel $T$ with an entrance and exit pair for the variable $x \in \mathcal{X}$.
	If there exists a vertex $v \in V$ s.t.\ $\configuration(v) \in \NegativeEntrance{\lnot x}(T)$~($\PositiveEntrance{x}(T)$), then we have $\configuration(v') \in \NegativeExit{\lnot x}(T)$~($\PositiveExit{x}(T)$) for a $v' \in V$.
	Furthermore, for any point $p \in \NegativeStart{\lnot x}(T)$~($\PositiveStart{x}(T)$) there exists a configuration $\configuration'(\linkage)$ and two vertices $v, v' \in V$ with $\configuration'(v) = p$ and $\configuration'(v') \in \NegativeEnd{\lnot x}(T)$~($\PositiveEnd{x}(T)$).
\end{lemma}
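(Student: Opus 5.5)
The plan has two parts: a \emph{soundness} part (the first sentence of the lemma) and a \emph{completeness} part (the second sentence). I treat only the case $\configuration(v) \in \NegativeEntrance{\lnot x}(T) = \QuadrantI{\EpsBall{(0,0)}}$. The case $\PositiveEntrance{x}(T)$ follows by mirroring the tunnel vertically about $y=0.3$, which swaps the lower and upper halves together with their areas.

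For completeness, fix $p \in \NegativeStart{\lnot x}(T)$. This is a point within distance $\varepsilon/2$ of $(0,0)$ in the first quadrant. I would place the next vertex near $(0.8,0.6)$, then near $(1.6,0)$, exactly as in the intended zig-zag. The distance from $(0,0)$ to $(0.8,0.6)$ is exactly $1$, so $\Vert p-(0.8,0.6)\Vert$ differs from $1$ by at most $\varepsilon/2$. Hence the unit circle around $p$ meets the boundary region near $\EpsBall{(0.8,0.6)}$ at a point $q$ with $\Vert q-(0.8,0.6)\Vert \le \varepsilon/2$; concretely, I would intersect this circle with the segment of the upper boundary through $(0.8,0.6)$. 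Repeating the argument from $q$ gives a point $r$ with $\Vert r-(1.6,0)\Vert\le\varepsilon$. A short triangle-inequality computation should show that $r$ actually lies in the inscribed triangle $\NegativeEnd{\lnot x}(T)$ of radius $\varepsilon/2$, possibly after choosing $q$ carefully (e.g., on the side that compensates the deviation of $p$). Finally, I check that both segments $pq$ and $qr$ avoid the obstacles (OT1)--(OT3). They run essentially along the diagonal channels between the obstacle tips \EpsBallLeft{(0.4,0.3)}, \EpsBallRight{(0.4,0.3)}, and so on, so this is a direct coordinate check.

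For soundness, suppose $\configuration(v)\in\QuadrantI{\EpsBall{(0,0)}}$. By \Cref{ass:not-turn-around}, the next vertex $\configuration(v_+)$ must advance into $T$. I then find every location of the tunnel (minus obstacles) that is at distance exactly $1$ from a point of $\EpsBall{(0,0)}$ and is reachable by a segment avoiding the obstacles. The sawtooth walls and the obstacles (OT1) and (OT2) block every direction except the narrow diagonal channel toward $(0.8,0.6)$. A segment in any other direction is either shorter than $1$ before it hits the boundary, or it must cross an obstacle; in particular, it cannot reach the lower pocket near $(0.8,0)$, because the distance is $0.8<1$ and (OT2) blocks going past it. This forces $\configuration(v_+)$ into an $O(\varepsilon)$-neighbourhood of $(0.8,0.6)$, and the same argument, applied to the second diagonal channel bounded by (OT2) and (OT3), forces the following vertex near $(1.6,0)$. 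I would then show that this vertex lands in $\QuadrantI{\EpsBall{(1.6,0)}}$ itself and not merely in an $O(\varepsilon)$-neighbourhood. For this I use that the channel walls are the lines through \EpsBallBottom{(0.4,0.3)}, \EpsBallLeft{(0.8,0)} and similar points: they are parallel to the ideal segments, at perpendicular distance $\Theta(\varepsilon)$, so the reachable region at the channel's end is exactly the quadrant bounded by the tunnel wall.

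I expect the main obstacle to be this last error-propagation step. An arbitrary point in the full $\varepsilon$-quadrant at entry must be shown to lead to a point again inside the $\varepsilon$-quadrant at exit, without the error growing. If needed, I would argue that the corridor geometry alone confines the endpoint, since the walls and obstacle tips are placed at offsets $\varepsilon/2$. This makes the confinement independent of where in the entrance quadrant the chain started, and avoids propagating the entry error through the chain.
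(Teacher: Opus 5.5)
Your completeness half uses the same construction as the paper: put the next vertex on the wall $y=0.6$ and the one after on $y=0$. Your soundness half correctly identifies the diagonal channels. The gap is in the step you flag as the main obstacle, and the fix you propose for it would fail.

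\textbf{Soundness.} The corridor geometry alone cannot force the last vertex into $\QuadrantI{\EpsBall{(1.6,0)}}$. Near $(1.6,0)$ the free space is bounded on the left only by the tip $(1.6,\varepsilon/2)$ of (OT3), because the lower wall is flat from $(1.6-\varepsilon/2,0)$ to $(1.6,0)$. No wall enforces $x\ge 1.6$. For example, take the free point $q=(0.8-\delta,0.6-\eta)$ with $0<\eta\ll\delta\ll\varepsilon$. A unit segment from $q$ that is a slightly shifted copy of the ideal segment stays inside the channel. It ends at a point with $x<1.6$ and $y>0$, i.e., in $\QuadrantII{\EpsBall{(1.6,0)}}$. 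So whether the exit quadrant is reached genuinely depends on where the chain started. Your plan to make the confinement ``independent of where in the entrance quadrant the chain started'' therefore cannot succeed.

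The missing idea is the one the paper's proof uses when it says $\QuadrantIII{\EpsBall{(0.8,0.6)}}$ is impossible due to unit length. It is a monotonicity argument based on $(0.8,0.6)$ being a unit vector.
\begin{itemize}
\item Let $p_x,p_y\ge 0$ and let the next vertex $q$ satisfy $q_y\le 0.6$, which holds because the upper wall near $(0.8,0.6)$ is the line $y=0.6$.
\item Then $q_y-p_y\le 0.6$ forces $q_x-p_x\ge 0.8$, so $q_x\ge 0.8$. Hence $q\in\QuadrantIV{\EpsBall{(0.8,0.6)}}$ once you know $q$ is $\varepsilon$-close to $(0.8,0.6)$.
\item Symmetrically, $r_y\ge 0$ forces $r_x\ge q_x+0.8\ge 1.6$.
\end{itemize}
Your corridor argument supplies the proximity, but you still need to check that its constant gives radius at most $\varepsilon$. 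You must also record the intermediate quadrant at $(0.8,0.6)$, not just a neighbourhood, or the second step does not go through.

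\textbf{Completeness.} The triangle-inequality reasoning does not give $\|q-(0.8,0.6)\|\le\varepsilon/2$. Knowing only $\bigl|\|p-(0.8,0.6)\|-1\bigr|\le\varepsilon/2$ yields roughly $1.25\cdot\varepsilon/2$. What works is the triangular shape of the start area, $p_x+p_y\le\varepsilon/2$. With $q$ on $y=0.6$ you get
$$0\le q_x-0.8=p_x+\sqrt{0.64+1.2p_y-p_y^2}-0.8\le p_x+0.75p_y\le\varepsilon/2.$$
Then $r=q+(0.8,-0.6)$ lies on $y=0$ with exactly the same offset. So no careful choice of $q$ is needed, and the error does not grow from the first step to the second.
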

\begin{proof}
	Let $\configuration$ be a configuration of \linkage such that $\configuration(v) = (x, y)$ holds with $(x, y) \in \NegativeEntrance{\lnot x}(T)$.
	Furthermore, let $v = v_i$ for some $1 \leq i \leq n$.
	We now show the first part of the statement in two steps.
	First, we can see that $\configuration(v_{i + 1}) \notin \QuadrantIV{\EpsBall{(0.8, 0.6)}}$ is not possible:
	$\configuration(v_{i + 1}) \in \QuadrantI{\EpsBall{(0.8, 0.6)}} \cup \QuadrantII{\EpsBall{(0.8, 0.6)}}$ implies that the edge $v_iv_{i + 1}$ crosses the polygonal domain and $\configuration(v_{i + 1}) \in \QuadrantIII{\EpsBall{(0.8, 0.6)}}$ is impossible due to $v_iv_{i + 1}$ having length one.
	Symmetric arguments show $\configuration(v_i) \in \PositiveEntrance{x}(T)$ implies $\configuration(v_{i + 1}) \in \QuadrantI{\EpsBall{(0.8, 0)}}$.
	Second, by applying the above arguments a second time proves the first part of the lemma.	
	
	We now show that for any point $(x, y) \in \NegativeStart{\lnot x}(T)$, there exists a configuration $\configuration'(\linkage)$ with $\configuration'(v_i) = (x,y)$ and $\configuration'(v_{i + 2}) \in \NegativeEnd{\lnot x}(T)$.
	See \Cref{fig:edge-gadget-details-configuration} for a visualization of the following arguments.
	By the construction of the tunnel, this trivially holds for $y = 0$.
	To that end, we can set $\configuration'(v_{i + 1}) = (0.8 + x,0.6)$ and $\configuration'(v_{i + 2}) = (1.6 + x,0)$; see \Cref{fig:edge-gadget-details-configuration}a.
	For a general point $(x, y) \in \NegativeStart{\lnot x}(T)$, we use the above arguments to construct a configuration $\configuration''$ with $\configuration''(v_{i}) = (x, 0)$ and $\configuration''(v_{i + 1}) = (0.8 + x, 0.6)$.
	We then move both endpoints simultaneously up until $\configuration''(v_i) = (x, y)$.
	By rotating $\configuration''(v_{i + 1})$ clockwise along the unit circle centered at $\configuration''(v_i)$, we eventually hit a point $(0.8 + x', 0.6)$.
	This event will occur before the edge $v_iv_{i + 1}$ intersects obstacle~(OT2) and results in a configuration $\configuration'$ with $\configuration'(v_{i + 1}) = (0.8 + x', 0.6)$; see the dashed and solid configuration in \Cref{fig:edge-gadget-details-configuration}b.
	As $x \leq \varepsilon'$, we derive that $x' \leq \varepsilon'$ holds true, which implies that we can set $\configuration'(v_{i + 2}) = (1.6 + x',0)$ as above.
	Similar to before, we note that we can use symmetric arguments to show that for any point $(x, y) \in \PositiveStart{x}(T)$, there exists a configuration $\configuration'(\linkage)$ with $\configuration'(v_i) = (x,y)$ and $\configuration'(v_{i + 2}) \in \PositiveEnd{x}(T)$.
\end{proof}
\begin{figure}
	\centering
	\includegraphics[page=3]{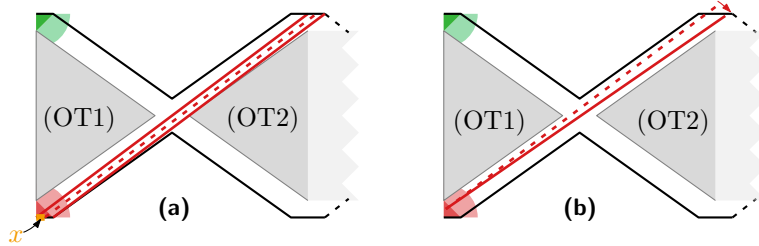}
	\caption{A visualization of the procedure to find the configuration $\configuration'$ with $\configuration'(v) = (x, y) \in \NegativeStart{\lnot x}(T)$.~\textbf{\textsf{(a)}} For a point $(x, 0)$, we can shift the ``standard'' configuration for $(0,0)$ horizontally by $x$, indicated in orange.~\textbf{\textsf{(b)}} For the general point $(x, y)$, we use the approach in (a) to find an initial configuration that still has to be shifted upwards (see the dashed configuration) and rotated accordingly.}
	\label{fig:edge-gadget-details-configuration}
\end{figure}
We observe that there cannot exist a configuration $\configuration$ with 
\begin{align*}
	\configuration(v_i) \notin \QuadrantI{\EpsBall{(0,0)}} \cup \QuadrantIV{\EpsBall{(0, 0.6)}}.
\end{align*}
In particular, if $\configuration(v_i) \in \QuadrantI{\Ball{(0,0)}{2\varepsilon}} \setminus \QuadrantI{\EpsBall{(0,0)}}$ holds, then for any placement of $v_{i +1}$ either the edge $v_iv_{i+1}$ intersects obstacle~(OT1) or obstacle~(OT2) or $v_{i + 1}$ lies outside the polygonal domain or inside obstacle~(OT2).
The same applies for $\configuration(v_i) \in \QuadrantIV{\Ball{(0,0.6)}{2\varepsilon}} \setminus \QuadrantIV{\EpsBall{(0,0)}}$, allowing us to conclude the following.
\begin{observation}
	\label{obs:not-outside-quadrant}
	There cannot exist a configuration $\configuration$ of \linkage that passes through a tunnel $T$ with $\configuration(v) \in \QuadrantI{\Ball{(0,0)}{2\varepsilon}} \setminus \QuadrantI{\EpsBall{(0,0)}}$ or $\configuration(v) \in \QuadrantIV{\Ball{(0,0.6)}{2\varepsilon}} \setminus \QuadrantIV{\EpsBall{(0,0.6)}})$ for some $v \in V$.
\end{observation}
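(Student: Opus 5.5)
We need to show that no vertex $v$ of a configuration passing through the tunnel $T$ lies in $\QuadrantI{\Ball{(0,0)}{2\varepsilon}} \setminus \QuadrantI{\EpsBall{(0,0)}}$. The case $\QuadrantIV{\Ball{(0,0.6)}{2\varepsilon}} \setminus \QuadrantIV{\EpsBall{(0,0.6)}}$ then follows by mirroring the tunnel vertically about $y=0.3$.

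\textbf{Setup.} Suppose $\configuration(v_i)=p$ lies in the annular quadrant region $A \coloneqq \QuadrantI{\Ball{(0,0)}{2\varepsilon}} \setminus \QuadrantI{\EpsBall{(0,0)}}$. By \Cref{ass:not-turn-around}, $v_i$ must have a neighbour, say $v_{i+1}$, placed further into $T$ (not back through the entrance). Let $q=\configuration(v_{i+1})$. Then $\Vert q-p\Vert_2=1$, $q\in P$, and the open segment $pq$ avoids obstacles~(OT1) and~(OT2) as well as the outer boundary.

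\textbf{Where $q$ can lie.} Since $p$ is within $2\varepsilon$ of the origin and the tunnel has height $0.6$, $q$ lies on a unit circle around $p$. Restricted to the tunnel region within horizontal range $[0,1.6]$, this circle can only meet $P$ near the top boundary at $x\approx0.8$: the tunnel's region between the boundary and the obstacles consists of thin corridors of width $\BigO{\varepsilon}$ around the segments $(0,0)$--$(0.4,0.3)$--$(0.8,0.6)$ etc. So either $q$ is near $(0.8,0.6)$ inside the thin corridor around the upper sawtooth tip, or $q$ is outside $P$ or inside~(OT2).

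\textbf{Key geometric step.} The corridor from $p$ toward $(0.8,0.6)$ is bounded below by the lower boundary vertex \EpsBallBottom{(0.4,0.3)} and above by the tip \EpsBallLeft{(0.4,0.3)} of~(OT1); similarly at $(0.8,0.6)$ it is bounded by the tip \EpsBallBottom{(0.8,0.6)} of~(OT2) and the upper boundary. Thus the segment $pq$ must pass through two pinch gaps of size $\varepsilon$ (at $x=0.4$ and near $x=0.8$) while having length exactly $1=\sqrt{0.8^2+0.6^2}$. I would parametrize $pq$ by its direction angle $\theta$ and write linear inequalities: passing above \EpsBallBottom{(0.4,0.3)} and below \EpsBallLeft{(0.4,0.3)}, with $q$ lying left of/below \EpsBallBottom{(0.8,0.6)} and inside the top boundary. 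These define an allowed set for $p$ that is, to first order in $\varepsilon$, exactly the quadrant of radius $\varepsilon$: moving $p$ outward by more than $\varepsilon$ along the corridor direction forces $q$ beyond the top boundary (the length overshoots), while moving $p$ perpendicular to the corridor by more than about $\varepsilon/2$ makes $pq$ hit the tip of~(OT1) or the lower boundary tip. In the remaining directions, $p$ leaves the quadrant, i.e.\ exits $P$ or enters~(OT1).

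\textbf{Main obstacle.} The hard part is making this exact rather than first-order: showing that the feasible set for $p$ is contained in $\QuadrantI{\EpsBall{(0,0)}}$ for all small $\varepsilon$, not just asymptotically. I would handle it by first checking the extreme points of $A$ (its corners on the axes and on the arcs) and then using convexity of the constraints in $(p,\theta)$ to argue that infeasibility on the boundary of $A$ extends to all of $A$. If exact containment fails, I would shrink the pinch offsets $\varepsilon'$ slightly, which the paper's construction permits.
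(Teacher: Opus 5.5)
\paragraph{Verdict.} Your strategy is the paper's own. The paper only asserts that for $p\in A$ every placement of $v_{i+1}$ either makes $v_iv_{i+1}$ hit (OT1) or (OT2), or puts $v_{i+1}$ outside $P$ or inside (OT2). Your idea that $pq$ must thread the gap at $(0.4,0.3)$ and the gap near $(0.8,0.6)$ is the right way to make that precise. However, two intermediate claims are wrong, and the ``main obstacle'' you identify comes from a third one.

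\paragraph{Two wrong intermediate claims.} First, the unit circle around $p$ does not meet $P$ only near $(0.8,0.6)$. It also crosses the corridor from $(0.8,0)$ to $(1.2,0.3)$ near $(0.99,0.14)$, which lies in $P$. That placement is excluded only because $pq$ leaves $P$ below the peak $(0.4,0.3-\varepsilon/2)$ of the lower boundary. So the case analysis has to be driven by the segment, not by where $q$ lies. Second, your condition at the second gap is misstated: points just below the tip $(0.8,0.6-\varepsilon/2)$ lie inside (OT2). The correct condition is that either $x(q)\le 0.8$, or $pq$ crosses the line $x=0.8$ between that tip and $y=0.6$. This condition is what stops $q$ from sliding into the corridor that descends from $(0.8,0.6)$ towards $(1.2,0.3)$. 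That corridor is inside $P$, so ``the length overshoots, hence $q$ leaves $P$'' does not hold without it.

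\paragraph{The feasible set is smaller, so exactness is not an obstacle.} Your third claim, that the feasible set for $p$ is ``to first order exactly the quadrant,'' is false: the set is considerably smaller. Put $U=0.8x+0.6y$ and $N=-0.6x+0.8y$.
\begin{itemize}
\item The strip of $P$ near the origin satisfies $|N|\le 0.4\varepsilon$, so $p\in A$ forces $U(p)>\sqrt{0.84}\,\varepsilon>0.9\varepsilon$.
\item The gap at $(0.4,0.3)$ forces the angle $\theta$ between $pq$ and $(0.8,0.6)$ to be $O(\varepsilon)$. Hence $U(q)=U(p)+\cos\theta\ge U(p)+1-O(\varepsilon^2)$.
\item The corrected second-gap condition, together with the upper boundary (top edge $y=0.6$, then descending with slope about $-0.75$), gives $U(q)\le 1+0.43\varepsilon+O(\varepsilon^2)$.
\end{itemize}
Combining these, $U(p)\le 0.43\varepsilon+O(\varepsilon^2)$, which contradicts $U(p)>0.9\varepsilon$. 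The margin is of order $\varepsilon$, so it absorbs all second-order terms once $\varepsilon$ is small. Incidentally, this shows that parts of $Q_I(B_\varepsilon(0,0))$ are infeasible too.

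\paragraph{Steps you can drop.} You do not need the corner-checking and convexity step, and it would not work as stated. The constraints are not jointly convex in $(p,\theta)$, and checking finitely many corners of $A$ says nothing about the rest of $A$. You also do not need the fallback of shrinking $\varepsilon'$; that would prove the claim for a modified gadget, not the one in the paper.
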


\subsubsection{Bends.}
\label{sec:details-hardness-bends}
The bend gadget, as the name suggests, allows us to perform 90° bends to the left.
We visualize such a bend in \Cref{fig:edge-gadget-details}b to facilitate the following description.
Note that variations of 90° bends are sufficient for our purposes, which can be achieved by rotating and mirroring the following structure.

As already indicated in \Cref{sec:linear-linkages}, the main difficulty of a bend, in contrast to a tunnel, is to ensure that any configuration uses the same number of edges to perform the bend, no matter in which equivalence class it is.
In particular, constructing a bend in a na\"ive way, one equivalence class has to place one vertex less inside the gadget.
The following construction does not have this issue, at the cost of requiring a larger area to perform the turn.

The first and third part of the bend consist of the first half of a tunnel and a complete tunnel, respectively.
Those will help us show correctness of the bend.
The actual bend is performed in the second part of the gadget, which is formed by the vertices $(0.8, 0)$, \EpsBallRight{(0.8, 0)}, \EpsBallRight{(1.8, 0)}, \EpsBallBottom{(2.4, 0.8)}, \EpsBallTop{(2.4, 0.8)}, \EpsBallRight{(2, 1.1)}, \EpsBallTop{(1.9, 1.2)}, \EpsBallRight{(1.6, 1.6)}, and $(2.2, 2.4)$.
Similarly, its upper-half consists of the vertices $(0.8, 0.6)$, \EpsBallRight{(0.8, 0.6)}, \EpsBallTop{(1.6, 0)}, \EpsBallLeft{(2.2, 0.8)}, \EpsBallBottom{(2, 1.1)}, \EpsBallLeft{(1.9, 1.2)}, $(1.6,1.4)$, and $(1.6, 2.4)$.
Note that in the second part the polygonal domain performs a left turn on an area that is slightly wider than 0.6.
However, while this is needed to ensure that any configuration ``draws'' the same number of edges inside the gadget no matter in which equivalence class it is, the second part has at its ends still a height/width of $0.6$.
Thus, we can attach on its ends a tunnel.
Still, given that its area is larger than the one of a tunnel, we have to make sure that there is sufficient space to place the bend.
What is missing to complete \Cref{fig:edge-gadget-details}b are the obstacles~(OB1)--(OB3).
A bend has, apart from the obstacles present in the tunnel attached to the left side of the second part, three obstacles in its interior.
The first obstacle~(OB1) is formed by the vertices \EpsBallTop{(0.8,0)}, \EpsBallTop{(1.5, 0)}, and \EpsBallBottom{(0.8, 0.6)}.
The second obstacle~(OB2) is formed by the vertices \EpsBallTop{(1.7,0)}, \EpsBallTop{(1.8,0)}, \EpsBallLeft{(2.4, 0.8)}, $(2, 1.1)$, \EpsBallTop{(2.2, 0.8)}, and \EpsBallBottom{(2.2, 0.8)}.
Finally, the third obstacle~(OB3) is the triangle \EpsBallRight{(1.6,1.7)}, \EpsBallLeft{(2.2,2.4)}, and \EpsBallRight{(1.6, 2.4)}.

Similar to a tunnel, we can define the following areas for a bend $B$.
\begin{align*}
	\NegativeEntrance{x}(B) &= \QuadrantI{\EpsBall{(0,0)}}
	&\quad\quad
	\NegativeExit{x}(B) &= \QuadrantII{\EpsBall{(2.2,4)}}\\
	\PositiveEntrance{x}(B) &= \QuadrantIV{\EpsBall{(0, 0.6)}}
	&\quad\quad
	\PositiveExit{x}(B) &= \QuadrantI{\EpsBall{(1.6,4)}}
\end{align*}%
We now state the main properties of a bend.
\begin{lemma}
	\label{lem:edge-gadget-bend}    
	Let \linkage be a unit-length linear linkage with a configuration $\configuration$ that passes through a bend $B$ with an entrance and exit pair for the variable $x \in \mathcal{X}$.
	If there exists a vertex $v \in V$ s.t.\ $\configuration(v) \in \NegativeEntrance{\lnot x}(B)$~($\PositiveEntrance{x}(B)$), then we have $\configuration(v') \in \NegativeExit{\lnot x}(B)$~($\PositiveExit{x}(B)$) for a $v' \in V$.
	Furthermore, for any point $p \in \NegativeStart{\lnot x}(B)$~($\PositiveStart{x}(B)$) there exists a configuration $\configuration'(\linkage)$ and two vertices $v, v' \in V$ with $\configuration'(v) = p$ and $\configuration'(v') \in \NegativeEnd{\lnot x}(B)$~($\PositiveEnd{x}(B)$).
\end{lemma}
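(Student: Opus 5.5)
We prove \Cref{lem:edge-gadget-bend} the same way as \Cref{lem:edge-gadget-tunnel}: follow the configuration vertex by vertex through the bend, and at each step restrict the possible positions of the next vertex to a small quadrant of an $\varepsilon$-disc. We split the bend into its three parts. For the first part (the half tunnel) and the third part (the full tunnel), we reuse the arguments from the proof of \Cref{lem:edge-gadget-tunnel} and \Cref{obs:not-outside-quadrant}. These already show the following: if $\configuration(v_i)\in\NegativeEntrance{\lnot x}(B)$, then $v_{i+1}$ lies in $\QuadrantIV{\EpsBall{(0.8,0.6)}}$. Symmetrically, if $\configuration(v_i)\in\PositiveEntrance{x}(B)$, then $v_{i+1}$ lies in $\QuadrantI{\EpsBall{(0.8,0)}}$. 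So the remaining work is confined to the second part, where the polygon turns.

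\textbf{Necessity.} We treat the two equivalence classes separately. For the red class, starting near $(0.8,0.6)$, the unit-length edge must clear obstacle~(OB1) and cannot enter the region of~(OB2). This leaves a single small region near $(1.6,0)$, i.e., near \EpsBallTop{(1.6,0)} between (OB1) and (OB2), for $v_{i+2}$. We check this by the same three-way argument as in the tunnel proof: every other candidate quadrant either makes the edge cross the boundary of $P$ or an obstacle, or is at distance $\neq 1$. Repeating the argument, the next vertices are forced near $(2.2,0.8)$ and then to the vertical tunnel start near $(1.6,1.6)$ or $(2.2,2.4)$. From there, the third-part tunnel argument finishes at $\NegativeExit{x}(B)$.

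For the green class, starting near $(0.8,0)$, the obstacles force one edge to be almost horizontal, to about $(1.8,0)$, and one almost vertical, to about $(1.8,1)$ or $(2,1.1)$. The pocket around $(1.9,1.2)$, \EpsBallBottom{(2,1.1)} then leaves only the position near $(1.6,1.6)$ for the following vertex. We verify that both classes use the same number of vertices inside the second part; by \Cref{ass:not-turn-around} no vertex is wasted. Throughout, we bound deviations: if a vertex lies in an $\varepsilon$-quadrant, the unit circle around it meets the allowed region only in an $O(\varepsilon)$ neighbourhood of the intended point. This step uses the same small-turning-angle estimate as in the clause-gadget discussion: a unit segment with endpoints in two thin corridors must have its direction within $2\arcsin(2\alpha\varepsilon)$ of the intended one.

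\textbf{Sufficiency.} For a given $p$ in the start area, we construct $\configuration'$ exactly as in \Cref{lem:edge-gadget-tunnel}. First we take the canonical configuration, with vertices at the listed corner points, shifted by the horizontal offset of $p$. Then we move the first vertex vertically to $p$ and rotate each subsequent vertex on its unit circle until it hits the intended boundary line. We must check that the accumulated offset stays below $\varepsilon'$ after each rotation, so that the last vertex lands in the end area, and that no rotated edge touches an obstacle.

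I expect the main obstacle to be the green-class detour in the second part. Proving that it is forced, so that the configuration cannot shortcut with fewer vertices or slip into the red corridor, requires explicit distance computations with the concrete coordinates $(1.7,0),(1.8,0),(2.4,0.8),(2,1.1),(1.9,1.2)$. These computations must show that every unit-length edge from each forced quadrant leaves exactly one feasible target region. I would do this first by direct case analysis on quadrants, since the constants are small rational numbers and the inequalities can be checked numerically for sufficiently small $\varepsilon$.
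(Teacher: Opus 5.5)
\textbf{Where the proposal breaks: the green class.} Your overall architecture is the paper's. You chase the configuration vertex by vertex through $\varepsilon$-quadrants at the sharp corners of $P$, use \Cref{lem:edge-gadget-tunnel} and \Cref{obs:not-outside-quadrant} for the attached half tunnel and full tunnel, and build $\configuration'$ by re-pinning vertices on the boundary. Your red chain (near $(1.6,0)$, $(2.2,0.8)$, $(1.6,1.6)$, then $(2.2,2.4)$) is also right, apart from the ``or''. The gap is the green class, which is exactly the part that distinguishes a bend from a tunnel.

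The route you commit to is not realizable. You go $(0.8,0)\to(1.8,0)$, then take an almost vertical edge to about $(1.8,1)$ or $(2,1.1)$, then reach a vertex near $(1.6,1.6)$. But $\Vert(2,1.1)-(1.8,0)\Vert_2=\sqrt{1.25}$ and $\Vert(1.6,1.6)-(1.8,1)\Vert_2=\sqrt{0.4}$, so these are not unit edges. Moreover, a near-vertical edge at $x\approx 1.8$ first runs through (OB2). It then crosses the inner boundary segment from $\EpsBallTop{(1.6,0)}$ to $\EpsBallLeft{(2.2,0.8)}$ and leaves $P$. More seriously, $(1.6,1.6)$ is the \emph{red} checkpoint. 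If green configurations were forced there, the next edge would go to the red entrance $(2.2,2.4)$ of the vertical tunnel, and the configuration would leave through the red exit. That contradicts the statement you are proving, so your quadrant case analysis cannot be completed along this route.

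\textbf{What is actually forced.} The domain forces the following chain for the green class.
\begin{itemize}
\item $\configuration(v_{i+2})\in\QuadrantI{\EpsBall{(1.8,0)}}$: the horizontal edge, which passes below (OB2) because of the boundary vertex $\EpsBallTop{(1.6,0)}$.
\item $\configuration(v_{i+3})\in\QuadrantII{\EpsBall{(2.4,0.8)}}$: the diagonal on the other side of (OB2) from the red one.
\item $\configuration(v_{i+4})\in\QuadrantI{\EpsBall{(1.6,1.4)}}$: back through the narrow passage near $(2,1.1)$ and $(1.9,1.2)$, where the red and green diagonals cross.
\item Only then the almost vertical edge, left of (OB3), to $v_{i+5}$ near $(1.6,2.4)$, the green entrance of the vertical tunnel.
\end{itemize}
Both classes therefore use $v_{i+1},\dots,v_{i+5}$ in the second part and reach the exit at $v_{i+7}$.

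At the last step, (OB3) only confines $v_{i+5}$ to a $2\varepsilon$-quadrant. You need \Cref{obs:not-outside-quadrant} to shrink this to the $\varepsilon$-quadrant before applying \Cref{lem:edge-gadget-tunnel}.

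This is also why the clause-gadget turning-angle bound is not what does the work here. Errors do not compound because each forced vertex is re-pinned to an $\varepsilon$-quadrant by a sharp corner of $P$ (or by \Cref{obs:not-outside-quadrant}), not because of an iterated $O(\varepsilon)$ estimate.

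Your sufficiency outline matches the paper's: the tunnel construction up to $v_{i+1}$, boundary points for $v_{i+2},\dots,v_{i+5}$, then the tunnel lemma. For the green class it must start from the correct canonical positions $(0.8,0),(1.8,0),(2.4,0.8),(1.6,1.4),(1.6,2.4)$.
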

\begin{proof}
	We now show the first part of the statement in an ``inductive'' manner by using similar observations as in the proof of \Cref{lem:edge-gadget-tunnel}.
	To that end, let $\configuration$ be a configuration of \linkage such that $\configuration(v) \in \NegativeEntrance{\lnot x}(B)$.
	Furthermore, let $v = v_i$ for some $1 \leq i \leq n$.
	We will now consider for each $1 \leq j \leq 5$ the quadrant the vertex $v_{i + j}$ could be placed in.
	
	First, observe that from the perspective of $\configuration(v)$, the first half of the gadget is constructed like a tunnel.
	Thus, using ideas from \Cref{lem:edge-gadget-tunnel}, we can deduce $\configuration(v_{i + 2} \in \QuadrantI{\EpsBall{(1.6, 0)}}$.
	Having $\configuration(v_{i + 2} \in \QuadrantI{\EpsBall{(1.6, 0)}}$, we can see that $\configuration$ can neither continue to the right (otherwise it would cross the polygonal domain), nor turn to the left (due to obstacles in the first part of the bend).
	Thus, it continues diagonally upwards.
	Since the polygonal domain together with obstacle~(OB2) performs a sharp corner around \EpsBall{(2.2, 0.8)}, we must have $\configuration(v_{i + 3}) \in \QuadrantII{\EpsBall{(2.2, 0.8)}}$; recall \Cref{fig:edge-gadget-details}b.
	In particular, any other position, and in particular any other quadrant of \EpsBall{(2.2, 0.8)}, either lies inside the obstacle, ``behind'' the bend, or is not reachable by a unit-length edge.
	With the same reasoning, we obtain $\configuration(v_{i + 4} \in \QuadrantI{\EpsBall{(1.6, 1.6)}}$.
	Being inside \QuadrantI{\EpsBall{(1.6, 1.6)}}, \configuration cannot place $v_{i + 5}$ straight upwards due obstacle~(OB3) and the tunnel placed as the third part of the gadget.
	We now use once more the fact that the polygonal domain performs a bend around \EpsBall{(2.2, 2.4)} to conclude $\configuration(v_{i + 5} \in \QuadrantII{\EpsBall{(2.2, 2.4)}}$.
	Since the third part of the gadget is a tunnel, and we have satisfied the prerequisites of \Cref{lem:edge-gadget-tunnel}, we conclude $\configuration(v_{i + 7}) \in \NegativeExit{\lnot x}(B)$.
	
	We now consider the situation $\configuration(v_i) \in \PositiveEntrance{x}(B)$ and can immediately deduce $\configuration(v_{i + 1} \in \QuadrantI{\EpsBall{(0.8, 0)}}$.
	As \configuration passes through $B$, it passes below obstacle~(OB2) due to the vertex of the polygonal domain at $\EpsBallTop{(1.6, 0)}$.
	We now use the fact that the polygonal domain bends around \EpsBall{(1.8, 0)} together with $\Vert\configuration(v_{i + 1}v_{i + 2})\Vert_2 = 1$ to deduce $\configuration(v_{i + 2}) \in \QuadrantI{\EpsBall{(1.8, 0)}}$, $\configuration(v_{i + 3}) \in \QuadrantII{\EpsBall{(2.4, 0.8)}}$, and $\configuration(v_{i + 4}) \in  \QuadrantI{\EpsBall{(1.6, 1.4)}}$.
	A close investigation of obstacle~(OB3) and \Cref{fig:edge-gadget-details}b reveals that $\configuration$ must pass left of obstacle~(OT3) and thus we have $\configuration(v_{i + 5}) \in \QuadrantI{\Ball{(1.6, 0)}{2\varepsilon}}$.
	Observe that we can only ensure placement inside a ball of radius $2\varepsilon$.
	However, since the third part of the bend is a tunnel, by \Cref{obs:not-outside-quadrant} we can still obtain $\configuration(v_{i + 5}) \in \QuadrantI{\EpsBall{(1.6, 0)}}$.
	Applying \Cref{lem:edge-gadget-tunnel}, we deduce $\configuration(v_{i + 7}) \in \PositiveExit{x}(B)$.
	
	To show the second part of the statement, we consider a point $(x, y) \in \NegativeStart{\lnot x}(B)$.
	As the first part of the bend consists of the first of a tunnel, we can use the procedure outlined in the proof of \Cref{lem:edge-gadget-tunnel} to construct a configuration $\configuration'(\linkage)$ with $\configuration'(v_i) = (x,y)$ and $\configuration'(v_{i +1}) = (0.8 + x', 0.6)$ for some $0 \leq x' \leq \varepsilon / 2$.
	Observe that $v_{i + 1}$ lies on the boundary of the polygonal domain.
	By the construction of the second part of the bend, there exist points for the vertices $v_{i + j}$, $2 \leq j \leq 5$, on the boundary of the polygonal domain such that the edge-lengths are realized; see also \Cref{fig:edge-gadget-details}b.
	Since we reached a point inside $\NegativeStart{\lnot x}$ of the tunnel that forms the third part of the bend, we apply \Cref{lem:edge-gadget-tunnel} to finalize the construction of $\configuration'$ and obtain $\configuration'(v_{i + 7}) \in \NegativeEnd{\lnot x}(B)$.
	Analogous arguments apply for a point $(x, y) \in \PositiveStart{x}(B)$; see the green configuration in \Cref{fig:edge-gadget-details}b.
\end{proof}

\subsubsection{Shifters.}
\label{sec:details-hardness-shifters}
In the following we describe a gadget to shift a tunnel up; the gadget to shift it down can be created by mirroring this construction.

The main part of the shifter is depicted in \Cref{fig:edge-gadget-details}c.
Its lower half is composed of the vertices $(0,0)$, \EpsBallRight{(0,0)}, \EpsBallRight{(0.15, 0.2)}, \EpsBallBottom{(0.4, 0.3)}, \EpsBallLeft{(0.8,0)},\EpsBallRight{(0.8,0)}, \EpsBallBottom{(1.1,0.4)}, \EpsBallBottom{(1.2,0.35)}, \EpsBallLeft{(1.4,0.2)}, and $(1.4,0.2)$.
Furthermore, its upper half is composed of the vertices $(0,0.6)$, \EpsBallRight{(0,0.6)}, \EpsBallRight{(0.2, 0.45)}, \EpsBallTop{(0.3, 0.4)}, \EpsBallLeft{(0.6,0.8)}, \EpsBallRight{(0.6,0.8)}, \EpsBallTop{(1,0.5)}, \EpsBallTop{(1.25,0.6)}, \EpsBallLeft{(1.4,0.8)}, and $(1.4,0.8)$.
Between the two halves we place three obstacles.
The first obstacle, obstacle~(OS1), is composed of the vertices \EpsBallBottom{(0,0.6)}, \EpsBallBottom{(0.2, 0.45)}, \EpsBallTop{(0.15, 0.2)}, and \EpsBallTop{(0,0)}.
The second obstacle, obstacle~(OS2), is composed of the vertices $(0.6,0.8 - \varepsilon)$, \EpsBallLeft{(1,0.5)}, \EpsBallLeft{(1.1,0.4)}, \EpsBallTop{(0.8,0)}, \EpsBallRight{(0.4, 0.3)}, and \EpsBallRight{(0.3, 0.4)}.
Finally, the third obstacle, obstacle~(OS3), is almost a mirrored version of (a).
It is composed of the vertices $(1.4,0.8 - \varepsilon)$, \EpsBallTop{(1.4, 0.2)}, \EpsBallRight{(1.2, 0.35)}, and \EpsBallRight{(1.25,0.6)}.

We complete the construction of the shifter by translating the main part horizontally by $1.6$ to the right and attach a tunnel on either of its sides.
For a shifter $S$ related to a variable $x \in \mathcal{X}$, the entrance and exit areas are defined as follows.
\begin{align*}
	\NegativeEntrance{x}(S) &= \QuadrantI{\EpsBall{(0,0)}}
	&\quad\quad
	\NegativeExit{x}(S) &= \QuadrantI{\EpsBall{(4.6, 0.2)}}\\
	\PositiveEntrance{x}(S) &= \QuadrantIV{\EpsBall{(0, 0.6)}}
	&\quad\quad
	\PositiveExit{x}(S) &= \QuadrantIV{\EpsBall{(4.6, 0.8)}}
\end{align*}
The following lemma shows that the shifter indeed fulfills its purpose.
\begin{lemma}
	\label{lem:edge-gadget-shifter}    
	Let \linkage be a unit-length linear linkage with a configuration $\configuration$ that passes through a shifter $S$ with an entrance and exit pair for the variable $x \in \mathcal{X}$.
	If there exists a vertex $v \in V$ s.t.\ $\configuration(v) \in \NegativeEntrance{\lnot x}(S)$~($\PositiveEntrance{x}(S)$), then we have $\configuration(v') \in \NegativeExit{\lnot x}(S)$~($\PositiveExit{x}(S)$) for a $v' \in V$.
	Furthermore, for any point $p \in \NegativeStart{\lnot x}(S)$~($\PositiveStart{x}(S)$) there exists a configuration $\configuration'(\linkage)$ and two vertices $v, v' \in V$ with $\configuration'(v) = p$ and $\configuration'(v') \in \NegativeEnd{\lnot x}(S)$~($\PositiveEnd{x}(S)$).
\end{lemma}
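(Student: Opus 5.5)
We mirror the structure of the proofs of \Cref{lem:edge-gadget-tunnel,lem:edge-gadget-bend} and trace vertex by vertex which quadrant of which $\varepsilon$-disc the next vertex must occupy. The shifter consists of a tunnel $T_1$ on $[0,1.6]$, the main part on $[1.6,3.0]$ (the main part translated by $1.6$), and a tunnel $T_2$ on $[3.0,4.6]$, the latter shifted up by $0.2$. Its exit areas are therefore exactly the exit areas of $T_2$.

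\textbf{First part (propagation).} Let $\configuration(v_i)\in\NegativeEntrance{\lnot x}(S)$. By \Cref{lem:edge-gadget-tunnel} applied to $T_1$, we get $\configuration(v_{i+2})\in\QuadrantI{\EpsBall{(1.6,0)}}$, which is the lower entrance of the main part.

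Inside the main part, we use the obstacle geometry. Obstacle~(OS2) together with the polygon vertex at $\EpsBallLeft{(0.6,0.8)}$ forces $v_{i+3}$ into the quadrant of $\EpsBall{(2.2,0.8)}$ that faces down-right; every other position either lies in an obstacle, makes the edge cross the boundary, or is at distance $\neq1$. Here we use that the points are at distance about $1$, e.g.\ $\Vert(0.6,0.8)\Vert_2=1$.

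Next, the sharp corner of the lower boundary at $\EpsBall{(0.8,0)}$ (translated) together with obstacle~(OS3) and $\Vert(0.8,-0.6)\Vert_2=1$ forces $v_{i+4}$ into $\QuadrantI{\EpsBall{(3.0,0.2)}}$. This is precisely the area $\NegativeEntrance{\lnot x}(T_2)$. A final application of \Cref{lem:edge-gadget-tunnel} to $T_2$ then yields $\configuration(v_{i+6})\in\NegativeExit{\lnot x}(S)$.

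The case $\configuration(v_i)\in\PositiveEntrance{x}(S)$ is symmetric. Passing through $T_1$ gives $\QuadrantIV{\EpsBall{(1.6,0.6)}}$. The main part then routes the two edges below (OS2) via the corner at $\EpsBall{(0.8,0)}+(1.6,0)$, and then up to $\EpsBall{(3.0,0.8)}$; the offset $(1.4,0.2)$ is reachable as $(0.8,-0.6)+(0.6,0.8)$. This lands in $\PositiveEntrance{x}(T_2)$, and \Cref{lem:edge-gadget-tunnel} finishes the argument.

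Where the obstacles only confine a vertex to a ball of radius $2\varepsilon$ (as happened in the bend), we invoke \Cref{obs:not-outside-quadrant} for $T_2$ to sharpen this to the $\varepsilon$-quadrant.

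\textbf{Second part (existence).} For $p\in\NegativeStart{\lnot x}(S)$ we build $\configuration'$ in three steps:
\begin{itemize}
\item We use the constructive half of \Cref{lem:edge-gadget-tunnel} to reach a point of the form $(1.6+x',0)$ with $0\le x'\le\varepsilon'$.
\item We place the two main-part vertices on the boundary points $(2.2+x',0.8)$ and $(3.0+x',0.2)$. These respect unit length exactly, and we check, as in \Cref{fig:edge-gadget-details}c, that the segments avoid (OS1)--(OS3).
\item We conclude again with \Cref{lem:edge-gadget-tunnel} for $T_2$.
\end{itemize}
The positive case is analogous, using the green configuration.

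\textbf{Main obstacle.} The delicate step is the first part inside the main part: ruling out all alternative placements, in particular that the configuration could switch between the red and green routes, which intersect near the centre of the main part. We would handle this exactly as in the clause gadget. Both routes are confined to corridors of width $O(\varepsilon)$ around the intended segments, and they meet at an angle bounded away from $0$. By the turning-angle argument of \Cref{fig:turning-a-line}, a unit segment cannot have its endpoints in different corridors once $\varepsilon$ is small enough.
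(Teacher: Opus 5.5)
Your proposal is correct and follows the paper's proof essentially step for step: \Cref{lem:edge-gadget-tunnel} on the first tunnel, then the main part forcing $v_{i+3}$ into $\QuadrantIV{\EpsBall{(2.2,0.8)}}$ and $v_{i+4}$ into $\QuadrantI{\EpsBall{(3,0.2)}}$ (analogously for the green route), then \Cref{lem:edge-gadget-tunnel} on the second tunnel, and, for existence, the Pythagorean-triple placements $(2.2+x',0.8)$ and $(3.0+x',0.2)$ on the boundary between the two tunnel constructions. Two small inaccuracies: on the red route $v_{i+4}$ is pinned by the lower-boundary pocket at $(3.0,0.2)$, not by the corner at $(2.4,0)$, which serves the green route; and the two routes actually cross inside small open chambers rather than in $\BigO{\varepsilon}$-wide corridors, so your no-switching remark should rest on the narrow entry channels and gaps fixing each edge's direction --- a point the paper itself only asserts via the obstacles.
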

\begin{proof}
	As before, we show both parts of the statement individually.
	Towards establishing the first part of the statement, let $\configuration$ be a configuration of \linkage such that $\configuration(v) \in \NegativeEntrance{\lnot x}(S)$.
	Furthermore, let $v = v_i$ for some $1 \leq i \leq n$.
	Since the first part of the shifter is a straight tunnel, we apply \Cref{lem:edge-gadget-tunnel} to conclude that $\configuration(v_{i + 2}) \in \QuadrantI{\EpsBall{(1.6,0)}}$ holds.
	We now show that this implies $\configuration(v_{i + 3}) \in \QuadrantIV{\EpsBall{(2.2, 0.8)}}$.
	First, observe that $2.2 \leq x(\configuration(v_{i + 3}))$ must hold due to the presence of obstacles~(OS1) and~(OS~2).
	Since $\QuadrantI{\EpsBall{(2.2,0.8)}}$ lies outside the polygonal domain, for $\configuration(v_{i + 3}) \notin \QuadrantIV{\EpsBall{(2.2,0.8)}}$ we would need $v_{i + 3}$ to be placed right and below \EpsBall{(2.2,0.8)}.
	However, since $\configuration$ is a straight-line drawing of $G$, the obstacle~(OS2) prevents the edge from leaving \QuadrantIV{\EpsBall{(2.2,0.8)}}.
	Thus, we have $\configuration(v_{i + 3}) \in \QuadrantIV{\EpsBall{(2.2,0.8)}}$ and since from the perspective of the quadrant, the remaining part of the main part of the shifter has a similar structure as a tunnel, conclude $\configuration(v_{i + 4}) \in \QuadrantI{\EpsBall{(3,0.2)}}$.
	Applying \Cref{lem:edge-gadget-tunnel} once more to the third part of the shifter, we deduce $\configuration(v_{i + 6}) \in \NegativeExit{\lnot x}(S)$.
	Analogous arguments show that $\configuration(v_i) \in \PositiveEntrance{x}(S)$ implies $\configuration(v_{i + 6}) \in \PositiveExit{x}(S)$.
	
	Towards establishing the second part of the statement, let $(x, y) \in \NegativeStart{\lnot x}(S)$ be an arbitrary point.
	Since the first part of the shifter is a tunnel $T_1$, we know that we can reach from any point in $\NegativeStart{\lnot x}(T_1)$, and in particular from $(x,y)$, some point $p' \in \NegativeEnd{\lnot x}(T_1)$.
	Recall our proof of \Cref{lem:edge-gadget-tunnel}, where we constructed a configuration $\configuration'$ that places $v_{i + 2}$ on the boundary of the polygonal domain.
	We now continue the construction of $\configuration'$ from this point on.
	Using the Pythagorean triple $(0.8, 0.6, 1)$ and the idea outlined in \Cref{lem:edge-gadget-tunnel}, we can find positions for $v_{i + 3}$ and $v_{i+ 4}$ on the respective (opposite) sides of the boundary.
	This placed $v_{i + 4}$ inside $\NegativeStart{\lnot x}(T_2)$ of the tunnel $T_2$ that makes up the third part of the shifter.
	Hence, we can apply \Cref{lem:edge-gadget-tunnel} to finalize the construction of $\configuration'$ such that it satisfies the required properties.
	To complete the proof, observe that we can make an analogous construction for $(x, y) \in \PositiveStart{x}(S)$.
\end{proof}

\subsection{Detailed Construction of the Clause Gadget}
\label{sec:details-hardness-clause-gadget}
Recall that the clause gadget consists of a main part to which we attach shifters and tunnels.
The main part of the gadget consists of eighteen obstacles that limit how a configuration $\configuration$ of \linkage can pass through it; recall \Cref{fig:clause-gadget} and see \Cref{fig:clause-gadget-detailed} for a visualization of the following description.
To describe the gadgets, we start by creating a rectangle $B$ of size $1.6 \times 2.8$, out of which we cut seven narrow (possibly intersecting) \emph{corridors}.
The corridors describe the intended paths that the configuration \configuration can take through the gadget and the parts that remain from the rectangle will constitute our obstacles.
To construct them, we take rectangles $R_i^j$, $i,j \geq 1$, of size $\varepsilon \times 1$ and place them within $B$ such that the the configuration must (not) pass regions within $B$.
The final corridors are then formed by unifying the smaller rectangles after, where required, connecting their endpoints by short segments or connecting them with the intended entrance and exit areas of $B$; see \Cref{fig:clause-gadget-detailed}.
We place the corridors in a way such that every start and end area lies completely inside a corridor; recall that they are represented as a triangle where two sides have length $\varepsilon / 2$.
Since we guarantee for the other gadgets the existence of a configuration that places a vertex inside the respective end area, we can ensure this way the existence of a configuration (through the clause gadget) for a satisfiable formula.

\begin{figure}
	\centering
	\includegraphics[page=1]{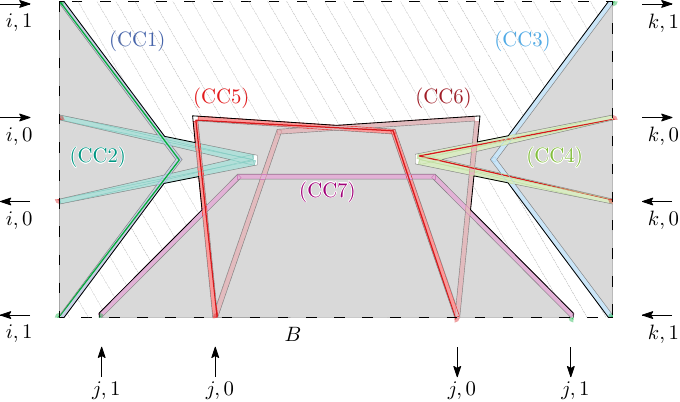}
	\caption{Detailed visualization of the main part of the clause gadget for the clause $(x_i \lor x_j \lor x_k)$. We color of the rectangles matches the color of the corridor they are used to define.
		The dashed rectangle represents $B$ and we hatch the outside of the polygonal domain.
		Moreover, we indicate a potential configuration for $x_i = 1$ and $x_i j = x_k = 0$.}
	\label{fig:clause-gadget-detailed}
\end{figure}

For the first corridor, (CC1), we take $R_1^1$ and place it such that its lower-left corner is at $\EpsBallBottom{(0, 1.6)}$ and its side is tangential to $\EpsBallLeft{(0.8, 0.6)}$.
Furthermore, we take $R_2^2$ and place it such that its upper-left corridor is at $\EpsBallTop{(0, 0)}$ and its side is tangential to $\EpsBallLeft{(0.6, 0.8)}$.
To complete the construction of (CC1), we introduce a horizontal segment starting at $(0,1.6)$ and ending at the intersection with the upper side of $R_1^1$.
We perform symmetrically for $R_1^2$ and take the union of $R_1^1$ and $R_1^2$ without the ``cut-off'' from the triangles as (CC1).

Towards constructing the second corridor, (CC2), let $c_1$ be the (unique) crossing of the two unit-disks positioned at $(0,0.6)$ and $(0,1)$ that lies inside $B$.
Similarly, let $c_2$ be the (unique) crossing of the two unit-disks positioned at $\EpsBallRight{(0,1)}$ and $(0,0.6)$ that lies inside $B$.
We place $R_2^1$ such that its lower-left corner is at $(0,1)s$ and its lower-right corner is at $c_1$.
Similarly, we place $R_2^2$ such that its upper-left corner is at $(0,0.6)$ and its lower-right corner is at $c_1$.
We place two further rectangles, $R_2^{1'}$ and $R_2^{2'}$, symmetrically.
This time the lower-left and lower-right corner of $R_2^{1'}$ is at $\EpsBallRight{(0,1)}$ and $c_2$, respectively, and the upper-left and lower-right corner of $R_2^{2'}$ is at $(0,0.6)$ and $c_2$, respectively.
We extend the upper side of $R_2^1$ such that it hits $x=0$ and $x=1$ and do so similarly with the lower side of $R_2^{2'}$.
To obtain (CC2), we vertically connect the two enlarged rectangles at $x=1$ and take the union of the above-constructed (enlarged) rectangles; see also \Cref{fig:clause-gadget-detailed}.

The corridors (CC3) and (CC4) are a mirrored variant of (CC1) and (CC2), respectively.
Therefore, we describe next how to construct corridor (CC5).
We take $R_5^1$ (which is in contrast to the previous rectangles vertically oriented) and place its lower-right corner at $(0.8,0)$ and rotate it in counterclockwise direction until its upper-right corner is at $x=0.7$.
Next, we place $R_5^3$ and place its lower-left corner at $(2,0)$.
We orientate it counterclockwise until it does not intersect with (CC3) and (CC4) and we can place $R_5^2$ such that its lower-left corner coincides with the upper-right corner of $R_5^1$ and the its lower-right corner coincides with the upper-left corner of $R_5^2$.
To complete the construction of (CC5), we enlarge all rectangles such that they full-intersect at their ends
Observe that (CC5) leans to the left and does not intersect (CC1).
Corridor (CC6) is a symmetric (and mirrored) copy of (CC5) and it remains to construct (CC7).

For the corridor (CC7), we take $R_7^1$, place its lower-left corner at $\EpsBallRight{(0.2,0)}$ and rotate it in counter-clockwise direction until its upper-right corner is at $x=0.9$.
We place $R_7^3$ symmetrically with its lower-right corner at  $\EpsBallLeft{(2.6,0)}$.
Observe that the distance between the upper-right upper-left corner of $R_7^1$ and $R_7^3$ is one, respectively.
We place $R_7^2$ such that its upper-left and upper-right corner coincide with the upper-right and upper-left corner of $R_7^1$ and $R_7^3$, respectively.
To obtain (CC7), we then introduce a segment from $(0,0)$ vertically until we hit the boundary of $R_7^1$, perform symmetrically for $R_7^3$, and take the union of the three rectangles without the triangles ``cut-away'' by the segments; see \Cref{fig:clause-gadget-detailed}.

We now attach on the left side of the main part two shifters: one shifting the tunnel up and one that shifts it down.
We mirror this to the right side of the main part and attach tunnels to the bottom side of the main part.
Note that we keep, for the ease of presentation, the lower-left corner of the main part at the origin.
To complete the gadget, we turn the following obstacles into boundaries of the ``outer'' polygonal domain: The central obstacles between (CC1) and (CC2), and (CC3) and (CC4), the obstacles between (CC2) and (CC5) and (CC4) and (CC6), and everything above (CC1)-(CC6); see again \Cref{fig:clause-gadget-detailed}.
Let the above-described gadget be the clause gadget $C$ for the clause gadget $(x_i \lor x_j \lor x_k)$.
We first introduce exit and entrance areas of the main part $C'$ of $C$.
\begin{align*}
	\NegativeEntrance{x_i,0}(C') &= \QuadrantI{\EpsBall{(0,1)}}
	&\quad\quad
	\NegativeExit{x_i,0}(C') &= \QuadrantIII{\EpsBall{(0, 0.6)}}\\
	\PositiveEntrance{x_i,1}(C') &= \QuadrantIV{\EpsBall{(0, 1.6)}}
	&\quad\quad
	\PositiveExit{x_i,1}(C') &= \QuadrantII{\EpsBall{(0, 0)}}\\
	\NegativeEntrance{x_j,0}(C') &= \QuadrantII{\EpsBall{(0.8,0)}}
	&\quad\quad
	\NegativeExit{x_j,0}(C') &= \QuadrantIV{\EpsBall{(2, 0)}}\\
	\PositiveEntrance{x_j,1}(C') &= \QuadrantIV{\EpsBall{(0.2, 0)}}
	&\quad\quad
	\PositiveExit{x_j,1}(C') &= \QuadrantII{\EpsBall{(2.6, 0)}}\\
	\NegativeEntrance{x_k,0}(C') &= \QuadrantII{\EpsBall{(2.8,1)}}
	&\quad\quad
	\NegativeExit{x_k,0}(C') &= \QuadrantIV{\EpsBall{(2.8, 0.6)}}\\
	\PositiveEntrance{x_k,1}(C') &= \QuadrantIII{\EpsBall{(2.8, 1.6)}}
	&\quad\quad
	\PositiveExit{x_k,1}(C') &= \QuadrantI{\EpsBall{(2.8, 0)}}
\end{align*}

Recall \Cref{fig:clause-gadget} for the following discussion.
In our reduction, we force the configuration to enter the main part three times, first via the entrance $\Entrance{x_i,t_i}$, then via $\Entrance{x_j,t_j}$, and finally via $\Entrance{x_k,t_k}$, for $t_i, t_j, t_k \in \{0,1\}$.
Observe that the distance between \Entrance{x_i,t_i} and \Exit{x_i, t_i} can be spanned by a linkage of length two that traverses (CC1) or (CC2).
The aforementioned corridors leave little choice for \configuration:
If \configuration enters the main part via \Entrance{x_i, 0}, it is forced to leave it via \Exit{x_i,0},  
otherwise, i.e., if it enters the main part via \Entrance{x_i,1}, it is forced to leave it via \Exit{x_i,1}.
Note that placing a vertex in an area for $x_j$ or~$x_k$ is impossible due to the unit-length requirement of the edges paired with the corridors.
The same holds true for the entrance and exit areas corresponding to $x_k$ (with respect to the corridors (CC3) and (CC4)).
The three corridors (CC5)-(CC7) in the middle constrain how a configuration can reach \Exit{x_j,x_j}
from \Entrance{x_j, x_j} using three edges.
In particular, if \configuration enters the main part via \Entrance{x_j,0}, the gadget contains the two corridors (CC5) and (CC6), effectively giving the configuration the flexibility to lean more towards the left (using (CC5)) or right side (using (CC6)) of the main part; compare also \Cref{fig:clause-gadget}a and \Cref{fig:clause-gadget}b and the respective corridors in \Cref{fig:clause-gadget-detailed}.
Conversely, i.e., if \configuration enters %
via \Entrance{x_j,1}, there is again only one corridor, namely (CC7), %
giving the configuration little freedom in placing the remaining vertices.
This, of course, relies on the assumption that the configuration cannot leave or even change its intended corridor.
To this end, we now argue that for a suitable small constant $\varepsilon$ it is not possible to enter the clause gadget at some entrance area assigned for one variable and leave it at an entrance/exit area assigned to a different variable. 
To see this recall that each corridor defines a possible ``route'' in which the linkage is intended to
pass through the gadget.
This yields seven routes in total (two for $x_i/x_k$ and three for~$x_j$); compare this also to the seven corridors. 
We can find a constant $\alpha$ such that for every route all possible actual configurations are
contained in a polygonal corridor of 
width at most~$\alpha \varepsilon$, which we use to refine the original corridors. 
The obstacles of $P$ in the clause gadget are then defined by the points outside of the refined corridors.
We can adjust $\alpha$ such that we can find points that lie on rational coordinates of polynomial precision.
Let $\gamma$ be the smallest ``turning angle'' for two intersecting corridors. 
Observe that $\gamma$ is defined by the corridors (CC5) and (CC6) and note that~$\gamma$ is independent from $\alpha\varepsilon$. In order to pass from one corridor to another, there has to be a segment of length one with endpoints in distinct corridors. 
A simple calculation shows that this is only possible if $\gamma \le 2 \arcsin{2\alpha \varepsilon}$; \shortLong{recall}{see}~\Cref{fig:turning-a-line}.
Thus, picking a rational value $\varepsilon \le \frac{\alpha}{2} \sin{\frac{\gamma}{2}}$ ensures that we cannot deviate from the intended route through the clause gadget.

Combining all lead us to \Cref{obs:clause-gadget-planar}, which we re-state below for completeness:	\observationClauseGadget*

Moreover, our construction allows for the following crucial observation; compare also Figures~\ref{fig:clause-gadget}a~to~\ref{fig:clause-gadget}d:
If a configuration \configuration enters the main part via \Entrance{x_i,0} \emph{and} \Entrance{x_k,0}, a planar configuration %
that enters the main part via \Entrance{x_j,0} becomes impossible; observe that the corridors (CC2) and (CC4), in which the configuration has then to use intersect with (CC5) and (CC6), respectively.
On the other hand, if \configuration enters the main part via \Entrance{x_i,1} \emph{or} \Entrance{x_k,1}, it uses the corridors (CC1) or (CC3) that do not intersect with the corridors (CC5) and (CC6) for \Entrance{x_j,0}, respectively.
This allows for a planar configuration even if \configuration\ enters the main part via \Entrance{x_j,0}. 
The corridor (CC7) connecting \Entrance{x_j,1} with \Exit{x_j,1} can always be used, independent on which of the corridors (CC1)-(CC4) host the configuration.

Recall that we attach on either side of the main part $C'$ either a shifter or a tunnel.
We define the entrance and exit areas of the clause gadget $C$ as the entrance and exit areas of the attached gadgets.
We now establish the main properties of the clause gadget.
\begin{lemma}
	\label{lem:clause-gadget}    
	Let \linkage be a unit-length linear linkage with a configuration $\configuration$ that passes through a clause gadget $C$ for the clause $(x_i \lor x_j \lor x_k) \in \mathcal{C}$.
	If there exists a vertex $v \in V$ s.t.\ $\configuration(v) \in \NegativeEntrance{\lnot x_i}(C)$ ($\PositiveEntrance{x_i}(C)$), then we have $\configuration(v') \in \NegativeExit{\lnot x_i}(C)$ ($\PositiveExit{x_i}(C)$) for a $v' \in V$.
	Furthermore, for any point $p \in \NegativeStart{\lnot x_i}(C)$~($\PositiveStart{x_i}(C)$) there exists a configuration $\configuration'(\linkage)$ and two vertices $v, v' \in V$ with $\configuration'(v) = p$ and $\configuration'(v') \in \NegativeEnd{\lnot x_i}(C)$~($\PositiveEnd{x}(C)$).
	Symmetric properties apply to $x_j$ and $x_k$.
\end{lemma}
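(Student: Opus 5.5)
The plan is to mirror the proofs of \Cref{lem:edge-gadget-tunnel,lem:edge-gadget-bend,lem:edge-gadget-shifter}. The clause gadget $C$ is decomposed into its attached pieces: the shifters on the left and right of the main part $C'$, the tunnels at its bottom, and $C'$ itself. We treat the two parts of the statement, forced exit and existence of a configuration, separately and only for $x_i$. The case $x_k$ is the mirror image, and $x_j$ differs only in using the corridors (CC5)--(CC7) instead of (CC1)/(CC2).

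\emph{First part (forced exit).} Suppose $\configuration(v)\in\NegativeEntrance{\lnot x_i}(C)$. Since this entrance belongs to the attached shifter, \Cref{lem:edge-gadget-shifter} yields a vertex in the corresponding exit area of the shifter, which by construction coincides with $\Entrance{x_i,0}(C')$. Next we invoke the corridor argument preceding \Cref{obs:clause-gadget-planar}. Every configuration following a route lies inside the refined corridor of width $\alpha\varepsilon$. Switching corridors would require a unit segment with endpoints in two distinct corridors, which is impossible for $\varepsilon\le\frac{\alpha}{2}\sin\frac{\gamma}{2}$. By \Cref{ass:not-turn-around} the configuration also cannot turn back. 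Hence the next two vertices stay in (CC2) and the second one lands in $\Exit{x_i,0}(C')$. The unit-length constraint excludes the areas of $x_j,x_k$, since their distance from $\Entrance{x_i,0}(C')$ cannot be spanned by two unit edges inside the corridor. A possible off-quadrant landing within $2\varepsilon$ is repaired by \Cref{obs:not-outside-quadrant}, because the exit is glued to a tunnel or shifter. Finally, applying \Cref{lem:edge-gadget-shifter} (or \Cref{lem:edge-gadget-tunnel}) to the attached piece on the exit side gives $\configuration(v')\in\NegativeExit{\lnot x_i}(C)$. The case $\PositiveEntrance{x_i}(C)$ is identical with (CC1).

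\emph{Second part (existence).} Start from $p$ in the start area and use the constructive parts of \Cref{lem:edge-gadget-shifter} to reach a point in the end area of the shifter. That end area coincides with the start area $\Start{x_i,t}(C')$. By construction, each start area lies completely inside its corridor. Inside the corridor we then place the two intermediate edges explicitly, exactly as in the tunnel proof. The corridor rectangles were placed via the unit-disc intersections $c_1,c_2$ (for (CC2)) and via the tangency conditions (for (CC1)), so that from any point of the start triangle there is an intermediate point at unit distance inside the corridor. From that intermediate point we reach a point of the end triangle $\End{x_i,t}(C')$. Continuity and rotation of the middle vertex, as in \Cref{fig:edge-gadget-details-configuration}b, handle the $\varepsilon/2$ perturbation of the starting point. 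We then finish with the constructive part of the exit-side lemma.

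\emph{Main obstacle.} The delicate step is making the corridor argument rigorous: the configuration entering at $\Entrance{x_i,t}(C')$ must really stay in one corridor, and the $\varepsilon/2$ start triangle must really admit a completion landing in the end triangle. My first attempt is to verify both against the explicit corner coordinates of the rectangles $R^1_2,R^2_2$ (respectively $R^1_1,R^2_1$), using the choice of $\varepsilon$ relative to $\alpha$ and $\gamma$. For $x_j$ there is an additional difficulty: for $t_j=0$ the witness must choose (CC5) or (CC6). This choice is exactly the freedom used in the $(\Rightarrow)$ direction of \Cref{thm:hardness}. For the present lemma it suffices to show that either choice yields a configuration that is valid within the gadget alone.
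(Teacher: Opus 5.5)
Your proposal is correct and follows essentially the same route as the paper. The forced-exit part chains the tunnel and shifter lemmas with the corridor (turning-angle) argument summarized in \Cref{obs:clause-gadget-planar}, and the existence part composes the constructive halves of those lemmas with an explicit placement of the intermediate vertices inside (CC1)/(CC2), or (CC5)--(CC7) for $x_j$, via the unit-circle intersection and the rotation trick, exactly as the paper does (the paper is only more concrete in fixing the exit vertex at a corner point such as $(0,0.6)$ and, for $x_j$, in choosing the lean via the line through the top-left corner of (CC5)).
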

\begin{proof}
	Let $\configuration$ be a configuration of \linkage that passes through the clause gadget.
	Depending on the entrance it uses, it first visits a tunnel or a shifter.
	Observe that the respective exit areas coincide with the entrance areas of the main part $C'$ of $C$, whose exit areas again coincide with the entrance areas of the (next) tunnel or shifter.
	Thus, to show the first part of the statement, it is sufficient to appropriately combine \Cref{lem:edge-gadget-tunnel,lem:edge-gadget-shifter,obs:clause-gadget-planar}.
	
	Hence, we focus for the remainder of the proof on the second part of the statement and let $p = (x,y) \in \mathbb{R}^2$ be a point.
	We first consider the case where $p \in \NegativeStart{\lnot x_i}(C)$ holds.
	Note that this area corresponds to $\NegativeStart{\lnot x_i}(S_1)$ for the shifter $S_1$ attached to the clause gadget.
	Using \Cref{lem:edge-gadget-shifter}, we know that we can find a configuration $\configuration'$ that ends with $\configuration'(v_{i + 6}) \in \NegativeEnd{\lnot x_i}(S_1)$, in particular $y(\configuration'(v_{i + 6})) = 1$.
	We now place $v_{i + 8}$ at $\configuration(v_{i + 8}) = (0,6)$.
	The placement of $v_{i + 7}$ is then the intersection of the two circles of unit radius positioned at $v_{i + 6}$ and $v_{i + 8}$.
	Observe that the rectangles $R_2^1$, $R_2^2$, $R_2^{1'}$, and $R_2^{2'}$ used to define (CC2) are anchored at the extreme points of $\NegativeEnd{\lnot x_i}(S_1)$ and at $(0,6)$ (after translation), respectively.
	Hence, the aforementioned intersection point, i.e., the placement of $v_{i + 7}$, lies within (CC2).
	Observe that $\configuration(v_{i + 8}) = (0,6) \in \NegativeStart{\lnot x}(S_2)$, where $S_2$ is the second shifter attached to the clause gadget.
	Thus, we can apply \Cref{lem:edge-gadget-shifter} to complete the construction of $\configuration'$ with $\configuration'(v_{i+14}) \in \NegativeEnd{\lnot x_i}(C)$.
	
	We next consider the case where $p \in \PositiveStart{x_1}(C)$ holds and observe again that this corresponds to $\PositiveStart{x_1}(S_1)$.
	Using \Cref{lem:edge-gadget-shifter}, we know that we can find a configuration $\configuration'$ that ends with $\configuration'(v_{i + 6}) \in \PositiveEnd{x_1}(S_1)$, in particular $y(\configuration'(v_{i + 6})) = 1.6$ (after translation, i.e., with respect to the construction of $C'$).
	If $x(\configuration'(v_{i + 6})) = 0$, we can place $v_{i + 7}$ and $v_{i + 8}$ at $\configuration'(v_{i + 7}) = (0.6,0.8)$ and $\configuration(v_{i + 8}) = (0,0)$, respectively.
	Otherwise, i.e., if $x(\configuration'(v_{i + 6})) = x'$ for $0 < x' \leq \varepsilon'$, we first (tentatively) place $v_{i + 7}$ at $(0.6 + x',0.8)$ and observe that this could lead to $v_{i + 7}$ being outside (CC1).
	However, by rotating $v_{i + 7}$ we can find a placement for it s.t.\ $x(\configuration'(v_{i+7}))=0.6$ and $\Vert\configuration'(v_{i+6}v_{i+7})\Vert_2=1$.
	From this point on, we can find a placement for $v_{i + 8}$ with a similar procedure such that $\configuration'(v_{i + 8}) = (x'', 0)$ for $-\varepsilon' \leq x'' \leq 0$.
	Note this point is part of $\PositiveStart{x_1}(S_2)$, allowing us to apply \Cref{lem:edge-gadget-shifter} to complete the construction of $\configuration'$ with $\configuration'(v_{i+14}) \in \PositiveEnd{x}(C)$.
	The cases where $p \in \NegativeStart{x_k}(C)$ or $p \in \PositiveStart{x_k}(C)$ are analogous.
	
	Next, consider the case where $p \in \NegativeStart{\lnot x_j}(C)$ holds.
	Similar to before, this area corresponds with $\NegativeStart{\lnot x_j}(T_1)$ for the tunnel $T_1$ attached to gadget.
	Using \Cref{lem:edge-gadget-tunnel}, we start creating a configuration $\configuration'(\linkage)$ with $\configuration'(v_i) = p$ and $\configuration'(v_{i + 2}) \in \NegativeEnd{x_j}(T_1) = \NegativeStart{x_j}(C')$ s.t.\ $x(\configuration(v_{i + 2})) = 0.8$.
	We now place $v_{i + 5}$ at $\configuration'(v_{i + 5}) = (5,0)$ and decide whether the configuration should lean to the left or to the right.
	If it should lean to the left, as in \Cref{fig:clause-gadget}a, we draw a line through $\configuration'(v_{i + 2})$ and the top-left corner of (CC5) and place $v_{i + 3}$ on this line such that the resulting edge has unit-length.
	The placement of $v_{i + 4}$ can now be determined by intersecting the respective unit circles, which is inside $P$ by the placement of $R_5^2$ and $R_5^3$.
	To finalize the construction of $\configuration'$, we observe that
	$\configuration(v_{i + 5}) = (5,0) \in \NegativeStart{\lnot x_j}(T_2)$, for the second tunnel $T_2$ attached to $C'$.
	From this on, \Cref{lem:edge-gadget-tunnel} allows us to once more finalize the construction of $\configuration'$ and we get $\configuration'(v_{i+7})\in\NegativeEnd{\lnot x_j}(C)$.
	If the configuration should lean to the right, as in \Cref{fig:clause-gadget}b, we perform symmetrically.
	
	The last case with $p \in \PositiveStart{y}(C)$ can be handled analogously.
	In particular, observe that (i) $R_7^2$ was places such that it connects $R_7^1$ and $R_7^3$ via their respective upper-right and upper-left corners, and (ii) $R_7^1$ and $R_7^3$ are placed symmetrically.
\end{proof}

\subsection{Detailed Construction of the Variable Gadget}
\label{sec:details-hardness-variable-gadget}
Let $x_i \in \mathcal{X}$ be a variable in $\varphi$.
Recall that its gadget consists of three parts, where the second part is just a series of edge gadgets connecting the variable gadget to clause gadgets.
In what follows, we give a detailed construction of the first and third part.

For the following description of the first part, we assume that the first of the above-mentioned edge gadgets, let it be $F$, has its lower-left corner at the origin, such as in the description from \Cref{sec:details-hardness-edge-gadget}.
Consider the construction of the edge gadget in \Cref{sec:details-hardness-edge-gadget}.
We want a configuration starting at a point $s_i = (x_{s_i},y_{s_i})$ and yet to be defined to be able to reach some point in $\NegativeStart{\lnot x_i}{F}$ and some point in $\PositiveStart{x_i}{F}$ (and no point outside \EpsBall{(0,0)} or \EpsBall{(0, 0.6)}).
Furthermore, to get a handle on the correctness proof of this reduction, we want in addition the configuration to be able to reach two specific points $(d, 0)$ and $(d, 0.6)$ on the boundary of the polygonal domain $P$.
This already fixes the $y$-coordinate of $s_i$ to be $y_{s_i} = 0.3$.
Next, we want to find the $x$-coordinate $s_i$; see to the light-gray arcs in \Cref{fig:variable-gadget-details}a for a visualization of the unit discs that we consider in the following.
As the linkage should be able to reach $(d, 0)$ and $(d, 0.6)$, this restricts $d$ to be a value in the range $0 \leq d \leq \varepsilon$.
We can now use Pythagoras to find $s_i$, e.g., setting $d = 0$ yields $x'_{s_i} = -\sqrt{1 - 0.3^2}$, where we force $x_{s_i}$ to be negative as it should be left of $T$.
However, one can easily see that $x_i \notin \mathbb{Q}$ holds, i.e., we cannot express $x'_{s_i}$ in space polynomial in the input formula $\varphi$.
The same holds for $d = \varepsilon'$ for any value of $\varepsilon' \in \mathbb{Q}$\footnote{As otherwise $-\sqrt{1 - 0.3^2}$ could be expressed as the difference of two rational numbers, i.e., would itself be rational.}.
However, we can select any rational number (with polynomial precision) satisfying $-\sqrt{1 - 0.3^2} < x_{s_i} < -\sqrt{1 - 0.3^2} + \varepsilon'$.
Observe that such a number must exist and gives us $d \coloneqq x_{s_i} + \sqrt{1 - 0.3^2}$.
Note that $d \notin \mathbb{Q}$ must hold for the same reason as $d \neq \varepsilon'$ must hold.
This allows us to fix the point $s_i$ and we proceed with describing the segments that make up the polygonal domain and the obstacle~(OV1); see \Cref{fig:variable-gadget-details}a.
As for the former, we connect, on the one hand, the part \EpsBallTop{{(s_i)}}, \EpsBallLeft{(0,0.6)}, and $(0, 0.6)$, and, on the other hand, its horizontal mirror.
We use the triangle formed by the vertices \EpsBallRight{(x_{s_i}, 0.3)}, \EpsBallBottom{(0, 0.6)}, and \EpsBallTop{(0, 0)} to construct obstacle~(OV1).
This completes the construction of the first part of the variable gadget and we can observe that it has the intended properties.

\begin{figure}
	\centering
	\includegraphics[page=2]{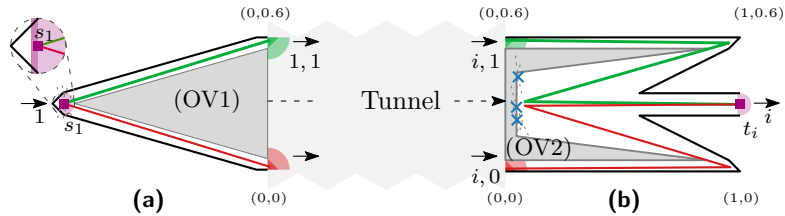}
	\caption{\shortLong{Labeled version of \Cref{fig:variable-gadget}}{The \textbf{\textsf{(a)}} first and~\textbf{\textsf{(b)}} third component of the variable gadget for $x_1$ and $x_i$, respectively,} together with parts of the unit cycles used in the placement of the vertices of $P$ and the arguments in \Cref{lem:variable-gadget-end}. In~\textbf{\textsf{(a)}} we color the different regions $\VariableEntrance{x_1}$ is composed of in different shades of purple. In~\textbf{\textsf{(b)}}, we mark in blue the intersections of the unit circles mentioned in the proof of \Cref{lem:variable-gadget-end}.}
	\label{fig:variable-gadget-details}
\end{figure}

The entrance area of the gadget for variable $x_i \in \mathcal{X}$, denoted as \VariableEntrance{x_i}, consists of the rectangle spanned by the two corners $(x(s_i) - \varepsilon / 8, y(s_i) - \varepsilon')$ and $(x(s_i), y(s_i) + \varepsilon')$ and the first and fourth quadrant of \EpsBall{({s_i})}; see also the magnified area in \Cref{fig:variable-gadget-details}.
We define $\NegativeExit{\lnot x_i}(V_1) = \QuadrantI{\EpsBall{(0,0)}}$ and $\PositiveExit{x_i}(V_1) = \QuadrantIV{\EpsBall{(0,0.6)}}$ as the negative and positive exit areas of the first part $V_1$ of the variable gadget.
We can observe that obstacle~(OV1) enforces any configuration $\configuration$ that places $v_j$ in \VariableEntrance{x_i} to have $\configuration(v_{j + 2})\in \NegativeExit{\lnot x_i}(V_1) \cup \PositiveExit{x_i}(V_1)$:
\begin{observation}
	\label{obs:variable-start}
	Let \linkage be a unit-length linear linkage with a configuration $\configuration$ that passes through the gadget for the variable $x_i \in \mathcal{X}$.
	If there exists a vertex $v \in V$ s.t.\ $\configuration(v) \in \VariableEntrance{x_i}$, then we have $\configuration(v') \in \NegativeExit{\lnot x_i}(V_1) \cup \PositiveExit{x_i}(V_1)$ for a $v' \in V$.
\end{observation}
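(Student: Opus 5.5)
The statement to prove is \Cref{obs:variable-start}. I would argue it the same way as the first part of \Cref{lem:edge-gadget-tunnel}: follow the path vertex by vertex, using the unit-length constraint together with the obstacle and the boundary of the first part $V_1$.

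Let $\configuration(v_j) \in \VariableEntrance{x_i}$. By construction, every point of this area lies within roughly $\varepsilon$ of $s_i = (x_{s_i}, 0.3)$, and $x_{s_i}$ is chosen so that the unit circle around $s_i$ passes through $(d,0)$ and $(d,0.6)$ with $0 < d < \varepsilon'$.

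\textbf{Step 1: the next vertex goes to $x = 0$, top or bottom.} Consider the placement of $v_{j+1}$ (or $v_{j-1}$, depending on the direction in which the path traverses the gadget; by \Cref{ass:not-turn-around} it leaves the gadget forward).
\begin{itemize}
\item The first part $V_1$ is closed to the left of $s_i$ and bounded by the segments from $\EpsBallTop{(s_i)}$ to $\EpsBallLeft{(0,0.6)}$ and their mirror.
\item The triangle (OV1) has its apex at $\EpsBallRight{(s_i)}$ and its base on $x = 0$, running from $\EpsBallTop{(0,0)}$ to $\EpsBallBottom{(0,0.6)}$.
\end{itemize}
Hence the only points at distance one from a point near $s_i$ that are still inside $P$ lie in one of two thin wedges: between the upper boundary and (OV1), or between (OV1) and the lower boundary. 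A unit-length edge from the entrance area ends near $x \approx 0$ in one of these wedges, which by the choice of $d$ are exactly $\QuadrantIV{\EpsBall{(0,0.6)}}$ and $\QuadrantI{\EpsBall{(0,0)}}$.

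To make this rigorous I would do the following:
\begin{enumerate}
\item Bound the horizontal reach of the edge. Any point of the entrance area has $x$-coordinate in $[x_{s_i}-\varepsilon/8,\, x_{s_i}+\varepsilon']$ and $y$ within $\varepsilon'$ of $0.3$. A unit segment from such a point ending strictly between the wedges must cross (OV1), so the far endpoint has $x \le d + O(\varepsilon)$.
\item Rule out the near side. An endpoint with $x < -O(\varepsilon)$ would have to lie outside $P$ or pass through the apex region of (OV1). Since the outer boundary hugs the unit circle, such an endpoint is excluded.
\item Rule out the wrong quadrants, analogously to the $\QuadrantII{}/\QuadrantIII{}$ arguments in \Cref{lem:edge-gadget-tunnel}. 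Near $(0,0.6)$, quadrants $\QuadrantI{}$ and $\QuadrantII{}$ lie beyond the boundary, since $y > 0.6$ is outside $P$. Quadrant $\QuadrantIII{}$ is inside (OV1) or blocked by it. This leaves $\QuadrantIV{\EpsBall{(0,0.6)}}$, and symmetrically $\QuadrantI{\EpsBall{(0,0)}}$ at the bottom.
\end{enumerate}

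\textbf{Step 2: conclude.} The vertex from Step 1 serves as $v'$ in \Cref{obs:variable-start}. If instead one wants the vertex $v_{j+2}$ mentioned before the observation, a second application of the same reasoning, or \Cref{obs:not-outside-quadrant} at the attached tunnel, places it in the corresponding quadrant.

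\textbf{Main obstacle.} The hardest part is the quantitative step: showing that the wedge endpoints reachable by a unit segment from the whole entrance area (an $\varepsilon$-sized region, not just $s_i$) stay inside the $\varepsilon$-quadrants. I would first linearize the unit circle near $(0,0.3\pm0.3)$. Its slope there is bounded (about $0.3/0.95$), so a displacement of the start point by $O(\varepsilon)$ shifts the reachable endpoint on the line $y = 0.6$ (respectively $y = 0$) by only $O(\varepsilon)$. Choosing the area dimensions $\varepsilon/8$ and $\varepsilon'$ small relative to this constant then keeps the endpoint within radius $\varepsilon$.
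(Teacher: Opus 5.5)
Your route is the paper's: the paper only asserts that (OV1) forces the single unit edge leaving the entrance area above or below it, and you try to make that quantitative. The step that pins the endpoint to the correct quadrant is false, though. The top vertex of (OV1) is $\EpsBallBottom{(0,0.6)}=(0,0.6-\varepsilon')$, while the outer boundary runs horizontally at $y=0.6$ from $\EpsBallLeft{(0,0.6)}$ to $(0,0.6)$. So the tip of the upper wedge, a region of size roughly $\varepsilon'\times\varepsilon'$ with $x<0$, lies in $P$ and in $\QuadrantIII{\EpsBall{(0,0.6)}}$. Hence two of your claims fail: that the wedge ends are ``exactly'' $\QuadrantIV{\EpsBall{(0,0.6)}}$ and $\QuadrantI{\EpsBall{(0,0)}}$, and item~3, that $\QuadrantIII{\EpsBall{(0,0.6)}}$ is ``inside (OV1) or blocked by it''.

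Only the unit length can exclude this region, and it does so only if $d$ is large compared with the part of the entrance area lying left of $s_i$. Take $p=(x_{s_i}-\varepsilon/8,\,0.3)$, a point of the rectangle part of \VariableEntrance{x_i}. The unit segment to $(d-\varepsilon/8,\,0.6)$ passes above the apex of (OV1) and stays below the outer boundary, and it ends at negative $x$ whenever $d<\varepsilon/8$. Rotating it slightly clockwise about $p$ gives an interior endpoint with $x<0$ as well. Lowering $p$ as far as clearing the apex permits pushes the endpoint to $x\approx d-0.19\varepsilon$. The construction only guarantees $0<d<\varepsilon'$. For $i>1$ the entry vertex delivered by \Cref{lem:variable-gadget-end} may lie anywhere in this area, so you cannot restrict attention to $p=s_i$.

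Your ``main obstacle'' paragraph does not repair this. Keeping the endpoint within distance $\varepsilon$ of $(0,0.6)$ is not the relevant condition; the sign of its $x$-coordinate is. Also, the widths $\varepsilon/8$ and $\varepsilon'$ of the entrance area are fixed by the construction, not parameters you may shrink.

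What is missing has three parts:
\begin{itemize}
\item An explicit bound on the leftmost point at unit distance from the entrance area that is reachable through the upper wedge. Roughly, $x\ge d-a-0.31\beta$, where $a\le\varepsilon/8$ is the offset of the start point to the left of $s_i$ and $\beta\lesssim 0.2\varepsilon$ is its admissible drop below $y=0.3$.
\item An extra condition on the choice of $x_{s_i}$ that makes this bound positive, e.g.\ $d\ge\varepsilon/4$. This is still attainable with a rational of polynomial precision.
\item The mirrored computation for $\QuadrantI{\EpsBall{(0,0)}}$.
\end{itemize}

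Separately, the vertex that lands in these areas is the next one, $v_{j+1}$. The $v_{j+2}$ in the sentence preceding the observation is a slip: that vertex lies in the tunnel near $(0.8,0)$ or $(0.8,0.6)$, as the forward direction of the reduction (placing $v_2$ directly in the tunnel entrance) confirms. So your Step~2 remark about placing $v_{j+2}$ in the corresponding quadrant should be dropped.
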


We now turn our attention to the third part of the variable gadget, which should reset the previously made assignment for the variable $x_i \in \mathcal{X}$.
To facilitate its description, we now assume that the edge gadget $F$ \emph{ends} at the origin.
Ideally, we would like to use for this part a copied (and mirrored) version o the first part of the variable gadget.
However, this is not possible, as the following observation shows.
We again need to find a point $t_i= (x_{t_i}, y_{t_i})$ that can be reached by a configuration that has a vertex inside $\NegativeExit{\lnot x}(F)$ or $\PositiveExit{x}(F)$.
Furthermore, to later be able to (easily) show correctness of the reduction, $t_i$ should in particular be reachable from $(d, 0.6)$ and $(d, 0)$, for the $d$ selected to construct the first part of the variable gadget.
Since $t_i$ lies right of $F$, this not only enforces $y_{t_i} = 0.3$ but also $x_{t_i} = \sqrt{1 - 0.3^2} + d$.
However, independent of the choice of $d$, this cannot be a rational value as we could otherwise express $\sqrt{1 - 0.3^2}$ as the arithmetic mean of two rational values, i.e., $\sqrt{1 - 0.3^2}$ would be rational.
Thus, we have to use a slightly more involved construction; see \Cref{fig:variable-gadget-details}b.
We first connect the vertices $(0,0)$, \EpsBallRight{(0, 0.6)}, \EpsBallRight{(1, 0.6)}, \EpsBallTop{(1, 0.6)}, $\EpsBallBottom{(0.6, 0.3)}$, and \EpsBallBottom{{(t_i)}}, where $t_i = (-\sqrt{1 - 0.3^2} + 2 + d, 0.3)$, and do symmetrically on the upper half of the third part.
Furthermore, the third part features the obstacle~(OV2) consisting of the vertices \EpsBallTop{(0, 0.6)}, $\EpsBallTop{(1-2\varepsilon, 0)}$, $\EpsBallBottom{(x(t_i'), 0.3 - \varepsilon)}$, $t_i'$, $\EpsBallTop{(x(t_i'), 0.3 + \varepsilon)}$, $\EpsBallBottom{(1-2\varepsilon, 0.6)}$, \EpsBallBottom{(0, 0.6)}, where $t_i' = (x(t_i) - 1, y(t_i))$; see also \Cref{fig:variable-gadget-details}b.
The exit area of the gadget for variable $x_i \in \mathcal{X}$, denoted as \VariableExit{x_i}, is, apart from being with respect to $t_i$ and not $s_i$ analogous to \VariableEntrance{x_i}.

We define $V_3$ to be the third part of the variable gadget, set $\NegativeExit{\lnot x_i}(V_3) = \QuadrantI{\EpsBall{(0,0)}}$ and $\PositiveExit{x_i}(V_3) = \QuadrantIV{\EpsBall{(0,0.6)}}$, and summarize the properties of the third part of the variable gadget:
\begin{lemma}
	\label{lem:variable-gadget-end}
	Let \linkage be a unit-length linear linkage with a configuration $\configuration$ that passes through the third part of the gadget for the variable $x_i \in \mathcal{X}$.
	If there exists a vertex $v \in V$ s.t.\ $\configuration(v) \in \NegativeExit{\lnot x_i}(V_3) \cup \PositiveExit{x_i}(V_3)$, then we have $\configuration(v') \in \VariableExit{x_i}$ for a $v' \in V$.
	Furthermore, for any point $p \in \NegativeEnd{\lnot x_i}(V_3) \cup \PositiveEnd{x_i}(V_3)$ there exists a configuration $\configuration'(\linkage)$ and two vertices $v, v' \in V$ with $\configuration'(v) = p$ and $\configuration'(v') = t_i$.
\end{lemma}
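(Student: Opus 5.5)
Both parts of \Cref{lem:variable-gadget-end} will be handled in the same way as for the tunnel and the bend. For the first part, fix $v = v_j$ with $\configuration(v_j) \in \NegativeExit{\lnot x_i}(V_3) \cup \PositiveExit{x_i}(V_3)$ and follow the next two vertices.

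\textbf{First part, tracing $v_{j+1}$.} Assume $\configuration(v_j) \in \QuadrantI{\EpsBall{(0,0)}}$; the other case is the mirror image across $y = 0.3$. Obstacle~(OV2) fills the middle of $V_3$ between $x=0$ and $x = 1-2\varepsilon$. The outer boundary of $V_3$ runs at height $0$ (respectively $0.6$) up to $x = 1$ and then turns inward towards $t_i$. Hence the edge $v_jv_{j+1}$ must stay in the thin channel below (OV2). It cannot turn around, by \Cref{ass:not-turn-around}, and it cannot cross (OV2). Since it has unit length, $v_{j+1}$ must therefore lie in the small region near $\EpsBall{(1,0)}$, bounded by (OV2), the boundary vertex $\EpsBallBottom{(0.6,0.3)}$ and the corner at $(1,0)$. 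This is exactly the bend argument of \Cref{lem:edge-gadget-bend}: the domain makes a sharp turn around a corner, and a unit edge starting within $\varepsilon$ of $(0,0)$ has its far endpoint pinned within $O(\varepsilon)$ of $(1,0)$.

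\textbf{First part, tracing $v_{j+2}$.} From near $(1,0)$, the only free direction is back towards $t_i$ through the region to the right of the tip $t_i'$ of (OV2). The vertex $v_{j+2}$ then lies on the unit circle around $\configuration(v_{j+1})$, within the wedge between the tip $t_i'$ and the boundary points $\EpsBallBottom{(t_i)}$ and $\EpsBallTop{(t_i)}$. The coordinates are chosen so that $t_i = t_i' + (1,0)$ and $\Vert (1,0) - t_i \Vert_2 = 1$, using $x(t_i) - 1 = -\sqrt{1-0.3^2}+1+d$; the blue intersection points in \Cref{fig:variable-gadget-details}b record this. A short perturbation estimate then shows that $v_{j+2}$ lands in the rectangle-plus-quadrants region $\VariableExit{x_i}$.

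\textbf{Second part.} This is constructive. For $p \in \NegativeEnd{\lnot x_i}(V_3)$, place $v_{j+1}$ at the intersection of the unit circles around $p$ and around $t_i$, taking the intersection point near $(1,0)$ inside the channel. Then set $v_{j+2} = t_i$. It remains to check that both segments avoid (OV2) and the outer boundary. When $p = (d',0)$ lies on the boundary, this follows from the choice of $t_i$. For general $p$ in the $\varepsilon/2$-triangle, I would use continuity together with the rotate-along-the-circle trick from \Cref{lem:edge-gadget-tunnel}.

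\textbf{Main obstacle.} I expect the hardest step to be the forward direction, with its inexact, $d$-shifted coordinates. One has to show that $v_{j+2}$ cannot fall outside $\VariableExit{x_i}$, and in particular not below or above the narrow rectangle of width $\varepsilon/8$. I would first bound the admissible $v_{j+1}$ region by $O(\varepsilon)$. Then I would verify that the unit circles around its extreme points meet the line $y = 0.3$ inside $[x(t_i) - \varepsilon/8,\, x(t_i) + \varepsilon]$, choosing $\varepsilon$ small enough.
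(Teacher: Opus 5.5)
Your proposal has a genuine gap: a false distance claim makes you traverse the third part with one edge too few. It is not true that $\Vert (1,0) - t_i\Vert_2 = 1$. Since $t_i = (2-\sqrt{0.91}+d,\ 0.3) \approx (1.05+d,\ 0.3)$, this distance is only about $0.3$. What the construction actually guarantees is $\Vert (1+d,0) - t_i'\Vert_2 = \sqrt{0.91+0.09} = 1$ together with $t_i = t_i' + (1,0)$. So the intended path is $(d,0) \to (1+d,0) \to t_i' \to t_i$, with \emph{three} edges; this detour is what makes $x(t_i) = x(s_i) + 2$ rational.

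Consequently, in your forward direction $v_{j+2}$ cannot land in the exit area of $x_i$. It is at distance $1$ from a point near $(1,0)$, so it lies near the tip $t_i'$ of the notch of (OV2). Your planned check also fails: unit circles around the extreme positions of $v_{j+1}$ meet the line $y = 0.3$ near $x \approx 0.05$ and $x \approx 1.95$, not inside $[x(t_i)-\varepsilon/8,\ x(t_i)+\varepsilon]$. Your constructive part fails for the same reason. The points $p \approx (0,0)$ and $t_i$ are about $1.09$ apart, so their unit circles meet at roughly $(0.29, 0.96)$ and $(0.75, -0.66)$. Both points lie outside the gadget, so there is no intersection point ``near $(1,0)$ inside the channel''.

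The repair is to trace one more vertex, as the paper does. In the forward direction, $v_{j+1}$ lies in the first quadrant of the $\varepsilon$-disc at $(1,0)$, and $v_{j+2}$ is confined near $t_i'$. The last edge must then run almost horizontally through the thin corridor around $y = 0.3$, and you show that $v_{j+3}$ lies in the exit area.
\begin{itemize}
\item It cannot lie to the left: then $v_{j+2}$ would have to sit left of the vertical segment of (OV2) through $t_i'$, i.e., outside $P$. The paper phrases this via unit circles centred at the left corners of the exit area.
\item It cannot lie to the right of the $\varepsilon$-disc at $t_i$: check the extremal choice $v_{j+1} = (1+\varepsilon/2, 0)$, followed by the rightmost admissible $v_{j+2}$.
\end{itemize}
For the constructive direction, set $v_{j+1} = p + (1,0)$ and $v_{j+3} = t_i$, and take $v_{j+2}$ as an intersection point of the unit circles around $v_{j+1}$ and $t_i$; this point lies near $t_i'$. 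The circle around $t_i$ only touches the vertical segment of (OV2) through $t_i'$. Hence it suffices to verify that this point lies in $P$ for the extremal positions $(1,0)$, $(1+\varepsilon/2,0)$ and $(1,\varepsilon/2)$ of $v_{j+1}$. With this correction, your overall strategy of tracing with extremal placements and then building a circle-intersection configuration matches the paper's.
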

\begin{proof}
	Towards establishing the first part of the statement, let $\configuration$ be a configuration of \linkage that passes through the third part of the gadget for the variable $x_i \in \mathcal{X}$.
	Consider the case where $\configuration(v) \in \NegativeExit{\lnot x_i}(V_3)$ for some $v = v_j \in V$.
	From the construction, we immediately conclude $\configuration(v_{j + 1}) \in \QuadrantI{\EpsBall{(1,0)}}$.
	Towards a contradiction, assume $\configuration(v_{j + 3}) \notin \VariableExit{x_i}$.
	This can either be achieved by $\configuration(v_{j + 3})$ being either left of \VariableExit{x_i} or right of \EpsBall{{(t_i)}}.
	To see that the former case is not possible, we observe that no point left of $t_i$ and on a unit-circle that is located at the lower-left or upper-left corner of \VariableExit{x_i} is inside the polygonal domain.
	Thus, there is no placement for $\configuration(v_{j + 2})$ that would allow $\configuration(v_{j + 3})$ to be on the left boundary of \VariableExit{x_i}.
	Consequently, there can be no place for $\configuration(v_{j + 2})$ that would allow $\configuration(v_{j + 3})$ to be left of \VariableExit{x_i} either and thus is the former case not possible.
	To see that the latter case is also impossible, it suffices to consider the placement of $v_{j + 1}$ at the point $(1 + \varepsilon', 0)$.
	Any other placement of $v_{j + 1}$ is further to the left and thus diminishes the ``chance'' of $\configuration(v_{j + 3})$ being right of \EpsBall{{(t_i)}}.
	The same reasoning as above allows us to consider afterwards the rightmost possible placement for $v_{j + 2}$.
	However, even for these extreme placements, we have $\configuration(v_{j + 3}) \in \QuadrantI{\EpsBall{{(t_i)}}} \cup  \QuadrantIV{\EpsBall{{(t_i)}}}$, i.e., $\configuration(v_{j + 3}) \in\VariableExit{x_i}$.
	Similar arguments can be made for $\configuration(v) \in \PositiveExit{\lnot x_i}(V_3)$, which concludes the proof for the first part of the statement.
	
	For the second part of the statement, we let $p = (x,y) \in \NegativeEnd{\lnot x_i}(V_3)$ be an arbitrary point.
	It is not hard to see that we can construct a configuration $\configuration'(\linkage)$ with $\configuration'(v_j) = (x, y)$ and $\configuration'(v_{j + 1}) = (x + 1, y)$ that neither leaves the polygonal domain $P$ nor intersects obstacle~(OV2).
	Having fixed $\configuration'(v_{j + 1})$ and $\configuration(v_{j + 3})$, there are at most two places where $\configuration'(v_{j + 2})$ could be placed.
	These are the intersection points between the unit circles $C_1$ and $C_2$ centered at $v_{j + 1}$ and $v_{j + 3}$, respectively.
	We now need to show that at least one of them is inside $P$.
	Once we have established this, we can place $v_{j + 2}$ there to finalize $\configuration'$ as there is no obstacle or boundary of $P$ that could hinder us.
	Towards establishing the sought property, observe that, by construction, $C_2$ does only touch but not cross the right vertical segment of obstacle~(OV2) that runs from $\EpsBallBottom{(x(t_i'), 0.3 - \varepsilon)}$ to $\EpsBallTop{(x(t_i'), 0.3 + \varepsilon)}$.
	Recall that $\configuration'(v_{j + 1}) = (x + 1, y)$ with $(x, y) \in \NegativeEnd{\lnot x_i}(V_3)$.
	We now consider the three extremal points of $(x + 1, y)$, namely
	\begin{align*}
		E_1 = (1, 0), \quad\quad E_2 = (1 + eps', 0), \quad\text{and}\quad E_3 = (1, eps').
	\end{align*}
	For each of them, we can look at the intersection between a unit circle $C_1$ centered at it and $C_2$ and convince ourselves that it lies inside $P$, i.e., right of the segment from $\EpsBallBottom{(x(t_i'), 0.3 - \varepsilon)}$ to $\EpsBallTop{(x(t_i'), 0.3 + \varepsilon)}$ and above $(x(t_i'), y(t_i) - \varepsilon)$; see the crosses in \Cref{fig:variable-gadget-details}b.
	As this holds for the extremal points, it also holds for any placement of $v_{j + 1}$ that we can create.
	This allows us to complete the construction of $\configuration'$ which has $\configuration'(v_{j + 3})$.
	Similar arguments allow us to conclude that this also holds for $p = (x, y) \in \PositiveEnd{\lnot x_i}(V_3)$, thus completing our proof.
\end{proof}
Finally, note that we have to ``close'' the polygonal domain $P$ at the entrance of the gadget for the variable $x_1$ and the exit of the gadget for the variable $x_N$.
We do that by connecting individual segments of the polygonal domain via the the vertices \EpsBallTop{{(s_1)}}, \EpsBallLeft{{(s_1)}}, and \EpsBallRight{{(s_1)}}, and \EpsBallTop{{(t_N)}}, \EpsBallRight{{(t_N)}}, and \EpsBallBottom{{(t_N)}}, respectively.

\end{document}